\theoremstyle{plain}
\newtheorem{Result}{Result}[section]
\newtheorem{Fact}{Fact}[section]
\newtheorem{Corollary}{Corollary}[section]
\theoremstyle{definition}
\newtheorem{Definition}{Definition}[section]
\theoremstyle{remark}
\def\maxwidth{\ifdim\Gin@nat@width>\linewidth\linewidth\else\Gin@nat@width\fi}
\def\maxheight{\ifdim\Gin@nat@height>\textheight\textheight\else\Gin@nat@height\fi}
\providecommand{\tightlist}{%
  \setlength{\itemsep}{0pt}\setlength{\parskip}{0pt}}
\let\oldparagraph\paragraph
\renewcommand{\paragraph}[1]{\oldparagraph{#1}\mbox{}}
\let\oldsubparagraph\subparagraph
\renewcommand{\subparagraph}[1]{\oldsubparagraph{#1}\mbox{}}
\title{Estimating adult death rates from sibling histories:\\
a network approach\footnote{For helpful feedback on earlier versions of
  the manuscript, the authors would like to thank the participants in
  the 2018 Formal Demography Workshop at UC Berkeley; participants in
  the 2018 PAA Session ``Social capital and older adults in developing
  countries''; and Stephane Helleringer.}}
\author{Dennis M. Feehan\footnote{UC Berkeley, feehan@berkeley.edu}~ and Gabriel
M. Borges\footnote{IBGE, gmendesb@hotmail.com}}
\date{June 27, 2019}
\begin{document}
\maketitle
\begin{abstract}
Hundreds of millions of people live in countries that do not have
complete death registration systems, meaning that most deaths are not
recorded and critical quantities like life expectancy cannot be directly
measured. The sibling survival method is a leading approach to
estimating adult mortality in the absence of death registration. The
idea is to ask a survey respondent to enumerate her siblings and to
report about their survival status. In many countries and time periods,
sibling survival data are the only nationally-representative source of
information about adult mortality. Although a huge amount of sibling
survival data has been collected, important methodological questions
about the method remain unresolved. To help make progress on this issue,
we propose re-framing the sibling survival method as a network sampling
problem. This approach enables us to formally derive statistical
estimators for sibling survival data. Our derivation clarifies the
precise conditions that sibling history estimates rely upon; it leads to
internal consistency checks that can help assess data and reporting
quality; and it reveals important quantities that could potentially be
measured to relax assumptions in the future. We introduce the \texttt{R}
package \texttt{siblingsurvival}, which implements the methods we
describe.
\end{abstract}

~

\newpage

\hypertarget{toc}{}

\newpage

\hypertarget{sec:intro}{%
\section{Introduction}\label{sec:intro}}

Death rates at adult ages are a core component of population health and
a central topic of study for demography. Unfortunately, most of the
world's poorest countries are victims of the \emph{scandal of
invisibility}: they do not have complete death registration systems,
meaning that most people die without ever having their existence
officially recorded (Setel et al. 2007; AbouZahr et al. 2015). This lack
of complete death registration means that critical quantities like life
expectancy cannot be directly measured. Improving death registration
systems is the long-term solution to the scandal of invisibility, but
progress has been very slow (Mikkelsen et al. 2015). Until complete
death registration systems are available everywhere, sample-based
approaches to adult mortality estimation will continue to play a
critical role in understanding population health and wellbeing.

The leading approach to collecting information about adult mortality in
the absence of death registration is the sibling survival method
(Rutenberg and Sullivan 1991; Brass 1975). The idea is to ask survey
respondents to report the number of siblings they have, and to then ask
for each sibling's gender, date of birth and date of death (where
appropriate). This data collection strategy produces \emph{sibling
histories} which contain information about the survival status of all of
the members of the respondent's sibship.

Since high-quality household surveys are routinely conducted in most
countries---including countries that lack death registration
systems---the sibling survival method offers the opportunity to try to
estimate adult death rates in many places that have no other
nationally-representative adult mortality data. Over the past two
decades, a huge amount of sibling history data has been collected; for
example, as a part of the DHS program alone, sibling histories have been
collected in more than 150 surveys from dozens of countries around the
world (Corsi et al. 2012; Fabic, Choi, and Bird 2012).

However, understanding how to analyze sibling histories has proven to be
very challenging. Researchers have long been aware that the method
suffers from many possible sources of bias (Gakidou and King 2006;
Graham, Brass, and Snow 1989; Masquelier 2013; Reniers, Masquelier, and
Gerland 2011; Trussell and Rodriguez 1990). Previous studies have
concluded that sibling history estimates can be problematic if (i) there
are sibships with no surviving members who could potentially be sampled
and interviewed in the survey; (ii) more generally, there is a
relationship between sibship size and mortality (\emph{e.g.} larger
sibships face higher death rates); and (iii) respondents' reports about
their siblings are inaccurate (\emph{e.g.} respondents omit siblings or
misreport a sibling's survival status). There has also been confusion
about whether the survival status of the respondent herself should be
included in the calculations, since respondents are always alive
(Masquelier 2013; Reniers, Masquelier, and Gerland 2011).

Researchers have worked on addressing these concerns about the sibling
survival method in three main ways: they have collected empirical
information about possible sources of bias in sibling reports (e.g.,
Helleringer, Pison, Kanté, et al. 2014); they have used microsimulation
to illustrate how large certain sources of bias can be under different
scenarios (Masquelier 2013); and they have used regression models to
pool information from different countries and time periods (Gakidou and
King 2006; Timaeus and Jasseh 2004; Obermeyer et al. 2010). Together,
these studies have produced many important insights about the sibling
survival method. However, these insights have not yet brought about a
consensus on how sibling histories should be analyzed. Currently, there
is partial evidence about many individual sources of possible bias, but
there is no way to integrate all of this evidence together. Thus, even
if we knew the exact size and direction of all the different sources of
possible error, we still would not understand how the errors would
combine to affect estimated death rates. More generally, little has been
proven about the precise conditions under which sibling survival
estimates can be expected to have attractive statistical properties such
as consistency or unbiasedness.

In this study, our goal is to help resolve some of the methodological
uncertainty about sibling survival. Our analysis is based on the insight
that the sibling relation induces a particular type of social network
among the members of a population. In this network, two people are
connected to one another if they are siblings; thus, estimating death
rates from sibling histories can be understood as a problem in network
sampling. Starting from the principles of network reporting, we describe
how to mathematically derive a sibling survival estimator. Deriving an
estimator from first principles in this way enables us to (i) clarify
the precise assumptions that the estimator requires in order to be
consistent, unbiased, and efficient; (ii) describe how violations of any
and all assumptions can combine to affect estimated death rates; (iii)
identify quantities that could potentially be measured in the future to
relax assumptions; and (iv) develop internal consistency checks that can
be used to assess data and reporting quality in a given sample.

\hypertarget{sec:deriving}{%
\section{Setup}\label{sec:deriving}}

Figure~\ref{fig:sib-nr-illustration} illustrates how we understand
sibling histories as a network reporting problem. The left-hand panel
shows a small population whose members are connected if they are
siblings (\emph{i.e.}, two nodes are connected if they have the same
mother\footnote{Respondents are typically asked to consider `siblings'
  to be all children born to their mother.}). Clear nodes are alive and
grey nodes are dead at the time of the survey. Since the sibling
relation is transitive, the network is entirely composed of fully
connected components, or cliques; each of these cliques is one sibship.
The middle panel shows one specific sibship, and the right-hand panel
illustrates the bipartite reporting network that is generated when all
of the surviving members of that sibship are asked to report about their
siblings\footnote{The dead person (in grey) cannot be interviewed, and
  so is not shown on the left-hand side of the bipartite reporting
  network.}. In the bipartite reporting network, each directed edge
represents one sibling reporting about another---so, for example, the
edge \(13 \rightarrow 12\) indicates that node 13 reports about node 12.
Feehan and Salganik (2016a) describes bipartite reporting networks in
greater detail.

\begin{figure}
\centering

\subfloat[]{\includegraphics[width=\textwidth,height=0.25\textheight]{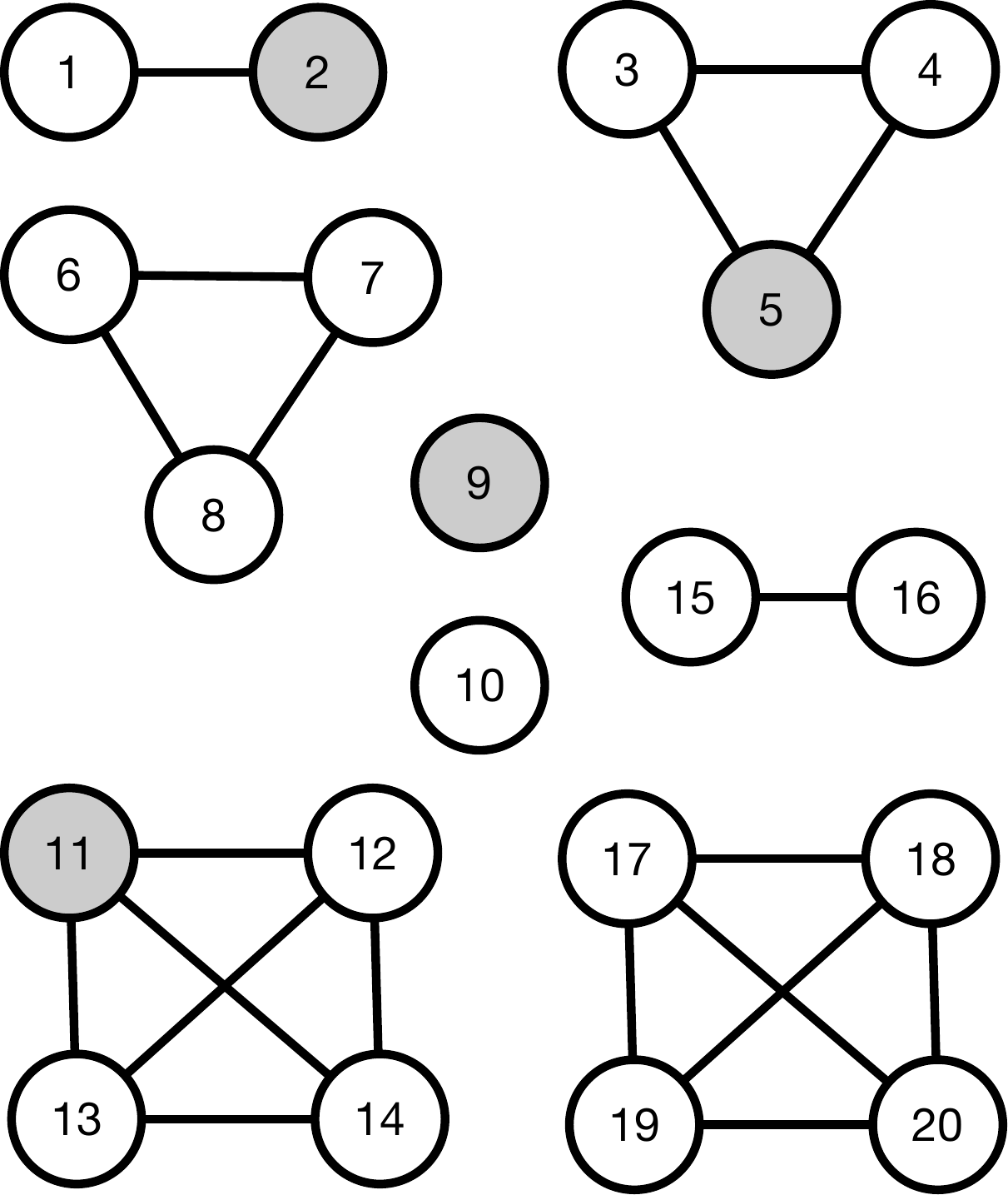}\label{fig:entire-net}}\hfill
\subfloat[]{\includegraphics[width=\textwidth,height=0.25\textheight]{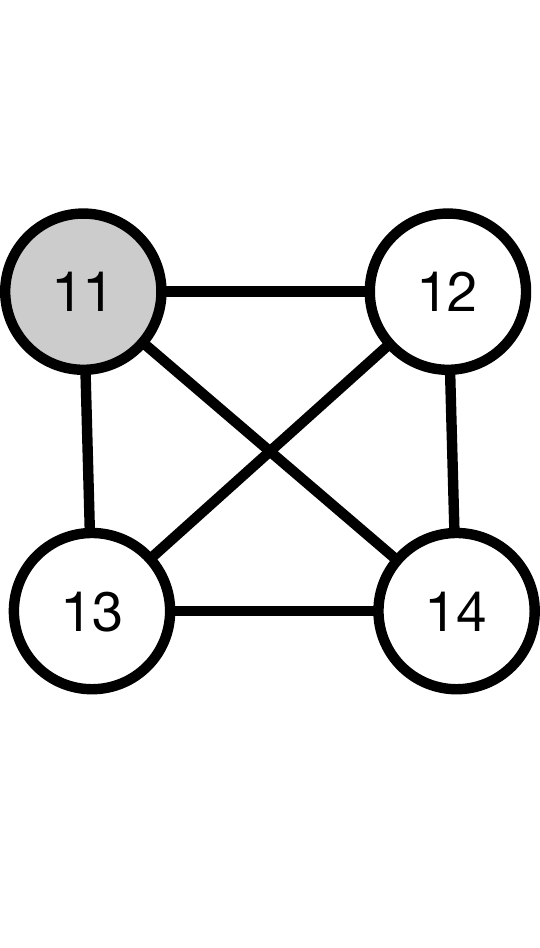}\label{fig:sibship-net}}\hfill
\subfloat[]{\includegraphics[width=\textwidth,height=0.25\textheight]{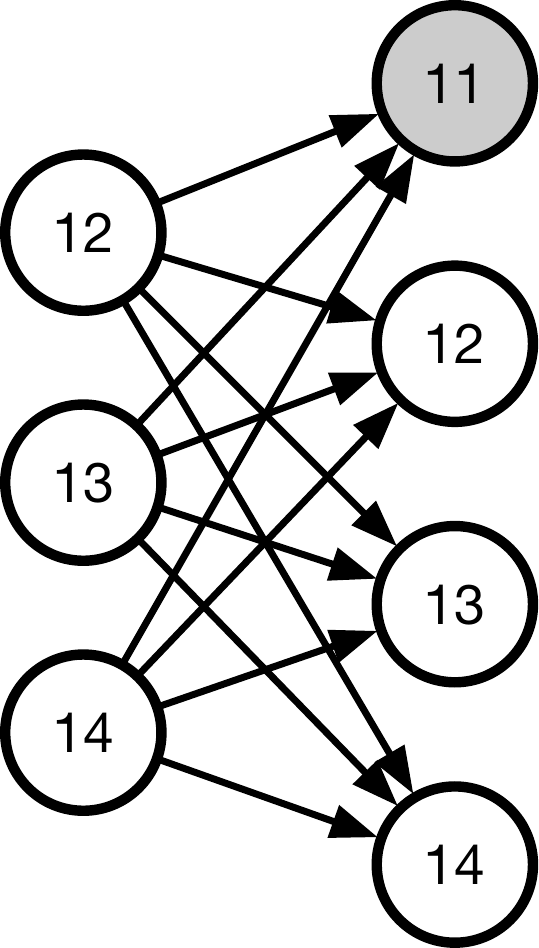}\label{fig:sibship-reporting}}

\caption{Framing sibling survival as a network reporting estimator. (a)
A population connected through a sibship network. Because the sibling
relation is transitive, the network is composed entirely of cliques. (b)
The network for a single sibship. (c) Bipartite reporting graph for a
single sibship.}

\label{fig:sib-nr-illustration}

\end{figure}

Our quantity of interest is \(M_\alpha\), the death rate for a specific
group \(\alpha\) (for example, \(\alpha\) might be all women aged 30-34
in 2018). \(M_\alpha\) is defined as \begin{equation}
M_\alpha = \frac{D_\alpha}{N_\alpha},
\label{eq:asdr-defn}\end{equation}

where \(D_\alpha\) is the number of deaths in group \(\alpha\) and
\(N_\alpha\) is the person-years of exposure among members of group
\(\alpha\). We can develop an estimator for \(M_\alpha\) by separately
estimating the numerator and the denominator of
Equation~\ref{eq:asdr-defn}; thus, the challenge is to derive sibling
history-based estimators for \(D_\alpha\) and \(N_\alpha\).

Figure~\ref{fig:sibship-reporting} illustrates the fact that each
sibling can potentially be reported as many times as she has living
sibship members who are eligible to respond to the survey (Sirken 1970;
Gakidou and King 2006; Masquelier 2013). Inferences from sibling reports
must somehow account for this fact. Our approach is to distinguish
between two groups of people: the first group is people who have no
siblings who are eligible to respond to the survey. These people will
never appear in the sibling history data -- they are \emph{invisible} to
the sibling histories. The second group is \emph{visible} people who do
have siblings eligible to respond to the survey.

Formally, let \(U\) be the population being studied, and let
\(F \subset U\) be the \emph{frame population}, which is the set of all
people who are eligible to respond to the survey. We define person
\(j \in U\)'s \emph{visibility}, \(v(j,F)\), to be the number of living
siblings who would report person \(j\) in a census of \(F\)\footnote{The
  idea behind the notation \(v(\cdot, \cdot)\) is that the first
  argument is whoever is being reported about, and the second argument
  is the set of people who make reports; so, \(v(j, F)\) is the number
  of times the person \(j\) is reported about by members of the frame
  population \(F\). When we add a bar, we mean the average taken with
  respect to the first argument - so \(\bar{v}(A, F)\) is
  \(v(A,F)/|A|\), the average number of times a member of \(A\) is
  reported about by \(F\).}. Everyone in the population is either
visible (\(v(j,F) > 0\)) or invisible (\(v(j,F)=0\)). Thus, we can write
the number of deaths in group \(\alpha\) as: \begin{equation}
\begin{aligned}
D_\alpha &=
\underbrace{
  \sum_{\substack{ j \in D_\alpha \\ v(j,F) > 0}} 1}_{\text{
  visible deaths
}} +
\underbrace{
  \sum_{\substack{j \in D_\alpha \\ v(j,F)=0}} 1}_{\text{
  invisible deaths
}}
= D_\alpha^V + D_\alpha^I,
\end{aligned}
\label{eq:deaths-vis-invis}\end{equation}

where \(D_\alpha^V\) is the number of \emph{visible deaths} that could
be learned about using sibling reports and \(D_\alpha^I\) is the number
of \emph{invisible deaths} that cannot be learned about using sibling
reports. We can define analogous quantities for the denominator
\(N_\alpha = N_\alpha^V + N_\alpha^I\), where the \(N_\alpha^V\) is the
\emph{visible exposure} and \(N_\alpha^I\) is the \emph{invisible
exposure}. Finally, we define
\(M_\alpha^I = \frac{D_\alpha^I}{N_\alpha^I}\) to be the \emph{invisible
death rate}, \(M_\alpha^V = \frac{D_\alpha^V}{N_\alpha^V}\) to be the
\emph{visible death rate}, and
\(M_\alpha = \frac{D_\alpha^I + D_\alpha^V}{N_\alpha^I + N_\alpha^V}\)
to be the \emph{total death rate}.

In Section~\ref{sec:adjusting-visibility}, we show how sibling history
data can be used to develop death rate estimators for the visible
population. We address possible differences between the visible and the
invisible populations as part of a more general sensitivity framework,
introduced in Section~\ref{sec:sensitivity}. Our sensitivity framework
consists of mathematical expressions that describe how sensitive death
rate estimates are to all of the conditions that the estimators rely
upon, including differences between the visible and invisible
populations; reporting errors; and structural variation in sibship
networks. Section~\ref{sec:example} contains an empirical illustration
of our technical results using the 2000 Malawi Demographic and Health
Survey. As part of our empirical demonstration, we discuss variance
estimation, and we introduce empirical checks that researchers can
perform to assess some of the conditions that sibling estimators rely
upon. Section~\ref{sec:recommendations} compares the estimators we
introduce, and discusses the implication of our results for practice.
Finally, Section~\ref{sec:conclusion} concludes and outlines directions
for future work.

\hypertarget{sec:adjusting-visibility}{%
\section{Adjusting for visibility in sibling
reports}\label{sec:adjusting-visibility}}

The visible death rate can be estimated from sibling histories using an
expression of the form \begin{equation}
\widehat{M}^V_\alpha = \frac{\widehat{D}^V_\alpha}{\widehat{N}^V_\alpha},
\label{eq:visdr-form}\end{equation}

where \(\widehat{D}^V_\alpha\) is an estimator for the number of visible
deaths in group \(\alpha\) and \(\widehat{N}^V_\alpha\) is an estimator
for the amount of visible exposure in group \(\alpha\). In order to
estimate these two quantities, we face the challenge that even people
who are visible to the sibling histories may still differ in the extent
to which they are visible; for example, visible people from larger
sibships may tend to have different death rates than visible people from
smaller sibships. We address this challenge by introducing statistical
estimators that adjust for how visible reported siblings are.

We consider two different approaches to adjusting for differential
visibility: \emph{aggregate visibility} estimation and \emph{individual
visibility} estimation. These two approaches lead to two different
estimators for the visible death rate. In both cases, we start by
describing how to derive population-level relationships, and then we use
these population relationships as the basis for sample-based estimators.

\hypertarget{the-aggregate-visibility-approach}{%
\subsection*{The aggregate visibility
approach}\label{the-aggregate-visibility-approach}}
\addcontentsline{toc}{subsection}{The aggregate visibility approach}

The \emph{aggregate visibility} approach is based on the idea that
reports about siblings can first be aggregated, and then the aggregated
reports can be adjusted to account for visibility (Bernard et al. 1989;
Rutstein and Guillermo Rojas 2006; Feehan and Salganik 2016a). To
illustrate this approach, we first focus on reports about visible deaths
among siblings, \(D^V_\alpha\). Throughout the main paper, we assume
there are no \emph{false positive} reports -- \emph{i.e.}, we assume
that respondents' reports may omit siblings, but that they never
mistakenly include someone who is not truly a sibling. Of course, this
could in fact happen -- but this assumption makes the exposition much
cleaner, and the results derived in the Appendixes consider reporting
with false positives.

Let \(y(F,D^V_\alpha)\) be the total number of deaths that would be
reported among respondents' siblings in a census of the frame
population\footnote{The idea behind the notation \(y(\cdot, \cdot)\) is
  that the first argument is the set of people reporting, and the second
  argument is the set of people who are being reported about; so,
  \(y(F, D^V_\alpha)\) is the total number of deaths in \(D^V_\alpha\)
  reported by people in the frame population \(F\).}. Appendix
\ref{sec:agg-vis} shows that if there are no false positive reports,
then the total number of reports about sibling deaths, divided by the
average visibility of visible deaths, will be equal to the number of
visible deaths\footnote{To avoid over-complicating notation, we use
  \(D^V_\alpha\) to mean both the number of visible deaths in group
  \(\alpha\), and the set of visible deaths in groups \(\alpha\).}:

\begin{equation}
D^V_{\alpha} = \frac{y(F, D^V_\alpha)}{\bar{v}(D^V_\alpha, F)}.
\label{eq:agg-step1}\end{equation}

The idea is that in a census, the average death will be reported
\(\bar{v}(D^V_\alpha, F)\) times; thus, the total number of reports
about deaths, \(y(F, D^V_\alpha)\), can be divided by
\(\bar{v}(D^V_\alpha, F)\) to recover the total number of visible
deaths. Appendix \ref{sec:agg-vis} shows that a similar analysis can be
applied to the denominator of the visible death rate, yielding \[
N^V_{\alpha} = \frac{y(F, N^V_\alpha)}{\bar{v}(N^V_\alpha, F)}.
\]

Thus, in a census,

\begin{equation}
M^V_\alpha = \frac{D^V_{\alpha}}{N^V_{\alpha}} = \frac{y(F, D^V_\alpha)}{y(F, N^V_\alpha)} 
\cdot \frac{\bar{v}(N_\alpha^V, F)}{\bar{v}(D_\alpha^V,F)}.
\label{eq:agg-step3}\end{equation}

Under the assumption that the visibility of deaths is the same as the
visibility of exposure,
\(\bar{v}(D^V_\alpha,F) = \bar{v}(N^V_\alpha,F)\), and
Equation~\ref{eq:agg-step3} simplifies to the \emph{aggregate visibility
estimand}

\begin{equation}
M^V_{\alpha,\text{agg}} = \frac{y(F, D^V_\alpha)}{y(F, N^V_\alpha)}.
\label{eq:agg-step4}\end{equation}

Finally, the population-level relationship in
Equation~\ref{eq:agg-step4} motivates the sample-based estimator
(Appendix \ref{sec:agg-vis-deathrate}):

\begin{equation}
\begin{aligned}
\widehat{M}^{V}_{\alpha,\text{agg}}
&= 
\frac{\sum_{i \in s} w_i~y(i, D^V_\alpha)}{\sum_{i \in s} w_i~y(i, N^V_\alpha)}.
\end{aligned}
\label{eq:mhat-agg}\end{equation}

Since this approach is based on adding together reports about all of the
siblings, and then adjusting for the visibility of these aggregate
reports, we call \(\widehat{M}^V_{\alpha,\text{agg}}\) an
\emph{aggregate visibility} estimator (Feehan 2015; Feehan, Mahy, and
Salganik 2017; Bernard et al. 1989). Appendix
\ref{sec:agg-vis-deathrate} (Result \ref{res:mvis-agg-estimator})
formally derives the estimator; the derivation reveals that this
approach can be expected to produce essentially unbiased estimates as
long as reports about siblings are accurate, and as long as there is no
relationship between sibship visibility and mortality (\emph{i.e.}, as
long as \(\bar{v}(D^V_\alpha,F) = \bar{v}(N^V_\alpha,F)\)).

\hypertarget{the-individual-visibility-approach}{%
\subsection*{The individual visibility
approach}\label{the-individual-visibility-approach}}
\addcontentsline{toc}{subsection}{The individual visibility approach}

The \emph{individual visibility} approach is based on the idea that
reports about siblings can first be adjusted for visibility and then the
adjusted reports can be aggregated (Sirken 1970; Gakidou and King 2006;
Lavallee 2007; Feehan 2015). To illustrate this approach, consider
reports about a specific deceased sibling \(j \in D_\alpha\) that are
made in a census of the frame population. Let \(y(F,j)\) be the number
of times that people in the frame population \(F\) report the deceased
sibling \(j\). This quantity \(y(F,j)\) will be equal to the visibility
of \(j\) to \(F\), \(v(j,F)\), as long as there are no false positive
reports. Thus, for every visible sibling \(j\), we have \begin{equation}
\frac{y(F,j)}{v(j,F)} 
= 1.
\label{eq:ind-step1}\end{equation}

Summing over all visible deaths, we obtain

\begin{equation}
D_\alpha^V 
= \sum_{j \in D_\alpha^V} 1
= \sum_{j \in D_\alpha^V} \frac{y(F,j)}{v(j,F)}
= \sum_{i \in F} \sum_{j \in D_\alpha^V} \frac{y(i, j)}{v(j,F)},
\label{eq:ind-step2}\end{equation}

where the last step follows because \(y(F,j) = \sum_{i \in F} y(i,j)\).
Equation~\ref{eq:ind-step2} expresses the number of visible deaths in
terms of each survey respondent's reported connections to each death
(\(y(i,j)\)) and the visibility of each death (\(v(j,F)\)). In Appendix
\ref{sec:ind-vis}, we show that when reporting is accurate, the
visibility of any dead sibling \(j \in D_\alpha^V\), can be written as
\(v(j,F) = y(i,F) + 1\) for any survey respondent \(i\) who is in the
same sibship as \(j\). Using this relationship,
Equation~\ref{eq:ind-step2} can be re-written as

\begin{equation}
D_\alpha^V
= \sum_{i \in F} \sum_{j \in D_\alpha^V} \frac{y(i, j)}{v(j,F)}
= \sum_{i \in F} \sum_{j \in D_\alpha^V} \frac{y(i,j)}{y(i,F) + 1}
= \sum_{i \in F} \frac{y(i, D_\alpha)}{y(i,F) + 1},
\label{eq:ind-D-estimand}\end{equation}

where \(y(i,F)\) is \(i\)'s reported number of siblings on the sampling
frame. Equation~\ref{eq:ind-D-estimand} relates the population-level
number of visible deaths \(D^V_\alpha\) to survey respondents' reports
about deaths in their sibships, \(y(i,D_\alpha)\), and survey
respondents' reports about the number of frame population members in
their sibships, \(y(i,F)\).

A parallel argument, found in Appendix \ref{sec:ind-vis}, reveals that
when reporting is accurate, \(N^V_{\alpha}\) can be written as

\begin{equation}
\begin{aligned}
N^V_{\alpha} &= 
\sum_{i \in F} \left[
\underbrace{
\frac{y(i, N_\alpha \cap F)}{y(i,F)} 
}_{\text{siblings in $F$}}
+ 
\underbrace{
\frac{y(i, N_\alpha - F)}{y(i,F) + 1}
}_{\text{siblings not in $F$}}
\right],
\end{aligned}
\label{eq:ind-N-estimand}\end{equation}

where \(y(i, N_\alpha \cap F)\) is \(i\)'s reported number of siblings
on the sampling frame who contributed exposure; and
\(y(i, N_\alpha - F)\) is \(i\)'s reported number of siblings not on the
sampling frame who contributed exposure. Combining
Equation~\ref{eq:ind-D-estimand} and Equation~\ref{eq:ind-N-estimand},
we have the population-level \emph{individual visibility estimand}:

\begin{equation}
\begin{aligned}
M^{V}_{\alpha,\text{ind}}
&= 
\frac{
\sum_{i \in F} \left[\frac{y(i, D_\alpha)}{y(i,F) + 1}\right]
}{
\sum_{i \in F} \left[
\frac{y(i, N_\alpha \cap F)}{y(i,F)} 
+ 
\frac{y(i, N_\alpha -F)}{y(i,F) + 1}
\right]
}.
\end{aligned}
\label{eq:ind-M-estimand}\end{equation}

Finally, the population-level relationship in
Equation~\ref{eq:ind-M-estimand} motivates a sample-based estimator for
\(M^V_\alpha\) (Appendix \ref{sec:ind-vis-deathrate}):

\begin{equation}
\begin{aligned}
\widehat{M}^{V}_{\alpha,\text{ind}}
&= \frac{
\sum_{i \in s} w_i~\left[\frac{y(i, D_\alpha)}{y(i,F) + 1}\right]
}{
\sum_{i \in s} w_i~\left[
\frac{y(i, N_\alpha \cap F)}{y(i,F)} 
+ 
\frac{y(i, N_\alpha -F)}{y(i,F) + 1}
\right]},
\end{aligned}
\label{eq:mhat-ind}\end{equation}

where \(i\) indexes survey respondents in the probability sample \(s\)
and \(w_i\) is \(i\)'s sampling weight. Since this approach is based on
adjusting for the visibility of each individual reported sibling, we
call it \emph{individual visibility} estimation. Appendix
\ref{sec:ind-vis-deathrate} formally derives the estimator in
Equation~\ref{eq:mhat-ind} (Result \ref{res:mvis-ind-estimator}),
including the precise conditions required for it to provide consistent
and essentially unbiased estimates of the visible death rate.

\hypertarget{relationship-to-previous-work}{%
\subsubsection*{Relationship to previous
work}\label{relationship-to-previous-work}}
\addcontentsline{toc}{subsubsection}{Relationship to previous work}

To the best of our knowledge, our study is the first to derive the
aggregate visibility estimator from first principles. However, the
estimator itself is not new: the aggregate visibility estimator is
probably the most common approach to estimating death rates from sibling
history data. For example, Equation~\ref{eq:mhat-agg} is the estimator
used to produce age-specific adult death rate estimates in all
Demographic and Health Survey reports (Rutstein and Guillermo Rojas
2006). The estimator appears to have been first proposed in Rutenberg
and Sullivan (1991), and it has since been the subject of several
methodological analyses, including Masquelier (2013), Gakidou and King
(2006), Hill et al. (2006), Timaeus and Jasseh (2004), Stanton,
Abderrahim, and Hill (2000), and Garenne et al. (1997). By focusing on
the networked structure of sibling relations, our derivation reveals
that the aggregate visibility estimator is related to other network
estimation approaches, including the network scale-up method (Bernard et
al. 1989); and the network survival estimator (Feehan, Mahy, and
Salganik 2017).

The individual visibility estimator has its origins in multiplicity
sampling (Sirken 1970; Lavallee 2007; Feehan 2015). In the context of
sibling survival, an estimator similar to the one derived here was
introduced by Gakidou and King (2006) and then further discussed by
Obermeyer et al. (2010) and Masquelier (2013). The actual individual
estimator in Equation~\ref{eq:mhat-ind} is somewhat different from the
one proposed by Gakidou and King (2006), but both are motivated by the
idea that observed information can be used to adjust for visibility at
the level of individual reports.

\hypertarget{sec:sensitivity}{%
\section{Framework for sensitivity analysis}\label{sec:sensitivity}}

Both the individual and aggregate visibility estimators rely on several
conditions to guarantee that they will produce consistent and
essentially unbiased estimates of the death rate. These conditions make
precise longstanding concerns researchers have had about sibling
survival estimates. For example, researchers have often worried that
inaccurate reports about siblings may lead to biased death rate
estimates; our results reveal exactly how reports about siblings must be
accurate in order to produce consistent and essentially unbiased
estimates of death rates. They also reveal precise quantities which
could potentially be measured to adjust sibling reports to account for
reporting errors.

Appendices \ref{sec:agg-vis-deathrate} and \ref{sec:ind-vis-deathrate}
contain detailed derivations of the sensitivity frameworks for both the
individual and aggregate visibility estimators; here, we present and
discuss the results of that analysis. The simulation study in Appendix
\ref{sec:simulation} empirically illustrates the sensitivity frameworks
and confirms their correctness.

\hypertarget{sensitivity-of-the-aggregate-visibility-estimator}{%
\subsection{Sensitivity of the aggregate visibility
estimator}\label{sensitivity-of-the-aggregate-visibility-estimator}}

The relationship between the population death rate in group \(\alpha\)
and the aggregate visibility estimand can be written as:

\begin{equation}
M_\alpha 
= 
\underbrace{
M^V_{\alpha,\text{agg}}
}_{\substack{\text{aggregate} \\ \text{visibility} \\ \text{estimand}}}
\times
\underbrace{
\frac{\bar{d}^V(N_\alpha,F)}{\bar{d}^V(D_\alpha,F)} 
}_{\substack{\text{visibility} \\ \text{ratio}}}
\times
\underbrace{
\frac{\gamma(F, N_\alpha)}{\gamma(F,D_\alpha)} 
}_{\substack{\text{reporting} \\ \text{accuracy}}}
\times
\underbrace{
\left[1 + p^I_{N_\alpha} (K-1)\right].
}_{\substack{\text{difference between} \\ \text{invisible and} \\ \text{visible populations}}}
\label{eq:main-agg-mult-ubersens}\end{equation}

Equation~\ref{eq:main-agg-mult-ubersens} shows that the visible death
rate can be decomposed into the product of the aggregate visibility
estimand and several adjustment factors. When all of these adjustment
factors are equal to 1, the aggregate visibility estimand is equal to
the total death rate\footnote{More generally, if these adjustment
  factors multiply out to be 1, the aggregate visibility estimand will
  be the total death rate. This means that the conditions that the
  estimator relies upon are sufficient, but not necessary.}.

The first group of adjustment factors--called the \emph{visibility
ratio}-- describes how a relationship between visibility and mortality
would affect estimated death rates. It is the ratio of the average
visibility of all siblings who contribute to exposure
(\(\bar{d}^V(N_\alpha, F)\)) and the average visibility of siblings who
die (\(\bar{d}^V(D_\alpha, F)\))\footnote{In the visibility ratio,
  \(\bar{d}(A,B)\) refers to the true average number of sibship
  connections between the average member of group \(A\) and group \(B\);
  \(\bar{d}(\cdot,\cdot)\) can differ from \(\bar{v}(\cdot,\cdot)\)
  because \(\bar{v}(\cdot,\cdot)\) could be affected by reporting
  errors. These reporting errors are accounted for in the reporting
  accuracy factor of the visibility framework. Appendix
  \ref{sec:agg-vis-deathrate} has the details.}. When there is no
relationship between these two quantities, the visibility ratio will be
1. When, say, siblings who have died tend to be in less visible sibships
than siblings overall, then this factor will tend to be greater than 1
(meaning that the death rate will be under-estimated).

The next group of adjustment factors captures the extent to which
reports about siblings are accurate. It is the ratio of a quantity that
captures the net accuracy of reports about exposure
(\(\gamma(F, N_\alpha)\)) and a quantity that captures the net accuracy
of reports about deaths (\(\gamma(F, D_\alpha)\)). Two particularly
salient findings emerge from the derivation of this group of adjustment
factors (Appendix \ref{sec:agg-vis-deathrate}): first, the estimator
requires that reporting be accurate \emph{in aggregate}, but not
necessarily at the individual level; as long as reporting errors across
individuals cancel one another out, estimates will not be affected.
Second, this group of adjustment factors shows that aggregate visibility
estimates will not be affected if reporting errors about deaths and
reporting errors about exposure balance out. In other words, if
respondents tend to, say, omit older siblings at a constant rate,
independent of the survival status of older siblings
(\(\gamma(F, D_\alpha) = \gamma(F, N_\alpha) < 1\)), then
Equation~\ref{eq:main-agg-mult-ubersens} reveals that death rates can
still be accurate because reporting errors about deaths and about
exposure will cancel out. Thus, Equation~\ref{eq:main-agg-mult-ubersens}
reveals that the death rate estimator can be robust to situations in
which respondents' reports are imperfect, but imperfect in similar ways
for siblings who die and siblings who survive.

Finally, the last group captures the conditions needed to be able to use
only information about visible deaths to estimate the total death rate.
This group depends on two quantities: \(p^I_{N_\alpha}\), the amount of
exposure that is invisible; and \(K = \frac{M^I_\alpha}{M^V_\alpha}\),
an index for how different the invisible and visible death rates are.
Below, in Section~\ref{sec:example}, we will use an empirical example to
illustrate this group of adjustment factors in more depth.

\hypertarget{sensitivity-of-the-individual-visibility-estimator}{%
\subsection{Sensitivity of the individual visibility
estimator}\label{sensitivity-of-the-individual-visibility-estimator}}

The derivation in Appendix \ref{sec:ind-vis-deathrate} reveals that the
relationship between the population death rate in group \(\alpha\) and
the individual visibility estimand can be written as:

\begin{equation}
\begin{aligned}
M_{\alpha}
&=
\underbrace{
M^V_{\alpha, \text{ind}}
}_{\substack{\text{individual} \\ \text{visibility} \\ \text{estimand}}}
\times
\underbrace{
    \frac{\bar{\gamma}^{\star}_N ~ (1 + K_N)}{\bar{\gamma}_D ~ (1 + K_D)} 
}_{\substack{\text{reporting} \\ \text{accuracy}}}
    \times
\underbrace{
\left[1 + p^I_{N_\alpha} (K-1)\right].
}_{\substack{\text{difference} \\ \text{between} \\ \text{visibile and} \\ \text{invisibile populations}}}
\end{aligned}
\label{eq:main-ind-mult-ubersens}\end{equation}

Equation~\ref{eq:main-ind-mult-ubersens} again decomposes the population
death rate into the product of the individual visibility estimand and
several groups of adjustment factors. The main insights from
Equation~\ref{eq:main-agg-mult-ubersens} also apply to the individual
visibility expression in Equation~\ref{eq:main-ind-mult-ubersens}.
However, it is worth noting a few differences between the two
frameworks. First, note that Equation~\ref{eq:main-ind-mult-ubersens}
does not have any adjustment factors related to a visibility ratio. This
is an advantage of the individual visibility estimator: it does not need
to make any assumptions about the absence of a relationship between
visibility and mortality (within the visible population). Second, the
individual visibility expression in
Equation~\ref{eq:main-ind-mult-ubersens} has a more complex set of
adjustment factors that capture reporting accuracy. This more complex
expression describes the extent to which reporting errors are correlated
with reports about deaths and reports about exposure; for example, if
reporting tends to omit deaths in sibships that have more deaths, then
\(K_D > 1\). More generally, even if deaths and exposure are
under-reported at the same average rate, so that the average reporting
adjustment factor is equal to 1, there can still be problems if the
reporting errors are different in how they are correlated with the
number of sibship deaths and exposure. In general, reporting errors
appear to be more complex under the individual visibility estimator; we
will discuss the implications of this in
Section~\ref{sec:recommendations}.

\hypertarget{sec:example}{%
\section{Empirical illustration}\label{sec:example}}

We motivate our technical results with an empirical example: the sibling
history data from the 2000 Malawi Demographic and Health Survey (Malawi
National Statistical Office and ORC Macro 2001). We chose this example
for two reasons: first, we wanted the empirical example to be a
Demographic and Health Survey, since DHS surveys are the largest
available source of sibling history data; as we write, more than 150 DHS
surveys in 60 countries have collected sibling histories over a period
of about 30 years. Second, among Demographic and Health Surveys, we
chose the 2000 Malawi DHS because it had low missingness in sibling
reports, and because its sample size of 13,220 women was close to
average\footnote{The average DHS survey that included the sibling
  history module interviewed 14,224 women (in DHS surveys, the sibling
  history module is typically only asked of women).}. Our analysis of
the 2000 Malawi DHS sibling histories uses the \texttt{siblingsurvival}
R package, which we created as a companion to this paper\footnote{The
  \texttt{siblingsurvival} package is open source and freely available
  for other researchers to use:
  \texttt{https://www.github.com/dfeehan/siblingsurvival} }.

Figure~\ref{fig:sib-ests} shows estimated death rates and confidence
intervals for male and female death rates in Malawi over the seven year
period before interviews were conducted. For males, the individual and
aggregate visibility estimates are qualitatively quite similar. For
females, however, aggregate visibility estimates are systematically
higher than individual visibility estimates, and these differences are
larger than the estimated sampling variation. We discuss a possible
explanation for this difference in more detail in Appendix
\ref{sec:agg-vis-special}.

\hypertarget{variance-estimation}{%
\subsection{Variance estimation}\label{variance-estimation}}

Researchers need to be able to estimate the sampling uncertainty of
death rate estimates calculated using Equation~\ref{eq:mhat-agg} and
Equation~\ref{eq:mhat-ind}. We recommend that researchers use a
resampling approach called the rescaled bootstrap to do so (Rao and Wu
1988; Rao, Wu, and Yue 1992). The rescaled bootstrap is appealing
because (i) it accounts for the complex sampling design that is typical
of surveys like the DHS; (ii) it enables researchers to use a single
approach to estimating the sampling uncertainty for death rates and for
other quantities (such as the internal consistency checks we introduce
below); and (iii) it has been successfully applied to other network
reporting studies (e.g.~Feehan, Mahy, and Salganik 2017). The
\texttt{siblingsurvival} package uses the rescaled bootstrap to provide
estimated sampling uncertainty for death rate estimates.

An alternate approach to estimating sampling uncertainty is to derive a
mathematical expression that relates sampling variation to a function of
study design parameters and population characteristics that are known or
that can be approximated; this approach is discussed in Appendix
\ref{sec:ap-variance}, where we illustrate how linearization can be used
to derive an approximate variance estimator for estimated death rates.
In practice, we expect the bootstrap approach discussed here to be most
useful in empirical analyses.

Figure~\ref{fig:sib-ests} shows confidence intervals for estimated male
and female death rates in Malawi over the seven year period before
interviews were conducted. To compare the amount of estimated sampling
uncertainty for the aggregate and individual visibility estimates in
Figure~\ref{fig:sib-ests}, we define the relative standard error of an
estimate of \(\widehat{M}^V_\alpha\) to be
\(\frac{\widehat{\text{se}}(\widehat{M}^V_\alpha)}{\widehat{M}^V_\alpha}\),
where \(\widehat{\text{se}}(\widehat{M}^V_\alpha)\) is the rescaled
bootstrap estimated standard error. We then calculate the average of
these relative standard errors across all age-sex groups for each
estimator. The results suggest that the individual visibility estimator
has slightly larger sampling variance than the aggregate visibility
estimator (Table \ref{tab:ci-compare-bysex}). This empirical finding is
consistent with simulation results discussed in Appendix
\ref{sec:simulation}.

\begin{figure}
\hypertarget{fig:sib-ests}{%
\centering
\includegraphics{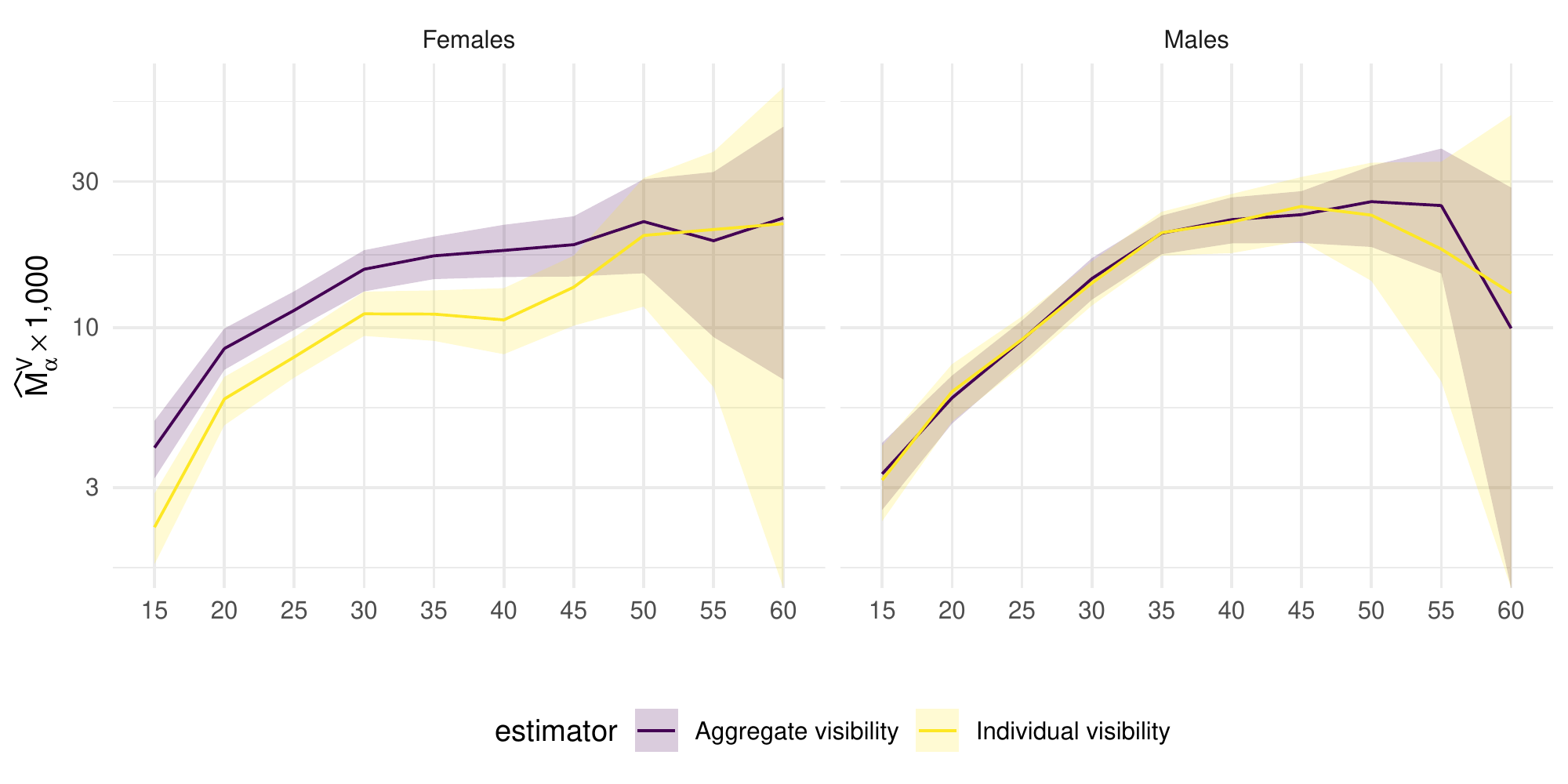}
\caption{Aggregate and individual visibility estimates for log-adult
death rates from the Malawi 2000 DHS sibling histories. Estimates are
for the 7 years prior to the interview. Confidence bands show sampling
uncertainty, accounting for the complex sample design; they were
produced using the rescaled bootstrap method (Rao and Wu 1988; Feehan
and Salganik 2016b).}\label{fig:sib-ests}
}
\end{figure}

\begin{table}[t]

\caption{\label{tab:ci-compare-bysex}Comparison in average relative standard error for aggregate and individual visibility estimates of death rates across all ages for males and for females.}
\centering
\begin{tabular}{lrr}
\toprule
\multicolumn{1}{c}{} & \multicolumn{2}{c}{\makecell[c]{Estimated Average\\Relative Standard Error}} \\
\cmidrule(l{3pt}r{3pt}){2-3}
Estimator & Females & Males\\
\midrule
Aggregate visibility & 0.16 & 0.18\\
Individual visibility & 0.21 & 0.25\\
\bottomrule
\end{tabular}
\end{table}

\hypertarget{applying-the-sensitivity-framework}{%
\subsection{Applying the sensitivity
framework}\label{applying-the-sensitivity-framework}}

\hypertarget{sensitivity-to-invisible-deaths}{%
\subsubsection*{Sensitivity to invisible
deaths}\label{sensitivity-to-invisible-deaths}}
\addcontentsline{toc}{subsubsection}{Sensitivity to invisible deaths}

Equation~\ref{eq:main-agg-mult-ubersens} expresses the difference
between the visible and total death rate in terms of two parameters:
\(K\), an index for how different the visible and invisible death rates
are; and \(p^I_{N_\alpha}\), the proportion of exposure that is
invisible. It is impossible to know the fraction of exposure that is
invisible, \(p^I_{N_\alpha}\), from sibling history data. However, we
can try to approximate this quantity by taking advantage of the fact
that we have a random sample of the frame population that is currently
alive. We can use the sample to estimate what fraction of respondents
would be invisible to sibling histories at the time of the survey, i.e.,
what fraction of survey respondents have no siblings on the sampling
frame.

Figure~\ref{fig:invis-sens-ego} shows the estimated fraction of
respondents to the 2000 Malawi DHS who would not be visible to sibling
histories. As our derivations reveal, it is this visibility---and not
sibship size \emph{per se}---that matters for estimating death rates.
Figure~\ref{fig:invis-sens-ego} gives an approximate sense for the type
of values that we might expect to see for \(p^I_{N_\alpha}\) by age.
Figure~\ref{fig:invis-sens-ego} reveals that the share of respondents
that is invisible has a U-shaped relationship with age, reaching its
highest levels among the youngest and oldest survey respondents. This
relationship is likely due to the definition of the frame population,
which included women aged 15-49; age groups close to the boundaries of
the frame population will tend to have siblings who are too old or too
young to be included as respondents, reducing the visibility of these
ages. At worst, about 40\% of women at ages 45-49 would be invisible to
sibling histories and at best, only about 15\% of women ages 30-34 would
be invisible to sibling histories.

What difference would this range of invisibility make to death rate
estimates? The sensitivity relationship in
Equation~\ref{eq:main-agg-mult-ubersens} reveals that the answer relies
upon understanding how different death rates are in the invisible and
visible populations. Figure~\ref{fig:invis-sens-surface-alt} illustrates
by showing the relative error that would result from a range of
differences between invisible and visible death rates from 20\% higher
death rates in the invisible population to 20\% lower death rates in the
invisible population (\(K\) parameter, shown on the y-axis) and a range
of different proportions of exposure that is invisible, from 15\% of
exposure invisible to 30\% of exposure invisible (\(p^I_{N_\alpha}\),
shown on the x axis). Even relatively large values for the two
parameters appear to result in modest relative errors.

\begin{figure}
\centering

\subfloat[]{\includegraphics[width=0.5\textwidth,height=\textheight]{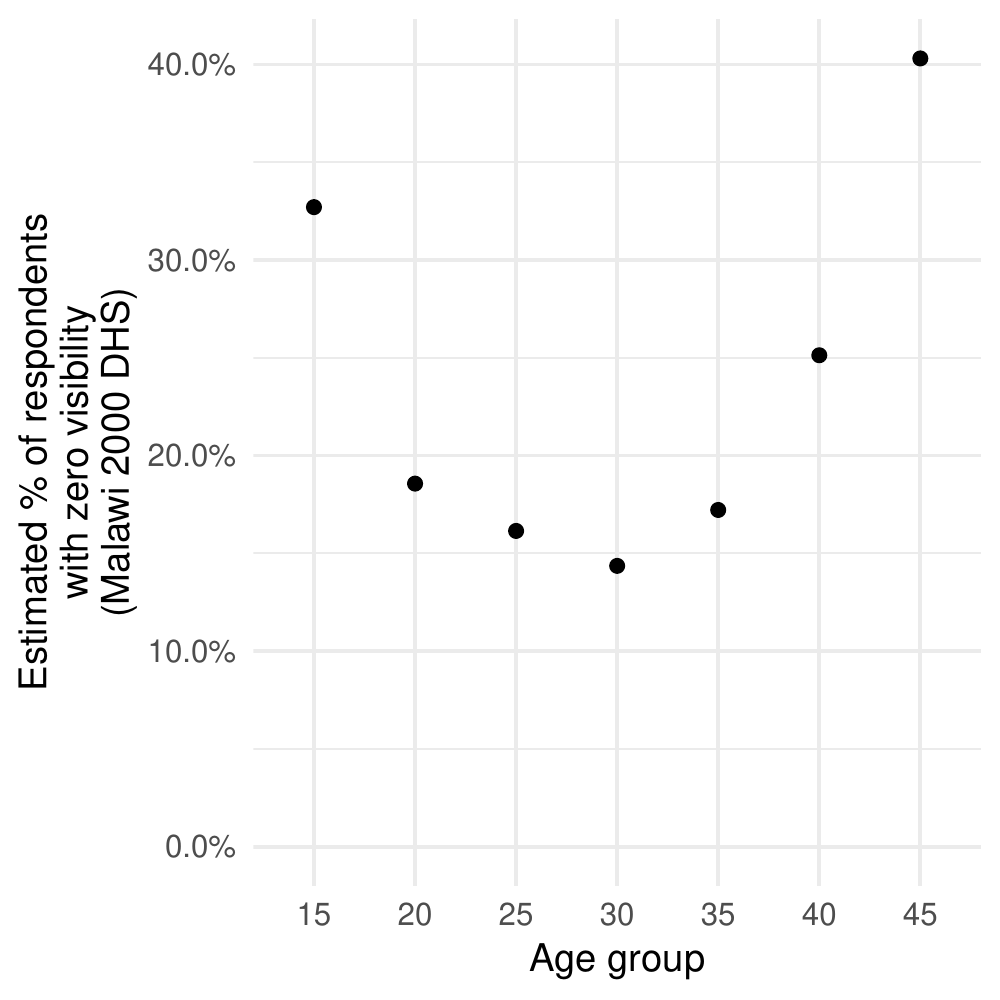}\label{fig:invis-sens-ego}}\hfill
\subfloat[]{\includegraphics[width=0.5\textwidth,height=\textheight]{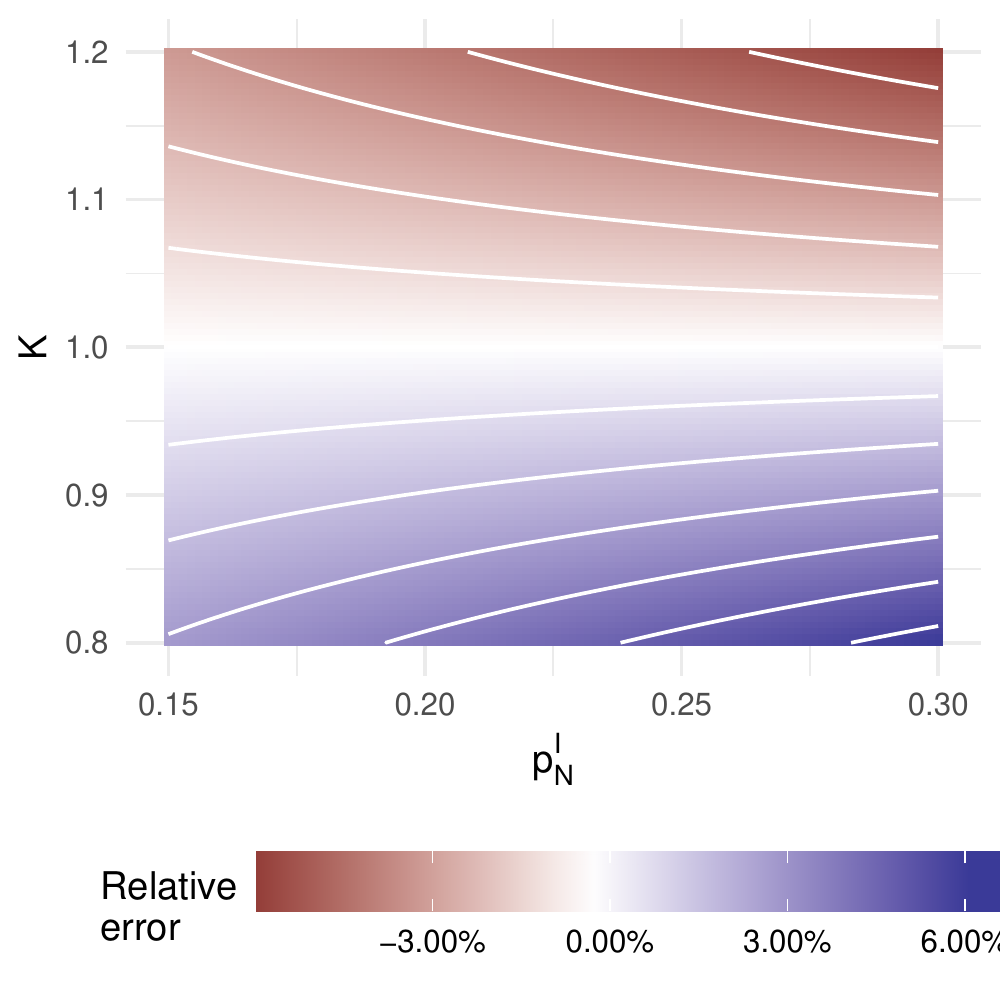}\label{fig:invis-sens-surface-alt}}

\caption{Sensitivity to invisible deaths.\\
(a) Estimated fraction of respondents to the Malawi 2000 DHS who would
not be visible in sibling history data.\\
(b) Illustration of the relative error in using the visible death rate
\(M^V\) as an estimate for the total death rate \(M\). The proportion of
exposure that is invisible, \(p^I_{N_{\alpha}}\), varies along the x
axis; the relationship between the visible and invisible death rates,
captured by the parameter \(K\) (Equation~\ref{eq:kfactor}), varies
along the y axis. The colors show the percentage relative error; so if
20\% of the population's exposure is invisible (\(p^I_{N_\alpha}=0.2\))
and the invisible death rate is 10\% higher than the visible death rate
(\(K=1.1\)), the relative error is about 2 percent. Relative error
increases as \(K\) gets farther away from 1 and as \(p^I_{N_{\alpha}}\)
increases.}

\label{fig:invis-sens}

\end{figure}

\hypertarget{reporting-errors}{%
\subsection*{Reporting errors}\label{reporting-errors}}
\addcontentsline{toc}{subsection}{Reporting errors}

Researchers have long been aware that reporting errors may affect the
quality of sibling survival estimates. We will illustrate two ways to
assess the sensitivity of aggregate visibility estimates to reporting
errors. First, we will use Equation~\ref{eq:main-agg-mult-ubersens} to
assess how much death rate estimates are affected by different levels of
reporting error. Second, we will illustrate how our network reporting
framework leads to data quality checks that can be performed on sibling
history data.

\hypertarget{analyzing-the-impact-of-reporting-errors-using-the-sensitivity-framework}{%
\subsubsection{Analyzing the impact of reporting errors using the
sensitivity
framework}\label{analyzing-the-impact-of-reporting-errors-using-the-sensitivity-framework}}

We will illustrate the sensitivity framework by focusing on aggregate
visibility estimates for brevity. The sensitivity framework in
Equation~\ref{eq:main-agg-mult-ubersens} shows that reporting errors
will affect the accuracy of aggregate visibility estimates through the
ratio of two parameters: \(\gamma(F, N_\alpha)\) and
\(\gamma(F, D_\alpha)\). Importantly, in principle it is possible to
design a study that could measure \(\gamma(F, N_\alpha)\) and
\(\gamma(F, D_\alpha)\). In fact, some promising research on the sibling
method to date has compared sibling reports to ground truth information
at a demographic surveillance site in southeastern Senegal (e.g.,
Helleringer, Pison, Kanté, et al. 2014). Studies like Helleringer,
Pison, Kanté, et al. (2014) were designed to estimate somewhat different
reporting parameters from \(\gamma(F, D_\alpha)\) and
\(\gamma(F, N_\alpha)\); thus, there is not currently any direct
evidence about these parameters available. However, to illustrate our
sensitivity framework, we can base some back of the envelope
calculations on the data reported by Helleringer and colleagues. Suppose
that respondents never mistakenly count non-siblings as siblings but
that, on average, about four percent of living siblings in group
\(\alpha\) get omitted by respondents' reports, and about nine percent
of dead siblings in group \(\alpha\) get omitted by respondents'
reports. Then \(\gamma(F, N_\alpha) \approx 0.96\) and
\(\gamma(F, D_\alpha) \approx 0.91\). If these approximations were
correct, Equation~\ref{eq:main-agg-mult-ubersens} shows that, in order
to adjust for reporting errors, the estimated death rates should be
multiplied by about \(\frac{0.95}{0.91} = 1.05\). In other words, under
this scenario, the unadjusted aggregate visibility estimator produces
estimates that have a relative error of about 5\% because of imperfect
reporting\footnote{We stress that this is a back of the envelope
  calculation and do not recommend that this value be used in practice;
  we intend it to illustrate how our framework might be used once
  measurements of \(\gamma(F, D_\alpha)\) and \(\gamma(F, N_\alpha)\)
  are available.}.

\hypertarget{sec:ic-checks}{%
\subsubsection{Internal consistency checks}\label{sec:ic-checks}}

A second strategy for assessing sensitivity to reporting errors is to
perform data quality checks on sibling history data (e.g., Helleringer,
Pison, Masquelier, et al. 2014; Masquelier and Dutreuilh 2014; Stanton,
Abderrahim, and Hill 2000; Garenne and Friedberg 1997; Rutenberg and
Sullivan 1991). We now show how our reporting framework enables us
introduce several new internal consistency checks that can be used to
assess how accurate sibling reports are. The idea is to use the network
reporting framework to identify several quantities that can be estimated
in two different ways using independent subsets of the sibling history
data (e.g., Feehan and Cobb 2019). If reporting is highly accurate, then
we expect these independent estimates to agree; when these independent
estimates are very different, that suggests that there may be
considerable amounts of reporting error.

We base our internal consistency checks on reports about siblings' ages.
For a particular age, say 30, the symmetry of the sibling relation
guarantees that in a census of the frame population, \begin{equation}
\substack{\text{\# connections people in $F$ aged 30}\\\text{report to sibs in $F$ not aged 30}} =
\substack{\text{\# connections people in $F$ not aged 30}\\\text{report to sibs in $F$ aged 30}}.
\label{eq:ic-ex}\end{equation}

The quantity on the left-hand side of Equation~\ref{eq:ic-ex} can be
estimated from the survey respondents aged 30, and the quantity on the
right-hand side of Equation~\ref{eq:ic-ex} can be estimated from all of
the survey respondents who are not aged 30. These two estimates can then
be compared; if reporting is accurate, then we expect the estimates to
agree.

Formally, for age \(\alpha\), we write \(y(F_\alpha, F_{-\alpha})\) for
the total reported connections from respondents aged \(\alpha\) to
siblings who are in \(F\) but not aged \(\alpha\); similarly, we write
\(y(F_{-\alpha}, F_{\alpha})\) for the total reported connections from
respondents not aged \(\alpha\) to siblings who are in \(F\) and who are
age \(\alpha\). In theory, these are same quantity. From sibling history
data, we can independently estimate \(y(F_\alpha, F_{-\alpha})\) and
\(y(F_{-\alpha}, F_{\alpha})\) and then compare how similar these two
independent estimates of the same quantity are by calculating
\(\Delta_\alpha\): \[
\Delta_{\alpha} = \widehat{y}(F_\alpha,F_{-\alpha}) - \widehat{y}(F_{-\alpha}, F_\alpha).
\] When the two estimates agree, \(\Delta_\alpha\) is close to zero. If
there is considerable reporting error, then \(\Delta_\alpha\) can be
very different from zero.

Figure~\ref{fig:ic-check-example} illustrates this idea by showing
internal consistency checks for each age from the 2000 Malawi DHS
sibling histories (Malawi National Statistical Office and ORC Macro
2001). The figure shows an internal consistency check based on each age
from 15 to 49. Each point shows the difference between two independent
estimates for the same quantity. If these independent estimates agreed
perfectly, they would all lie on the horizontal \(y=0\) line. The
confidence intervals capture estimated sampling variation. Most of the
confidence intervals include 0, suggesting that reports are internally
consistent; however, the figure also suggests that there is some
misreporting, particularly between ages 20 and 25.

In general, we expect that plots or quantitative summaries of internal
consistency checks like Figure~\ref{fig:ic-check-example} will be a
useful way for researchers to assess the face validity of sibling
history data. As we describe below, these internal consistency
relationships could also form the basis for developing model-based
approaches to analyzing sibling histories.

\begin{figure}
\hypertarget{fig:ic-check-example}{%
\centering
\includegraphics{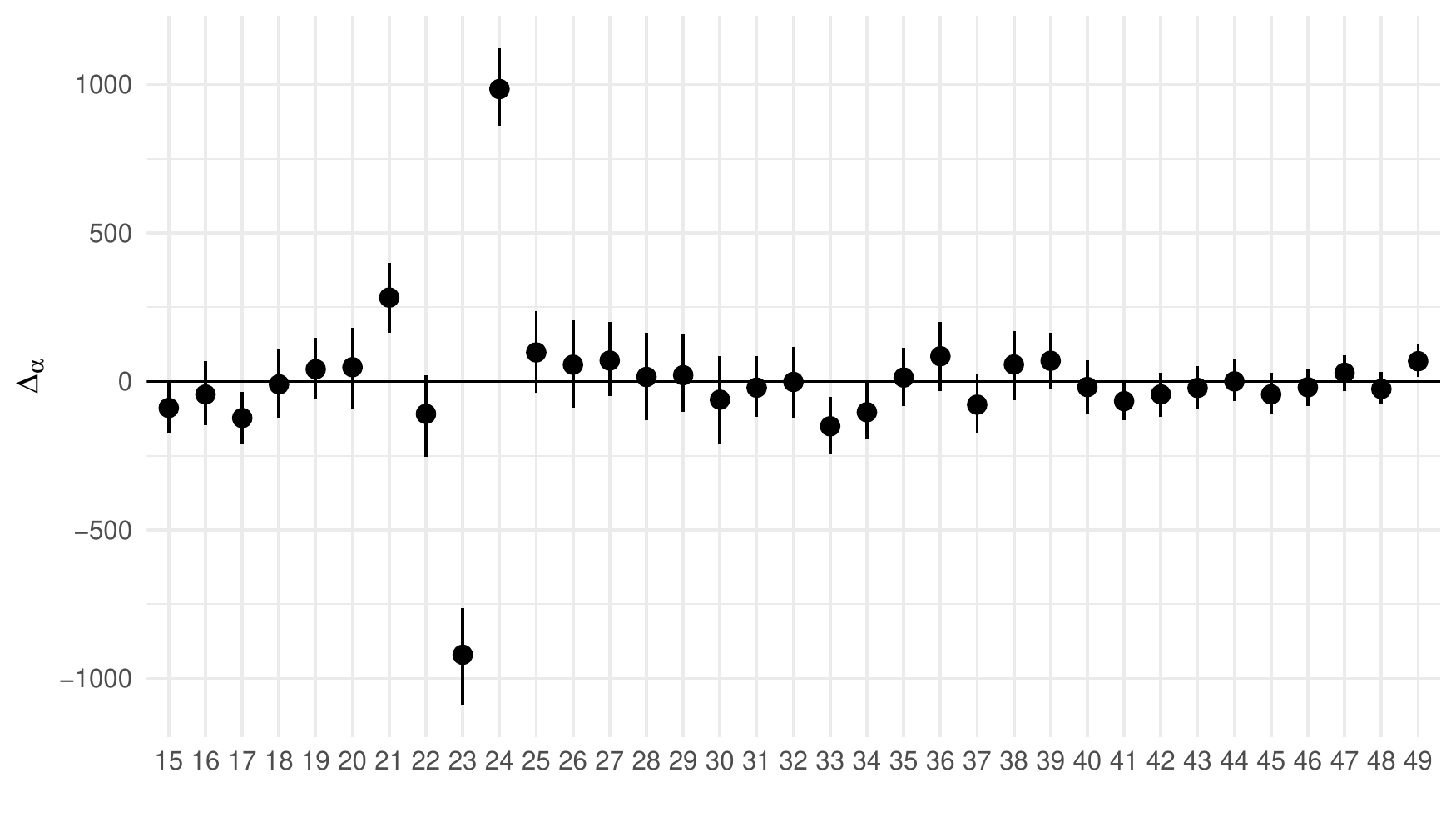}
\caption{Internal consistency checks using sibling reports from the 2000
Malawi DHS sibling histories. Each point shows the difference between
two independent estimates for the same quantity. If these independent
estimates agreed perfectly, they would all lie on the horizontal \(y=0\)
line. Most of the checks' confidence intervals include 0, suggesting
that reports are internally consistent. However, the figure also
suggests that there is some misreporting, particularly between ages 20
and 25. Confidence intervals show sampling uncertainty, accounting for
the complex sample design; they were produced using the rescaled
bootstrap method (Rao and Wu 1988; Feehan and Salganik
2016b).}\label{fig:ic-check-example}
}
\end{figure}

\hypertarget{sec:recommendations}{%
\section{Recommendations for practice}\label{sec:recommendations}}

\hypertarget{should-the-respondent-be-included-in-the-denominator}{%
\subsection*{Should the respondent be included in the
denominator?}\label{should-the-respondent-be-included-in-the-denominator}}
\addcontentsline{toc}{subsection}{Should the respondent be included in
the denominator?}

The sibling survival literature has debated whether or not respondents
should be included in the sibling reports. The concern is that
respondents are, by definition, alive: thus, it seems possible that
including respondents will bias estimated death rates downwards
(Trussell and Rodriguez 1990; Masquelier 2013). The estimates published
in DHS reports, which use the aggregate visibility estimator, do not
include the respondent. More recently, Gakidou and King (2006) argued
that respondents should be included in sibling reports, but this was
disputed by Masquelier (2013).

In our framework, deciding to include or not include respondents in the
denominator of the estimator amounts to deciding upon the
\emph{definition} of the visible and invisible populations. Since
sibling history methods estimate a visible death rate, the question is:
which visible population's death rate is more likely to be a good
estimate for the total population death rate \(M_\alpha\)?

To address this question, we analyzed the difference between the two
visible populations in Appendix \ref{sec:includerespondent}. Recall that
\(M^V_\alpha = \frac{D^V_\alpha}{N^V_\alpha}\) is the visible death rate
when respondents are excluded from sibling histories, and let
\(M^{\prime V}_\alpha = \frac{D^{\prime V}_\alpha}{N^{\prime V}_\alpha}\)
be the visible population when respondents are included in the reports.
In Appendix \ref{sec:includerespondent}, we show that the relationship
between \(M^V_\alpha\) and \(M^{\prime V}_\alpha\) can be written as

\begin{equation}
\begin{aligned}
M^{V}_\alpha &= \frac{D^{V}_{\alpha}}{N^{V}_{\alpha}}= \frac{D^{\prime V}_\alpha}{N^{\prime V}_{\alpha} - C},
\end{aligned}
\label{eq:includer-mdiff}\end{equation}

where \(C\) is the number of people who contribute exposure and who have
visibility of exactly 1 when respondents are included in the
denominator; that is, \(C\) is the number of people in the population
who are in group \(\alpha\), are eligible to respond to the survey, and
who have no siblings who would be eligible to respond to the survey.
\(C\) will tend to be bigger, inducing a bigger difference between
\(M^{\prime V}_\alpha\) and \(M^V_\alpha\), when (i) there is more
overlap between the group \(\alpha\) and the frame population; and, (ii)
visibilities tend to be smaller, meaning that more people have a
visibility exactly equal to 1 when respondents are included in the
denominator.

Equation~\ref{eq:includer-mdiff} reveals that the decision to include
respondents in reports will move some people into the denominator, but
it can never move anyone into the numerator.

A model developed in Appendix \ref{sec:includerespondent-model} shows
that, in a simple situation in which everyone has the same probability
of dying, it is most natural to exclude respondents from sibling
reports. Under the model, when respondents are excluded, the invisible
and visible populations have the same death rate, and that death rate
can be estimated from sibling reports. Including respondents, on the
other hand, induces a difference between the death rate in the visible
and invisible population (even though every individual faces the same
probability of death). Thus, our model suggests that excluding
respondents from reports is preferable, at least in the simple world it
describes. This conclusion agrees with the earlier modeling work of
Trussell and Rodriguez (1990), which also argued that respondents should
be excluded from reports.

However, these are suggestive results: without additional information,
there is no way to be certain that \(M^V_\alpha\) or
\(M^{\prime V}_\alpha\) will produce a better approximation to the
population death rate \(M_\alpha\) in a given population. The estimates
presented here exclude respondents from reports, but we note that the
full derivations of all of our estimators in Appendixes
\ref{sec:ind-vis-deathrate} and \ref{sec:agg-vis-deathrate} cover both
including and not including the respondent in the denominator. Thus,
researchers who wish to include respondents in the denominator of the
death rate can find the appropriate estimators there.

\hypertarget{aggregate-vs-individual-visibility-estimator}{%
\subsection*{Aggregate vs individual visibility
estimator}\label{aggregate-vs-individual-visibility-estimator}}
\addcontentsline{toc}{subsection}{Aggregate vs individual visibility
estimator}

Our analysis suggests that there are strengths and limitations to both
the individual and aggregate visibility estimators. Aggregate visibility
estimates are typically based on the assumption that the visibility of
deaths and the visibility of exposure are equal. The individual
visibility estimator avoids this assumption altogether; thus, in
situations where no information about adjustment factors is available,
we recommend using the individual visibility estimator. In Appendix
\ref{sec:agg-vis-special}, we analyze the difference in aggregate and
individual visibility estimates for Malawi, and we show that it is
likely that this visibility assumption explains most of the difference
between death rate estimates for females in Figure~\ref{fig:sib-ests}.

The individual visibility estimator has some disadvantages. Table
\ref{tab:ci-compare-bysex} and the simulation study in Appendix
\ref{sec:simulation} suggest that the individual visibility estimator
has slightly higher sampling variance than the aggregate visibility
estimator. We hope that future research will continue to systematically
compare the aggregate and individual visibility estimators; in the
meantime, our view is that the individual visibility estimator's
relatively small loss in precision is a price that is worth paying to
avoid having to make assumptions about the visibility of deaths and
exposure.

Another disadvantage of the individual visibility estimator comes from
the comparing the aggregate sensitivity framework
(Equation~\ref{eq:main-agg-mult-ubersens}) with the individual
sensitivity framework (Equation~\ref{eq:main-ind-mult-ubersens}). This
comparison reveals that the quantities that would be needed adjust the
individual visibility estimator for reporting error are much more
complex than the analogous quantities needed to adjust the aggregate
visibility estimator. Thus, if data that can be used to estimate
adjustment factors become more widely available, then we expect the
relative appeal of the aggregate visibility estimator to increase.

To recap, we recommend excluding respondents from reports. We also
recommend using the individual visibility estimator in the absence of
any empirical estimates for adjustment factors. However, we expect
empirical information about adjustment factors to be much easier to
collect for the aggregate visibility estimator. Thus, as empirical
information about adjustment factors becomes available, we expect the
aggregate visibility estimator to be more attractive. In all cases, we
recommend that researchers who produce sibling history-based estimates
use the sensitivity frameworks to assess how estimated death rates are
affected by the assumptions used to produce them.

\hypertarget{sec:conclusion}{%
\section{Discussion and Conclusion}\label{sec:conclusion}}

We showed how sibling history data can be understood as a type of
network reporting. We explained how to derive network-based estimators
for adult death rates, how to devise internal consistency checks, and
how to understand how sensitive death rate estimates can be to the
different conditions that the estimators rely upon. We illustrated with
an empirical example, based on our freely available \texttt{R} package
\texttt{siblingsurvival}, and we outlined several recommendations for
practitioners who wish to estimate death rates from sibling histories.

We see several important avenues for future research. Methodologically,
a deeper comparison between the individual and aggregate visibility
estimators would be useful. In particular, analytic results could help
better understand our empirical finding that the individual visibility
estimator has somewhat higher sampling variance than the aggregate
visibility estimator. This analysis could also produce insights that
might be useful for designing future data collection.

Our results also suggest next steps for developing models for death rate
estimates based on sibling histories. This paper has focused on
design-based estimators for death rates using sibling histories. Future
research can use these design-based estimators as the starting point for
developing model-based estimators. For example, the internal consistency
checks that we discuss could form the basis for model-based adjustments
of sibling reports (see McCormick, Salganik, and Zheng (2010) for a
similar approach that has been developed in the context of aggregate
relational data). Our framework also offers a natural way to think about
how to pool information across countries and time periods.

We believe that collecting more information on sibling reports from
settings where gold-standard adult death rates are available is crucial;
Helleringer, Pison, Kanté, et al. (2014) is a useful template for the
type of study design that could help produce more information. This type
of study investigates the properties of sibling history reports in small
areas where a gold-standard underlying truth about adult death rates is
available. Data collected in this way can produce the information needed
to estimate the adjustment factors in the individual and aggregate
visibility sensitivity relationships. Combined with the framework we
introduce here, estimates for these adjustment factors could provide a
principled way to adjust national-level sibling survival estimates from
surveys like the DHS, relaxing the conditions required for the estimates
to be accurate.

Our analysis focused on death rate estimates for the time period
immediately preceding the survey. In principle, the estimators discussed
here could also be used for more distant time periods; however, the
assumptions--though mathematically the same--presumably get stronger
farther into the past. Future work could investigate this topic in more
depth.

Our framework can also be applied to other demographic estimation
techniques related to sibling survival, such as methods in which people
report about their parents, spouses, or children (Hill et al. 1983;
Moultrie et al. 2013). More generally, ideas from the sibling survival
literature can be used to develop new methods for collecting data that
have the potential to overcome some of the limitations of sibling
histories. Feehan, Mahy, and Salganik (2017) explored how reports about
two social network relationships could form the basis for death rate
estimates at adult ages (see also Feehan et al. (2016)). Future research
could continue to explore how to collect reports about more general
types of relations, such as broader kin relationships or other types of
social networks, with the goal of producing information that is timely
and accurate enough to estimate adult death rates.

\hypertarget{references}{%
\section*{References}\label{references}}
\addcontentsline{toc}{section}{References}

\hypertarget{refs}{}
\leavevmode\hypertarget{ref-abouzahr_universal_2015}{}%
AbouZahr, Carla, Don de Savigny, Lene Mikkelsen, Philip W. Setel, Rafael
Lozano, and Alan D. Lopez. 2015. ``Towards Universal Civil Registration
and Vital Statistics Systems: The Time Is Now.'' \emph{The Lancet}.
\url{http://www.sciencedirect.com/science/article/pii/S0140673615601702}.

\leavevmode\hypertarget{ref-bernard_estimating_1989}{}%
Bernard, H. Russell, Eugene C. Johnsen, Peter D. Killworth, and Scott
Robinson. 1989. ``Estimating the Size of an Average Personal Network and
of an Event Subpopulation.'' In \emph{The Small World}, edited by
Manferd Kochen, 159--75. Norwood, NJ: Ablex Publishing.

\leavevmode\hypertarget{ref-brass_methods_1975}{}%
Brass, William. 1975. ``Methods for Estimating Fertility and Mortality
from Limited and Defective Data.'' \emph{Methods for Estimating
Fertility and Mortality from Limited and Defective Data.}
\url{http://www.cabdirect.org/abstracts/19762901082.html}.

\leavevmode\hypertarget{ref-corsi_demographic_2012}{}%
Corsi, Daniel J., Melissa Neuman, Jocelyn E. Finlay, and S. V.
Subramanian. 2012. ``Demographic and Health Surveys: A Profile.''
\emph{International Journal of Epidemiology} 41 (6): 1602--13.
\url{http://ije.oxfordjournals.org/content/41/6/1602.short}.

\leavevmode\hypertarget{ref-fabic_systematic_2012}{}%
Fabic, Madeleine Short, YoonJoung Choi, and Sandra Bird. 2012. ``A
Systematic Review of Demographic and Health Surveys: Data Availability
and Utilization for Research.'' \emph{Bulletin of the World Health
Organization} 90 (8): 604--12.
\url{http://www.scielosp.org/scielo.php?pid=S0042-96862012000800012/\&script=sci_arttext}.

\leavevmode\hypertarget{ref-feehan_network_2015}{}%
Feehan, Dennis M. 2015. ``Network Reporting Methods.'' PhD thesis,
Princeton University. \url{http://gradworks.umi.com/37/29/3729745.html}.

\leavevmode\hypertarget{ref-feehan_using_2019}{}%
Feehan, Dennis M., and Curtiss Cobb. 2019. ``Using an Online Sample to
Learn About an Offline Population.'' \emph{arXiv:1902.08289 {[}Stat{]}},
February. \url{http://arxiv.org/abs/1902.08289}.

\leavevmode\hypertarget{ref-feehan_network_2017}{}%
Feehan, Dennis M., Mary Mahy, and Matthew J. Salganik. 2017. ``The
Network Survival Method for Estimating Adult Mortality: Evidence from a
Survey Experiment in Rwanda.'' \emph{Demography} 54 (4): 1503--28.
\url{https://link.springer.com/article/10.1007/s13524-017-0594-y}.

\leavevmode\hypertarget{ref-feehan_generalizing_2016}{}%
Feehan, Dennis M., and Matthew J. Salganik. 2016a. ``Generalizing the
Network Scale-up Method: A New Estimator for the Size of Hidden
Populations.'' \emph{Sociological Methodology} 46 (1): 153--86.
\url{http://128.84.21.199/pdf/1404.4009.pdf}.

\leavevmode\hypertarget{ref-feehan_surveybootstrap_2016}{}%
---------. 2016b. \emph{Surveybootstrap: Tools for the Bootstrap with
Survey Data}. \url{https://CRAN.R-project.org/package=surveybootstrap}.

\leavevmode\hypertarget{ref-feehan_quantity_2016}{}%
Feehan, Dennis M., Aline Umubyeyi, Mary Mahy, Wolfgang Hladik, and
Matthew J. Salganik. 2016. ``Quantity Versus Quality: A Survey
Experiment to Improve the Network Scale-up Method.'' \emph{American
Journal of Epidemiology}, March, kwv287.

\leavevmode\hypertarget{ref-gakidou_death_2006}{}%
Gakidou, E., and G. King. 2006. ``Death by Survey: Estimating Adult
Mortality Without Selection Bias from Sibling Survival Data.''
\emph{Demography} 43 (3): 569--85.
\url{http://www.springerlink.com/index/W2Q1X41501666JL0.pdf}.

\leavevmode\hypertarget{ref-garenne_accuracy_1997}{}%
Garenne, M., and F. Friedberg. 1997. ``Accuracy of Indirect Estimates of
Maternal Mortality: A Simulation Model.'' \emph{Studies in Family
Planning}, 132--42. \url{http://www.jstor.org/stable/10.2307/2138115}.

\leavevmode\hypertarget{ref-garenne_direct_1997}{}%
Garenne, M., R. Sauerborn, A. Nougtara, M. Borchert, J. Benzler, and J.
Diesfeld. 1997. ``Direct and Indirect Estimates of Maternal Mortality in
Rural Burkina Faso.'' \emph{Studies in Family Planning}, 54--61.
\url{http://www.jstor.org/stable/10.2307/2137971}.

\leavevmode\hypertarget{ref-graham_estimating_1989}{}%
Graham, Wendy, William Brass, and Robert W. Snow. 1989. ``Estimating
Maternal Mortality: The Sisterhood Method.'' \emph{Studies in Family
Planning}, 125--35. \url{http://www.jstor.org/stable/1966567}.

\leavevmode\hypertarget{ref-hanley_confidence_1996}{}%
Hanley, James A., Catherine A. Hagen, and Tesfaye Shiferaw. 1996.
``Confidence Intervals and Sample-Size Calculations for the Sisterhood
Method of Estimating Maternal Mortality.'' \emph{Studies in Family
Planning}, 220--27.

\leavevmode\hypertarget{ref-helleringer_reporting_2014}{}%
Helleringer, Stéphane, Gilles Pison, Almamy M. Kanté, Géraldine Duthé,
and Armelle Andro. 2014. ``Reporting Errors in Siblings' Survival
Histories and Their Impact on Adult Mortality Estimates: Results from a
Record Linkage Study in Senegal.'' \emph{Demography} 51 (2): 387--411.
\url{http://link.springer.com/article/10.1007/s13524-013-0268-3}.

\leavevmode\hypertarget{ref-helleringer_improving_2014}{}%
Helleringer, Stéphane, Gilles Pison, Bruno Masquelier, Almamy Malick
Kanté, Laetitia Douillot, Géraldine Duthé, Cheikh Sokhna, and Valérie
Delaunay. 2014. ``Improving the Quality of Adult Mortality Data
Collected in Demographic Surveys: Validation Study of a New Siblings'
Survival Questionnaire in Niakhar, Senegal.'' \emph{PLoS Med} 11 (5):
e1001652.

\leavevmode\hypertarget{ref-hill_how_2006}{}%
Hill, K., S. El Arifeen, M. Koenig, A. Al-Sabir, K. Jamil, and H.
Raggers. 2006. ``How Should We Measure Maternal Mortality in the
Developing World? A Comparison of Household Deaths and Sibling History
Approaches.'' \emph{Bulletin of the World Health Organization} 84 (3):
173--80.
\url{http://www.scielosp.org/scielo.php?pid=S0042-96862006000300011/\&script=sci_arttext}.

\leavevmode\hypertarget{ref-hill_manual_1983}{}%
Hill, K., H. Zlotnik, J. Trussell, United Nations Dept of International
Economic, Social Affairs Population Division, National Research Council
(US) Committee on Population, Demography, and National Academy of
Sciences (US). 1983. \emph{Manual X: Indirect Techniques for Demographic
Estimation}. United Nations.

\leavevmode\hypertarget{ref-lavallee_indirect_2007}{}%
Lavallee, P. 2007. \emph{Indirect Sampling}. New York: Springer-Verlag.
\url{http://books.google.com/books?hl=en/\&lr=/\&id=o93cnlP9tMMC/\&oi=fnd/\&pg=PA1/\&dq=indirect+sampling+lavallee/\&ots=nhf1KvhIEk/\&sig=_7W13JSq39Iqe1WNclnr3HPk9ts}.

\leavevmode\hypertarget{ref-malawinationalstatisticaloffice_malawi_2001}{}%
Malawi National Statistical Office, and ORC Macro. 2001. \emph{Malawi
Demographic and Health Survey 2000}. Zomba, Malawi: National Statistical
Office.

\leavevmode\hypertarget{ref-masquelier_adult_2013}{}%
Masquelier, Bruno. 2013. ``Adult Mortality from Sibling Survival Data: A
Reappraisal of Selection Biases.'' \emph{Demography} 50 (1): 207--28.
\url{http://link.springer.com/article/10.1007/s13524-012-0149-1}.

\leavevmode\hypertarget{ref-masquelier_sibship_2014}{}%
Masquelier, Bruno, and Catriona Dutreuilh. 2014. ``Sibship Sizes and
Family Sizes in Survey Data Used to Estimate Mortality.''
\emph{Population, English Edition} 69 (2): 221--38.
\url{http://muse.jhu.edu/journals/population/v069/69.2.masquelier.html}.

\leavevmode\hypertarget{ref-mccormick_how_2010}{}%
McCormick, Tyler H., Matthew J. Salganik, and Tian Zheng. 2010. ``How
Many People Do You Know?: Efficiently Estimating Personal Network
Size.'' \emph{Journal of the American Statistical Association} 105
(489): 59--70.

\leavevmode\hypertarget{ref-mikkelsen_global_2015}{}%
Mikkelsen, Lene, David E. Phillips, Carla AbouZahr, Philip W. Setel, Don
de Savigny, Rafael Lozano, and Alan D. Lopez. 2015. ``A Global
Assessment of Civil Registration and Vital Statistics Systems:
Monitoring Data Quality and Progress.'' \emph{The Lancet}.
\url{http://www.sciencedirect.com/science/article/pii/S0140673615601714}.

\leavevmode\hypertarget{ref-moultrie_tools_2013a}{}%
Moultrie, TA, RE Dorrington, AG Hill, K Hill, IM Timaeus, and B Zaba.
2013. \emph{Tools for Demographic Estimation.} Paris: International
Union for the Scientific Study of Population.
\url{http://demographicestimation.iussp.org/}.

\leavevmode\hypertarget{ref-obermeyer_measuring_2010}{}%
Obermeyer, Z., J. K. Rajaratnam, C. H. Park, E. Gakidou, M. C. Hogan, A.
D. Lopez, and C. J. L. Murray. 2010. ``Measuring Adult Mortality Using
Sibling Survival: A New Analytical Method and New Results for 44
Countries, 19742006.'' \emph{PLoS Medicine} 7 (4): e1000260.
\url{http://dx.plos.org/10.1371/journal.pmed.1000260}.

\leavevmode\hypertarget{ref-rao_double_1968}{}%
Rao, J. N. K., and Norma P. Pereira. 1968. ``On Double Ratio
Estimators.'' \emph{Sankhyā: The Indian Journal of Statistics, Series A
(1961-2002)} 30 (1): 83--90.

\leavevmode\hypertarget{ref-rao_resampling_1988}{}%
Rao, J. N. K., and C. F. J. Wu. 1988. ``Resampling Inference with
Complex Survey Data.'' \emph{Journal of the American Statistical
Association} 83 (401): 231--41.

\leavevmode\hypertarget{ref-rao_recent_1992}{}%
Rao, JNK, CFJ Wu, and K Yue. 1992. ``Some Recent Work on Resampling
Methods for Complex Surveys.'' \emph{Survey Methodology} 18 (2):
209--17.

\leavevmode\hypertarget{ref-reniers_adult_2011}{}%
Reniers, G., B. Masquelier, and P. Gerland. 2011. ``Adult Mortality in
Africa.'' In \emph{International Handbook of Adult Mortality}, 151--70.
\url{http://www.springerlink.com/index/G82222M300072147.pdf}.

\leavevmode\hypertarget{ref-rutenberg_direct_1991}{}%
Rutenberg, N., and J. M. Sullivan. 1991. ``Direct and Indirect Estimates
of Maternal Mortality from the Sisterhood Method.'' In \emph{Proceedings
of the Demographic and Health Surveys World Conference}, 3:1669--96.

\leavevmode\hypertarget{ref-rutstein_guide_2006}{}%
Rutstein, S. O., and M. C. S. Guillermo Rojas. 2006. \emph{Guide to DHS
Statistics}. ORC Macro, Calverton, MD.

\leavevmode\hypertarget{ref-sarndal_model_2003}{}%
Sarndal, C. E., B. Swensson, and J. Wretman. 2003. \emph{Model Assisted
Survey Sampling}. Springer Verlag.
\url{http://books.google.com/books?hl=en/\&lr=/\&id=ufdONK3E1TcC/\&oi=fnd/\&pg=PR5/\&dq=sarndal+swensson+wretman+model+assisted/\&ots=7eZV4u7FOC/\&sig=tdK954DVTis0gvMz7r4SapBVnYg}.

\leavevmode\hypertarget{ref-setel_scandal_2007}{}%
Setel, P. W., S. B. Macfarlane, S. Szreter, L. Mikkelsen, P. Jha, S.
Stout, and C. AbouZahr. 2007. ``A Scandal of Invisibility: Making
Everyone Count by Counting Everyone.'' \emph{The Lancet} 370 (9598):
1569--77.
\url{http://www.sciencedirect.com/science/article/pii/S0140673607613075}.

\leavevmode\hypertarget{ref-sirken_household_1970}{}%
Sirken, Monroe G. 1970. ``Household Surveys with Multiplicity.''
\emph{Journal of the American Statistical Association} 65 (329):
257--66.

\leavevmode\hypertarget{ref-stanton_assessment_2000}{}%
Stanton, C., N. Abderrahim, and K. Hill. 2000. ``An Assessment of DHS
Maternal Mortality Indicators.'' \emph{Studies in Family Planning} 31
(2): 111--23.
\url{http://onlinelibrary.wiley.com/doi/10.1111/j.1728-4465.2000.00111.x/abstract}.

\leavevmode\hypertarget{ref-thompson_sampling_2002}{}%
Thompson, Steven K. 2002. \emph{Sampling}. 2nd ed. Wiley.

\leavevmode\hypertarget{ref-timaeus_adult_2004}{}%
Timaeus, I. M., and M. Jasseh. 2004. ``Adult Mortality in Sub-Saharan
Africa: Evidence from Demographic and Health Surveys.''
\emph{Demography} 41 (4): 757--72.
\url{http://www.springerlink.com/index/A2023R3756536V92.pdf}.

\leavevmode\hypertarget{ref-trussell_note_1990}{}%
Trussell, J., and G. Rodriguez. 1990. ``A Note on the Sisterhood
Estimator of Maternal Mortality.'' \emph{Studies in Family Planning} 21
(6): 344--46.

\leavevmode\hypertarget{ref-wolter_introduction_2007}{}%
Wolter, Kirk. 2007. \emph{Introduction to Variance Estimation}. 2nd ed.
New York: Springer.

\newpage
\appendix

\hypertarget{online-appendix}{%
\section*{Online Appendix}\label{online-appendix}}
\addcontentsline{toc}{section}{Online Appendix}

~

\hypertarget{sec:notation}{%
\section{Notation}\label{sec:notation}}

We follow the notation used in Feehan and Salganik (2016a) and Feehan,
Mahy, and Salganik (2017). For convenience, here is a table summarizing
key features of the notation:


\begin{longtable}{ r c p{10cm} }
\caption{Notation for network reporting quantities used in this paper.}
\label{tab:ap-notation} \\
     Quantity & %
     &%
     Explanation \\
     \hline \endhead
     \hline \endfoot
    $U$ &%
    &%
    the entire population\\
    $F$ &%
    &%
    the frame population (typically adults over a certain age)\\
     $|U| = N$ & %
     &%
     size of the entire population, $|U|$ (i.e., everyone who could ever be interviewed or reported about) \\
     $|F| = N_F$ & %
     &%
     size of the frame population, $|F|$ (i.e., everyone who could ever be interviewed)\\
    $y_{i,A} = y(i,A)$ & %
    &%
    out-reports from $i$ about connections to $A$ 
    (e.g. $i$'s reported number of siblings in group $A$)\\
    $y_{F,A} = y(F,A)$ & %
     &%
     total out-reports from the frame population $F$ about connections to $A$ \\
    $y^{+}_{F,A} = y^{+}(F,A)$ & %
     &%
     true positive out-reports from the frame population $F$ about connections to $A$
     (i.e., the sum of reported connections that actually lead to people in $A$)\\
     $d_{i,F} = d(i,F)$ & %
     &%
     number of network connections from $i$ to $F$, i.e., number of $i$'s siblings in $F$
     (which is not necessarily the same as the number of reported connections
      from $i$ to $F$)\\
      $d_{A,B} = d(A,B)$ & %
     &%
     number of network connections from group $A \subset U$ to group $B \subset U$ 
     (note that $A$ and $B$ could be the same group, they could be entirely distinct
      groups, or they could overlap partially) \\
      $\bar{d}_{A,B} = \bar{d}(A,B) = \frac{d_{A,B}}{N_A}$ & %
     &%
     average number of network connections from group $A$ to group $B$, per member of group $A$ 
     (note that we always take averages with respect to the first subscript)\\
     $\bar{v}_{A,F} = \bar{v}(A,F) = \frac{v_{A,F}}{|A|}$ & %
     &%
     average visibility (number of in-reports) per member of A \\
     $s$ & %
     &%
     a probability sample of people from the frame population $F$ \\
     $\pi_i$ & %
     &%
     the probability that $i \in F$ is included in the sample, which
     comes from the sampling design\\
     $\tau_F$ & %
     &%
     the true positive rate for out-reports from $F$\\
     $\delta_F$ & %
     &%
     the degree ratio of hidden population members relative to frame population members\\
     $\eta_F$ & %
     &%
     the false positive rate for out-reports from $F$ \\
     $\alpha$ & %
     &%
     a demographic group (e.g., women aged 45-54 in 2010) \\
     $F_\alpha$ & %
     &%
     frame population members who are in demographic group $\alpha$ \\
     $|F_\alpha|$ & %
     &%
     the number of frame population members who are in demographic group $\alpha$ \\
     $|N_\alpha|$ & %
     &%
     the number of people in the entire population $U$ who are also in demographic group $\alpha$ \\
     $D_\alpha$ & %
     &%
     the number of deaths in demographic group $\alpha$ 
     (e.g. number of deaths among women aged 45-54 in 2010)\\
     $M_\alpha = \frac{D_\alpha}{N_\alpha}$ & %
     &%
     the death rate in demographic group $\alpha$ 
     (e.g. the death rate for women aged 45-54 in 2010; exposure is approximated by the population size)\\
     $\Sigma$ & %
     &%
     the set of all sibships in the population\\
     $\sigma$ & %
     &%
     a specific sibship $\sigma \in \Sigma$\\  
     $\sigma[i]$ & %
     &%
     the specific sibship containing person $i$\\  
\end{longtable}

\textbf{Sibship structure}

We define \(\Sigma\) to be the set of sibships in the population, and we
use \(\sigma\) to index the sibships in \(\Sigma\). \(\Sigma\) is a
partition of the population, meaning that each population member \(i\)
is in one and only one sibship \(\sigma \in \Sigma\). We will sometimes
denote the sibship containing \(i\) by \(\sigma[i]\).

\hypertarget{sec:estimands}{%
\section{Estimands}\label{sec:estimands}}

This appendix focuses on developing several important \emph{estimands}:
population-level relationships that form the basis for the different
death rate estimators that are developed in subsequent appendixes.

We shall see that researchers have two important questions to answer
when forming a death rate estimator from sibling histories: (1) should
reports about siblings be adjusted for visibility at the individual or
at the aggregate level?; and (2) should information about the survey
respondent be included or excluded from the reports? Taken together,
these two questions lead to four different ways that visible death rates
can be estimated from sibling histories.

This appendix begins by developing general expressions for visibility in
each of the four cases: individual visibility when respondents are
included in reports; individual visibility when respondents are excluded
from reports; aggregate visibility when respondents are included in
reports; and aggregate visibility when respondents are excluded from
reports.

Next, we use these expressions for visibility to develop
population-level identities for (1) the number of deaths; (2) the amount
of exposure; and (3) the visible death rate in each case. These
identities hold in a census when reporting is perfectly accurate; later
appendixes will show how estimates will be affected by sampling and by
different types of reporting error.

\hypertarget{sec:visibility}{%
\subsection{Visibility and whether or not the respondent is
included}\label{sec:visibility}}

As described in the main paper (Section~\ref{sec:adjusting-visibility}),
a critical step in making estimates using sibling histories is to adjust
for the \emph{visibility} of reported siblings--that is, to account for
the fact that each sibling could be reported multiple times in a census
of the frame population. In this section, we derive some important
relationships that will be helpful in developing estimators that adjust
for visibility at both the individual and at the aggregate level. We
will see that the decision to include or exclude information about
respondents from sibling reports affects how visibility is calculated.
Therefore, we develop expressions for visibility both when the
respondent is included in reports, and when the respondent is not
included in reports.

\begin{figure}
\centering

\subfloat[]{\includegraphics[width=0.33\textwidth,height=\textheight]{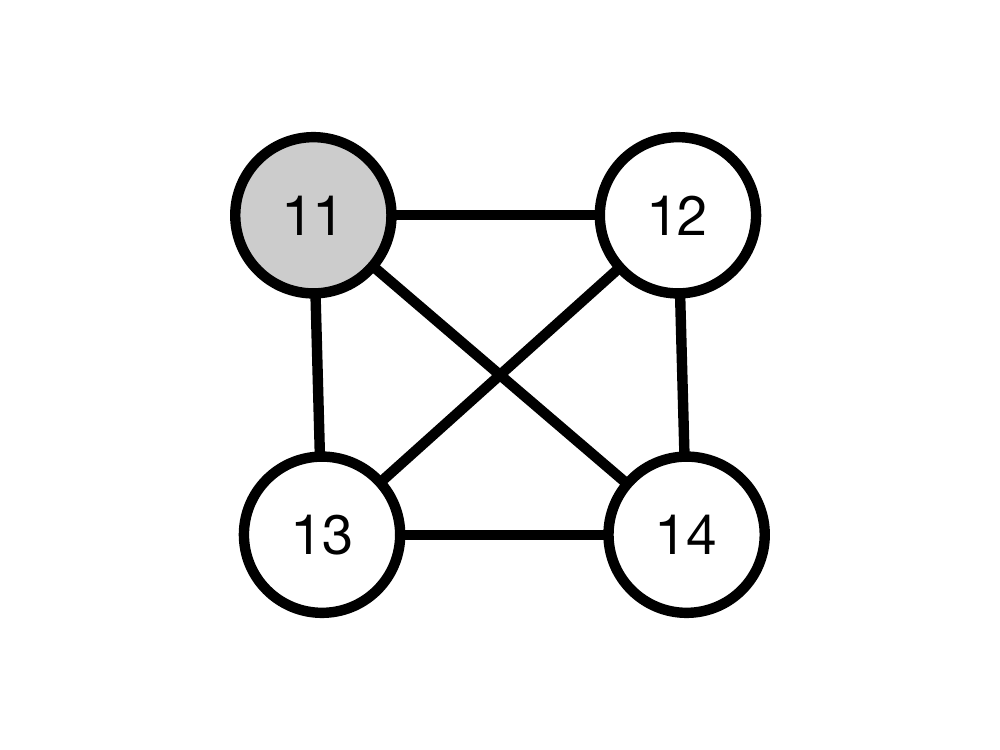}\label{fig:ap-sibship-net}}
\subfloat[]{\includegraphics[width=0.33\textwidth,height=\textheight]{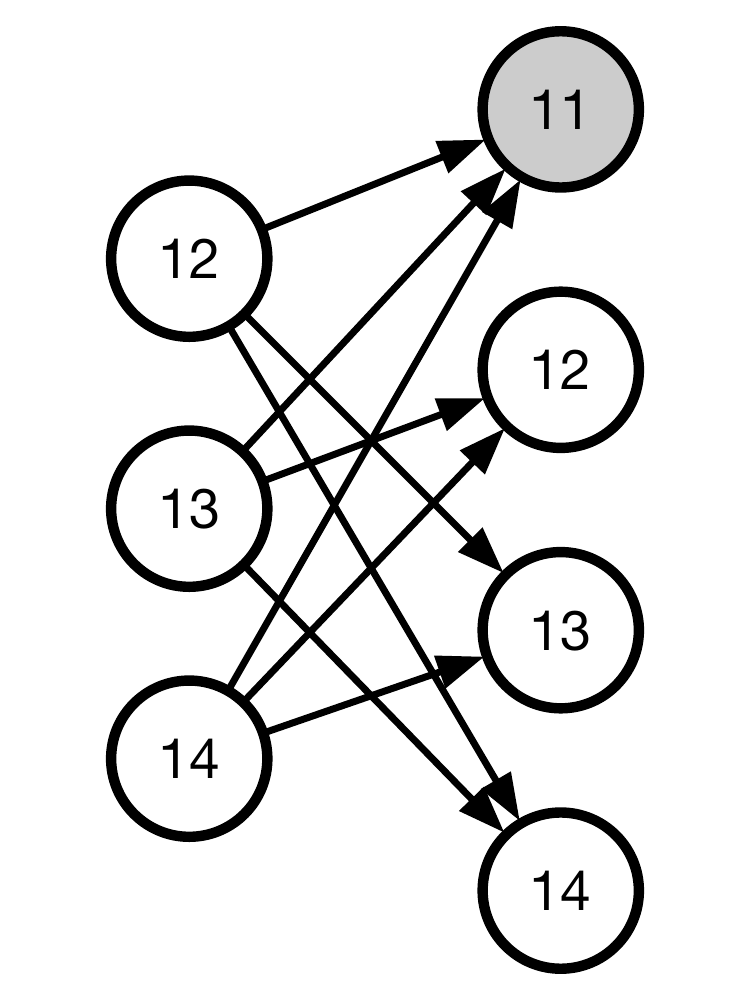}\label{fig:ap-sibship-reporting}}

\subfloat[]{\includegraphics[width=0.33\textwidth,height=\textheight]{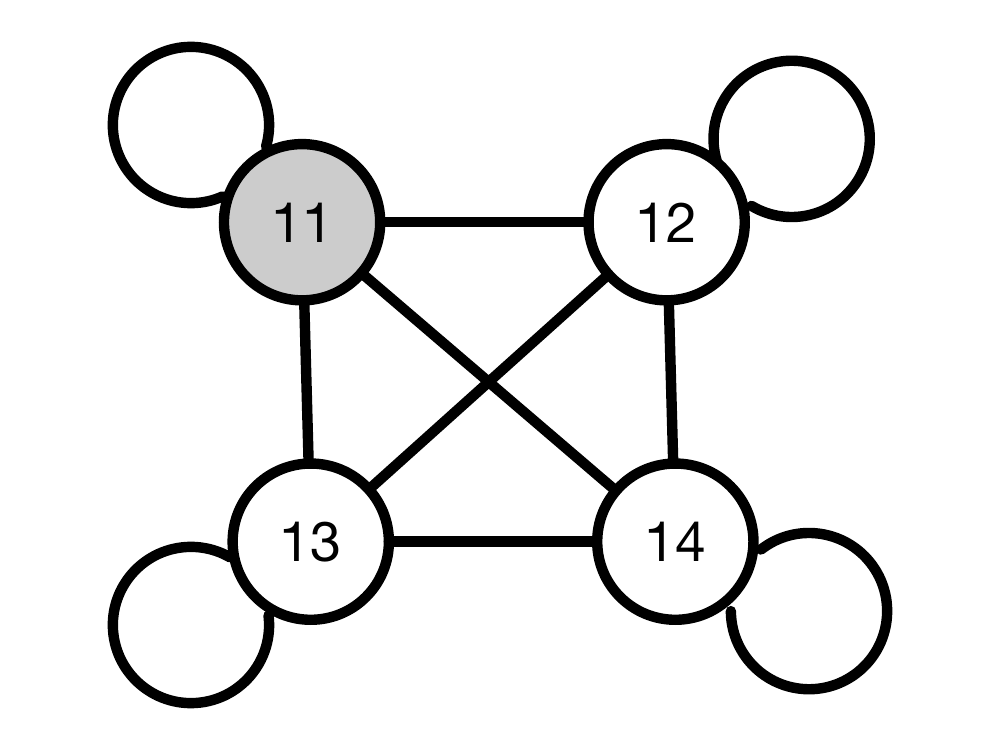}\label{fig:ap-sibship-net-withresp}}
\subfloat[]{\includegraphics[width=0.33\textwidth,height=\textheight]{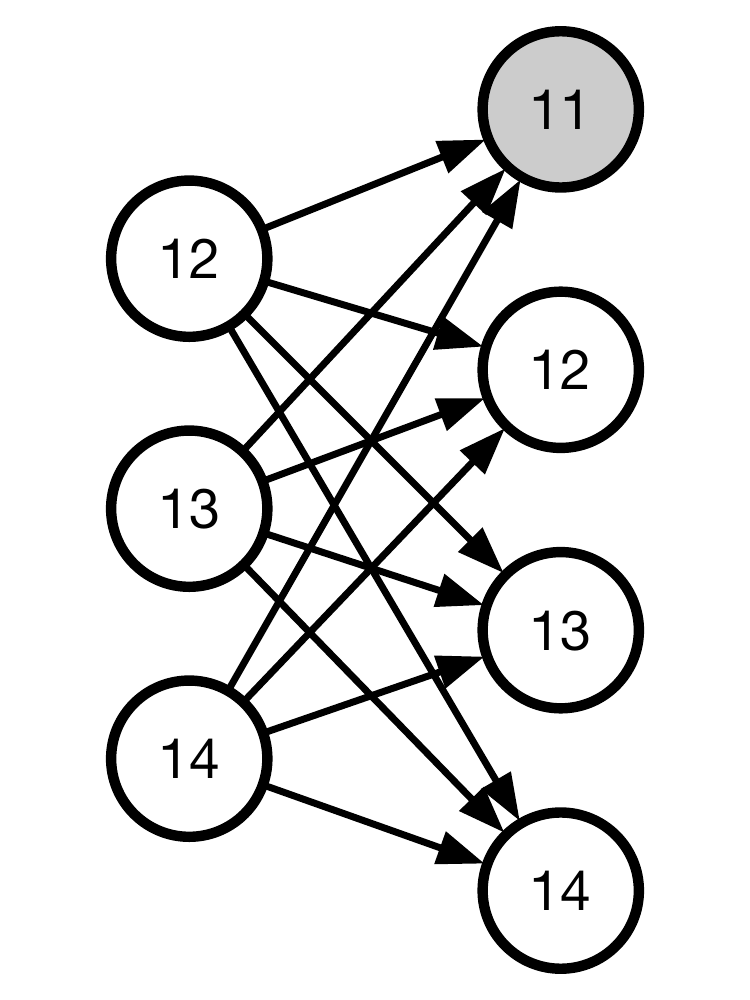}\label{fig:ap-sibship-reporting-withresp}}

\caption{Network reporting when respondents are not included in their
reports. (a) A single sibship. (b) The reporting network for the sibship
in (a), when respondents are not included in reports. (c) The same
sibship as (a) but with each sibling considered connected to herself.
(d) The reporting network for the sibship in (c), when respondents are
included in their reports.}

\label{fig:respondent-inclusion}

\end{figure}

Figure~\ref{fig:respondent-inclusion} illustrates the intuition behind
the results we derive. The Figure shows sibships and reporting networks
under the usual network reporting situation, where respondents are not
included in their reports (panels a and b), and in an alternate
situation, where respondents are included in their reports (panels c and
d). Figure~\ref{fig:respondent-inclusion} illustrates the fact that
whether or not the respondent includes herself in reports will only
affect the visibility of siblings who are in the frame population: the
visibility of node 11, which is dead (and thus not in the frame
population) is unchanged under the two different scenarios. The other
three nodes, on the other hand, have visibility 2 in panel b, when
respondents are not included in their reports; and they have visibility
3 in panel d, when respondents are included in their reports.

\hypertarget{sec:ind-vis}{%
\subsection{Expressions for individual visibility}\label{sec:ind-vis}}

Individual visibility estimation is based on separately adjusting for
the visibility of each person being reported about. Below, in Appendix
\ref{sec:ind-vis-deathrate}, we will use the expressions developed here
to derive estimators for deaths and exposure using an individual
visibility approach.

\textbf{When the respondent is included in the reports}

When the respondent is included in the reports, a reporting identity
says that for person \(i\) in sibship \(\sigma[i]\),

\begin{equation}
\begin{aligned}
v^{\prime}(i, \sigma[i] \cap F) &= y^{\prime+}(\sigma[i] \cap F, i),
\end{aligned}
\label{eq:nrid-includeresp}\end{equation}

where we use the notation \(v^\prime\) for visibilities when respondents
include themselves in reports and \(y^\prime\) for reports when
respondents include themselves.

When reports are perfect, each sibling will be reported once for each
member of the sibship on the frame population, so
Equation~\ref{eq:nrid-includeresp} can be written as

\begin{equation}
\begin{aligned}
v^{\prime}(i, \sigma[i] \cap F) &= y^{\prime+}(\sigma[i] \cap F, i)\\
&= d^{\prime}(\sigma[i] \cap F, i) && \text{(if reporting is perfect)} \\
&= |\sigma \cap F|\\
&= \bar{y}(\sigma[i] \cap F, \sigma[i] \cap F) + 1\\
&= y(j, F) + 1 && \text{for any $j \in \sigma[i] \cap F$}.
\end{aligned}
\label{eq:vis-includeresp-perfect}\end{equation}

Equation~\ref{eq:vis-includeresp-perfect} shows that, when analyzing
reports about individual \(i\) made by respondent \(j\), \(i\)'s
visibility under perfect reporting is \(y(j, F) + 1\).

\textbf{When the respondent is not included in the reports}

When the respondent does not include herself in her reports, a reporting
identity says that, for population member \(i\) in sibship
\(\sigma[i]\),

\begin{equation}
\begin{aligned}
v(i, \sigma[i] \cap F) &= y^{+}(\sigma[i] \cap F, i).
\end{aligned}
\label{eq:nrid}\end{equation}

When reporting is perfect, \(y^{+}(\sigma \cap F, i)\) is the number of
\(i\)'s siblings who are on the frame population. Within a given
sibship, this quantity will depend upon whether or not \(i\) is herself
on the frame population: in a sibship \(\sigma\) with
\(|\sigma \cap F|\) members in \(F\), there will be \(|\sigma \cap F|\)
siblings who can report a sibling \(i\) who is not in \(F\); on the
other hand, when \(i\) is in \(F\), then \(i\) counts towards the size
of \(|\sigma \cap F|\) and so there are \(|\sigma \cap F| - 1\)
\emph{other} siblings who can report \(i\).

Mathematically, when reports are perfect, Equation~\ref{eq:nrid} can be
written as

\begin{equation}
\begin{aligned}
v(i, \sigma \cap F) &= y^{+}(\sigma \cap F,i)\\
&= 
\begin{cases}
| \sigma \cap F |  & \text{if } i \notin F\\
|\sigma \cap F| - 1  & \text{if } \in F
\end{cases}
&& \text{(if reporting is perfect)}\\
&= 
\begin{cases}
\bar{y}(\sigma \cap F, \sigma \cap F) + 1 & \text{if } i \notin F\\
\bar{y}(\sigma \cap F, \sigma \cap F) & \text{if } \in F
\end{cases}\\
&= 
\begin{cases}
y(j, F) + 1 & \text{if } i \notin F\\
y(j, F) & \text{if } \in F
\end{cases}
&&
\text{for any $j \in \sigma[i] \cap F$}.
\end{aligned}
\label{eq:vis-noresp-perfect}\end{equation}

Equation~\ref{eq:vis-noresp-perfect} shows that, when analyzing reports
about individual \(i\) made by respondent \(j\), \(i\)'s visibility
under perfect reporting is either \(y(j, F)\) or \(y(j, F) + 1\),
depending on whether \(i\) is or is not in the frame population.

\hypertarget{individual-visibility-estimands-for-mortality}{%
\subsection{Individual visibility estimands for
mortality}\label{individual-visibility-estimands-for-mortality}}

\hypertarget{individual-visibility-estimation-including-the-respondent}{%
\subsubsection{Individual visibility estimation including the
respondent}\label{individual-visibility-estimation-including-the-respondent}}

When the respondent is included in the sibling reports, and when reports
are perfect, Equation~\ref{eq:vis-includeresp-perfect} shows that the
visibility of each sibling is
\(\bar{y}(\sigma \cap F, \sigma \cap F) + 1\). Note that when reports
are perfect,
\(\bar{y}(\sigma \cap F, \sigma \cap F) = y(i, \sigma \cap F)\) and
\(y^{\prime}(i, \sigma \cap F) = y(i, \sigma \cap F) + 1\) for all
\(i \in \sigma \cap F\).

\textbf{Estimand for deaths using reports that include the respondent}:

\begin{equation}
\begin{aligned}
D^{\prime V}_{\alpha,ind} &= 
\sum_{i \in F} \frac{y^\prime(i, D_\alpha)}{y(i,F) + 1}\\
&= \sum_{i \in F} \frac{y^\prime(i, D_\alpha)}{y^\prime(i,F)} 
\end{aligned}
\label{eq:ind-deaths-withresp}\end{equation}

\textbf{Estimand for exposure using reports that include the
respondent}:

\begin{equation}
\begin{aligned}
N^{\prime V}_{\alpha,ind} &= 
\sum_{i \in F} \frac{y^\prime(i, N_\alpha)}{y(i,F) + 1}\\
&= \sum_{i \in F} \frac{y^\prime(i, N_\alpha)}{y^\prime(i,F)} 
\end{aligned}
\label{eq:ind-exposure-withresp}\end{equation}

\textbf{Estimand for the death rate using reports that include the
respondent}:

\begin{equation}
\begin{aligned}
M^{\prime V}_{\alpha,ind}
&= \frac{D^{\prime V}_{\alpha,ind}}{N^{\prime V}_{\alpha,ind}}\\
&= \frac{
\sum_{i \in F} \frac{y^\prime(i, D_\alpha)}{y^\prime(i,F)}
}{
\sum_{i \in F} \frac{y^\prime(i, N_\alpha)}{y^\prime(i,F)}
}
\end{aligned}
\label{eq:ind-rate-withresp}\end{equation}

\hypertarget{individual-visibility-estimands-not-including-the-respondent}{%
\subsubsection{Individual visibility estimands not including the
respondent}\label{individual-visibility-estimands-not-including-the-respondent}}

When the respondent does not include herself in reports about deaths and
exposure, and when reports are perfect,
Equation~\ref{eq:vis-noresp-perfect} shows that the visibility of a
member \(i\) of a sibship \(\sigma\) reported by respondent
\(j \in \sigma[i] \cap F\) is given by:

\[
\begin{aligned}
v(i, \sigma \cap F) 
&= 
\begin{cases}
| \sigma \cap F |  = \bar{y}(\sigma \cap F, \sigma \cap F) + 1 & \text{if } i \notin F\\
|\sigma \cap F| - 1 = \bar{y}(\sigma \cap F, \sigma \cap F) & \text{if } \in F
\end{cases}
&& \text{(if reporting is perfect)}.
\end{aligned}
\]

Since \(\bar{y}(\sigma \cap F, \sigma \cap F) = y(j, \sigma \cap F)\)
for each \(j \in \sigma \cap F\) when reports are perfect, we use the
reported quantities \(y(j, \sigma \cap F)\) and
\(1 + y(j, \sigma \cap F)\) to estimate visibilities for siblings in the
frame and not in the frame, respectively.

\textbf{Estimand for deaths using reports that don't include the
respondent}:

\begin{equation}
\begin{aligned}
D^V_{\alpha,ind} 
&= 
\sum_{i \in F} \left[
\underbrace{
\frac{y(i, D_\alpha \cap F)}{y(i,F)} 
}_{\text{$D_\alpha \cap F = \emptyset$}}
+ 
\underbrace{
\frac{y(i, D_\alpha -F)}{y(i,F) + 1}
}_{\text{siblings not in $F$}}
\right] && \text{(when reports are perfect)}\\
&=
\sum_{i \in F} \frac{y(i, D_\alpha)}{y(i,F) + 1}.
\end{aligned}
\label{eq:ind-deaths-noresp}\end{equation}

In going from the first line to the second, we make use of the fact that
deaths cannot be on the frame population, so that
\(y(i, D_\alpha \cap F) = 0\) for all \(i\), and so
\(y(i, D_\alpha) = y(i, D_\alpha -F)\) for all \(i\).

\textbf{Estimand for exposure using reports that don't include the
respondent}:

\begin{equation}
\begin{aligned}
N^V_{\alpha,ind} &= 
\sum_{i \in F} \left[
\underbrace{
\frac{y(i, N_\alpha \cap F)}{y(i,F)} 
}_{\text{siblings in $F$}}
+ 
\underbrace{
\frac{y(i, N_\alpha -F)}{y(i,F) + 1}
}_{\text{siblings not in $F$}}
\right] &&
\text{(when reports are perfect)}.
\end{aligned}
\label{eq:ind-exposure-noresp}\end{equation}

\textbf{Estimand for the death rate using reports that don't include the
respondent}:

The estimand for the death rate is the ratio of the estimand for the
number of deaths and the estimand for the exposure:

\begin{equation}
\begin{aligned}
M^V_{\alpha,ind}  
&= \frac{
\sum_{i \in F} \frac{y(i, D_\alpha)}{y(i,F) + 1}.
}{
\sum_{i \in F} \left[
\frac{y(i, N_\alpha \cap F)}{y(i,F)} 
+ 
\frac{y(i, N_\alpha -F)}{y(i,F) + 1}
\right].
}
\end{aligned}
\label{eq:ind-rate-noresp}\end{equation}

\hypertarget{sec:agg-vis}{%
\subsection{Expressions for aggregate visibility}\label{sec:agg-vis}}

Aggregate visibility estimation is based on adjusting for the average
visibility of the people being reported about. In this section, we
develop expressions for the average visibility of siblings who are in
some group \(A \subset U\). By deriving results for a general group
\(A\), we will obtain expressions that can be readily used for both
deaths and exposure (see Appendix \ref{sec:agg-vis-deathrate}).

\textbf{When the respondent is included in the reports}

When reports include the respondent, the average visibility of siblings
in a group \(A\) is:

\begin{equation}
\begin{aligned}
\bar{v}^\prime(A, F)
&= |A|^{-1} \sum_{i \in A} v^{\prime}(i, F)\\
&= |A|^{-1} \sum_{i \in A} y^{\prime+}(F,i).
\end{aligned}
\label{eq:vis-includeresp-agg}\end{equation}

When reporting is perfect, this becomes

\begin{equation}
\begin{aligned}
\bar{v}^\prime(A, F)
&= |A|^{-1} \sum_{i \in A} y^{\prime+}(F,i)\\
&= |A|^{-1} \sum_{i \in A} 
\left[ 
\bar{y}(\sigma[i] \cap F, \sigma[i] \cap F) + 1
\right] &&
\text{(if reporting is perfect)}\\
&= 
|A|^{-1} 
\sum_{\sigma \in \Sigma} \sum_{i \in \sigma \cap A} 
|\sigma \cap F|\\
&= 
|A|^{-1} \sum_{\sigma \in \Sigma} 
\left[ 
|\sigma \cap A|~|\sigma \cap F| 
\right]\\
&= 
\sum_{\sigma \in \Sigma} 
\underbrace{
  \frac{|\sigma \cap A|}{|A|}
  }_{
  \substack{\text{fraction of} \\ \text{$A$ in} \\ \text{in sibship $\sigma$}}
  } 
\underbrace{
  |\sigma \cap F|
}_{
  \substack{\text{\# sibship} \\ \text{members} \\ \text{on frame}}
}.
\end{aligned}
\label{eq:vis-includeresp-agg-perfect}\end{equation}

Equation~\ref{eq:vis-includeresp-agg-perfect} shows that the visibility
can be expressed as a weighted average of the number of siblings on the
frame across all of the sibships in the population, where the weights
are given by the proportion of \(A\) in each sibship.

Unlike the results for individual visibility estimation, we see no way
to convert Equation~\ref{eq:vis-includeresp-agg-perfect} into a
sample-based estimator for visibility. As we will see below, when
estimating a death rate using an aggregate visibility estimator, we will
take the ratio of two aggregate visibility estimators. Under the
assumption that the visibility of the numerator and denominator are the
same the visibilities will cancel, so that aggregate visibility does not
need to be directly estimated.

\textbf{When the respondent is not included in the reports}

When reports do not include the respondent, the average visibility of
siblings in a group \(A\) is: \[
\begin{aligned}
\bar{v}(A, F) &= |A|^{-1} \sum_{i \in A} v(i,F)\\
&= |A|^{-1} \sum_{i \in A} y^{+}(F,i)\\
&= |A|^{-1} \sum_{\sigma \in \Sigma} \sum_{i \in \sigma \cap A} y^{+}(F,i).
\end{aligned}
\]

When reporting is perfect, this becomes

\begin{equation}
\begin{aligned}
\bar{v}(A, F)  
&= |A|^{-1} \sum_{\sigma \in \Sigma} \sum_{i \in \sigma \cap A} y^{+}(F,i)\\
&= |A|^{-1} 
\sum_{\sigma \in \Sigma}
\left[ 
\underbrace{
\sum_{i \in \sigma \cap A \cap F} \bar{y}(\sigma[i] \cap F, \sigma[i] \cap F)
}_{\text{members of $A$ on frame}}
+
\underbrace{
\sum_{i \in \sigma \cap A -F} (\bar{y}(\sigma[i] \cap F, \sigma[i] \cap F) + 1)
}_{\text{members of $A$ not on frame}}
\right] &&
\text{(if reporting is perfect)}\\
&= |A|^{-1} 
\sum_{\sigma \in \Sigma}
\left[ 
\sum_{i \in \sigma \cap A} \bar{y}(\sigma[i] \cap F, \sigma[i] \cap F)
+
\sum_{i \in \sigma \cap A -F} 1
\right]\\
&= 
  \sum_{\sigma \in \Sigma}
  \frac{|\sigma \cap A|}{|A|} 
  \bar{y}(\sigma \cap F, \sigma \cap F) +
  \sum_{\sigma \in \Sigma} \frac{|\sigma \cap A -F|}{|A|}
\\
&=
  \sum_{\sigma \in \Sigma}
  \frac{|\sigma \cap A|}{|A|}
  (|\sigma \cap F| - 1) +
  \sum_{\sigma \in \Sigma}
  \frac{|\sigma \cap A -F|}{|A|} \\
  &=
  \sum_{\sigma \in \Sigma}
  \left[
  \frac{|\sigma \cap A|}{|A|}
  |\sigma \cap F| - 
  \frac{|\sigma \cap A|}{|A|} +
  \frac{|\sigma \cap A -F|}{|A|}
  \right]
  \\
  &=
  \sum_{\sigma \in \Sigma}
  \frac{|\sigma \cap A|}{|A|}
  \left[
  |\sigma \cap F| - \frac{|\sigma \cap F \cap A|}{|\sigma \cap A|}
  \right].
\end{aligned}
\label{eq:vis-noresp-agg-perfect}\end{equation}

Comparing Equation~\ref{eq:vis-noresp-agg-perfect} to the analogous
expression we derived for the situation where respondents do include
themselves in reports (Equation~\ref{eq:vis-includeresp-agg-perfect}),
we find that the two expressions differ according to the second factor
in the sum: when respondents include themselves in reports, this second
factor is always the number of siblings in the frame population,
\(|\sigma \cap F|\); when respondents do not include themselves, this
second factor is the number of siblings in the frame population minus
the proportion of siblings in \(A\) that is on the frame population,
i.e., \(|\sigma \cap F| - |\sigma \cap F \cap A|/|\sigma \cap A|\).

Note also that, by using the average visibility we derived for reports
where respondents include themselves
(Equation~\ref{eq:vis-includeresp-agg-perfect}), we have an alternate
way to write the result in Equation~\ref{eq:vis-noresp-agg-perfect}:

\begin{equation}
\begin{aligned}
\bar{v}(A, F)  
  &=
\sum_{\sigma \in \Sigma}
\frac{|\sigma \cap A|}{|A|}
\left[
|\sigma \cap F| - \frac{|\sigma \cap F \cap A|}{|\sigma \cap A|}
\right]\\
  &=
\bar{v}^{\prime}(A,F) - \frac{|F \cap A|}{|A|}.
\end{aligned}
\label{eq:vis-noresp-agg-perfect-v2}\end{equation}

Equation~\ref{eq:vis-noresp-agg-perfect-v2} shows that the aggregate
visibility when respondents are not included in reports is equal to the
aggregate visibility when respondents are included in reports, minus the
fraction of the set \(A\) that is in the frame population.

\hypertarget{aggregate-visibility-estimands-for-mortality}{%
\subsection{Aggregate visibility estimands for
mortality}\label{aggregate-visibility-estimands-for-mortality}}

In Section~\ref{sec:agg-vis}, we saw that the expressions for the
average visibility did not readily lend themselves to forming
estimators. In the situation where we wish to estimate death rates,
however, the condition that visibility is the same for deaths and for
exposure leads to an estimator where the visibilities cancel, meaning
that they do not have to be directly estimated. We can then investigate
the sensitivity of estimated death rates to different visibilities as
part of the broader sensitivity framework
(Section~\ref{sec:agg-vis-deathrate}).

For the estimands below, we write aggregate visibilities for a set \(A\)
as \(\bar{v}(A,F)\) and \(\bar{v}^{\prime}(A,F)\), bearing in mind that
this cancellation will take place in the estimand for the death rate.

\hypertarget{aggregate-visibility-estimands-not-including-the-respondent}{%
\subsubsection{Aggregate visibility estimands not including the
respondent}\label{aggregate-visibility-estimands-not-including-the-respondent}}

\textbf{Estimand for deaths using reports that don't include the
respondent}

\begin{equation}
\begin{aligned}
D^V_{\alpha,agg} &= 
\frac{y_{F, D_\alpha}}{\bar{v}_{D_\alpha,F}}.
\end{aligned}
\label{eq:agg-deaths-noresp}\end{equation}

\textbf{Estimand for exposure using reports that don't include the
respondent}

\begin{equation}
\begin{aligned}
N^V_{\alpha,agg} &= 
\frac{y_{F, N_\alpha}}{\bar{v}_{N_\alpha,F}}.
\end{aligned}
\label{eq:agg-exposure-noresp}\end{equation}

\textbf{Estimand for the death rate using reports that don't include the
respondent}

\begin{equation}
\begin{aligned}
M^V_{\alpha,agg} &= 
\frac{D^V_{\alpha,agg}}{N^V_{\alpha,agg}}\\
&= \frac{y_{F,D_\alpha}}{\bar{v}_{D_\alpha, F}}~\frac{\bar{v}_{N_\alpha,F}}{y_{F,N_\alpha}}\\
&= \frac{y_{F, D_\alpha}}{y_{F, N_\alpha}}~\frac{\bar{v}_{N_\alpha,F}}{\bar{v}_{D_\alpha,F}}.
\end{aligned}
\label{eq:agg-rate-noresp}\end{equation}

\hypertarget{aggregate-visibility-estimands-including-the-respondent}{%
\subsubsection{Aggregate visibility estimands including the
respondent}\label{aggregate-visibility-estimands-including-the-respondent}}

\textbf{Estimand for deaths using reports that include the respondent}

\begin{equation}
\begin{aligned}
D^{\prime V}_{\alpha,agg} &= 
\frac{y^\prime_{F, D_\alpha}}{\bar{v}^\prime_{D_\alpha,F}}.
\end{aligned}
\label{eq:agg-deaths-withresp}\end{equation}

\textbf{Estimand for exposure using reports that include the
respondents}

\begin{equation}
\begin{aligned}
N^{\prime V}_{\alpha,agg} &= 
\frac{y^\prime_{F, N_\alpha}}{\bar{v}^\prime_{N_\alpha,F}}.
\end{aligned}
\label{eq:agg-exposure-withresp}\end{equation}

\textbf{Estimand for the death rate using reports that include the
respondent}

\begin{equation}
\begin{aligned}
M^{\prime V}_{\alpha,agg} 
&= \frac{D^{\prime V}_{\alpha,agg}}{N^{\prime V}_{\alpha,agg}}\\
&= \frac{y^\prime_{F,D_\alpha}}{\bar{v}^\prime_{D_\alpha, F}}~\frac{\bar{v}^\prime_{N_\alpha,F}}{y^\prime_{F,N_\alpha}}\\
&= \frac{y^\prime_{F, D_\alpha}}{y^\prime_{F, N_\alpha}}~\frac{\bar{v}^\prime_{N_\alpha,F}}{\bar{v}^\prime_{D_\alpha,F}}.
\end{aligned}
\label{eq:agg-rate-withresp}\end{equation}

\hypertarget{sensitivity-to-invisible-deaths-1}{%
\section{Sensitivity to invisible
deaths}\label{sensitivity-to-invisible-deaths-1}}

Reports about siblings can only tell us about the \emph{visible}
population -- i.e., the group of people who have siblings on the frame
population who can provide information about their survival. In this
Appendix, we develop expressions that relate the death rate in the
visible population to the death rate in the entire population. This
expression will help researchers understand how different death rates in
the visible population can be expected to be from death rates in the
entire population.

In order to analyze the sensitivity of sibling survival estimates to
invisible deaths, we need to develop notation that can be used to
distinguish between visible and invisible deaths. For a demographic
group \(\alpha\) (for example, women aged 15-25 in 2018), let
\begin{equation}
\begin{aligned}
p^V_{D_\alpha} &= \frac{D^V_{\alpha}}{D^V_{\alpha} + D^I_{\alpha}}, && \text{be the fraction of deaths that is visible;}\\
p^V_{N_\alpha} &= \frac{N^V_{\alpha}}{N^V_{\alpha} + N^I_{\alpha}}, && \text{be the fraction of exposure that is visible.} 
\end{aligned}
\label{eq:pvisdeath}\end{equation} We define analogous quantities for
the fraction of deaths and exposure that is invisible,
\(p^I_{D_\alpha}\) and \(p^I_{N_\alpha}\).

Note that \begin{equation}
\begin{aligned}
\frac{p^V_{D_\alpha}}{p^V_{N_\alpha}} &=
\frac{D^V_{\alpha}}{D^V_{\alpha} + D^I_{\alpha}} \times
\frac{N^V_{\alpha} + N^I_{\alpha}}{N^V_{\alpha}}\\
&= \frac{M^V_{\alpha}}{M_{\alpha}}.
\end{aligned}
\label{eq:p-d-over-p-a}\end{equation}

Thus, the ratio of the fraction of deaths that is visible to the
fraction of exposure that is visible is equal to the ratio of the
visible death rate to the total death rate.

Result \ref{res:agghm} shows that the total death rate \(M_\alpha\) can
be understood as a weighted harmonic mean of the invisible death rate
\(M^I_\alpha\) and the visible death rate \(M^V_{\alpha}\), where the
weights are given by the number of visible and invisible deaths. We now
use this insight to develop Result \ref{res:sens-invis}, which helps us
understand the formal relationship between the invisible death rate, the
visible death rate, and the total death rate.

~

\begin{Result}

\label{res:sens-invis}

Suppose that, for a demographic group \(\alpha\), the invisible death
rate (\(M^I_\alpha\)) and the visible death rate (\(M^V_\alpha\)) differ
by a factor of \(K\), so that \begin{equation}
M^I_\alpha = K M^V_\alpha
\label{eq:kfactor}\end{equation} for \(K > 0\). Then \begin{equation}
\begin{aligned}
M_\alpha &= M_\alpha^V \left[ \frac{K}{p^I_{D_{\alpha}} + K (1-p^I_{D_{\alpha}})} \right],
\end{aligned}
\label{eq:invistotaldiff}\end{equation} where \(p^I_{D_{\alpha}}\) is
the proportion of deaths that is invisible.

\end{Result}

\begin{proof}

Using the fact that \(M_\alpha\) is the weighted harmonic mean of
\(M^I_\alpha\) and \(M^V_\alpha\), we find \[
\begin{aligned}
M_\alpha &= \left[ \frac{p^I_{D_{\alpha}}}{M^I_\alpha} + \frac{p^V_{D_\alpha}}{M^V_\alpha} \right]^{-1}\\
&= \left[ \frac{p^I_{D_{\alpha}}}{K M^V_\alpha} + \frac{p^V_{D_\alpha}}{M^V_\alpha} \right]^{-1}\\
&= \left[ \frac{p^I_{D_{\alpha}} + K p^V_{D_\alpha}}{K M^V_\alpha} \right]^{-1}\\
&= M_\alpha^V \left[ \frac{K}{p^I_{D_{\alpha}} + K p^V_{D_\alpha}} \right]\\
&= M_\alpha^V \left[ \frac{K}{p^I_{D_{\alpha}} + K (1-p^I_{D_{\alpha}})} \right].
\end{aligned}
\]

\end{proof}

Result \ref{res:sens-invis} reveals that there is a relationship between
between \(K\), the difference between the visible and invisible death
rates, and \(p^I_{D_{\alpha}}\), which is related to the number of
invisible deaths relative to the number of visible deaths. Equation
\ref{eq:invistotaldiff} shows that

\begin{itemize}
\tightlist
\item
  when \(K = 1\), \(M_\alpha^V = M_\alpha\)
\item
  when \(p^I_{D_{\alpha}} = 0\), \(p^V_{D_\alpha}=1\) and so
  \(M_\alpha^V = M_\alpha\)
\end{itemize}

It can also be helpful to use Result \ref{res:sens-invis} to obtain an
expression for the relative error that would follow from using the
visible death rate \(M^V_\alpha\) as an estimate of the total death rate
\(M_\alpha\):

\begin{equation}
\begin{aligned}
\frac{M^V_\alpha - M_\alpha}{M_\alpha} &= \frac{M^V_\alpha}{M_\alpha} - 1\\
&= \frac{p^V_{D_{\alpha}} + K(1 - p^V_{D_{\alpha}})}{K} - 1\\
&= p^I_{D_\alpha} \left(\frac{1-K}{K}\right).
\end{aligned}
\label{eq:sens-invis-relerr}\end{equation}

In order to further develop intuition about how large we might expect
biases due to invisible deaths to be, we can investigate different
scenarios. For example, suppose that 10\% of deaths are invisible, and
the death rate is 20\% higher among the invisible population than among
the visible population. Then \(K = 1.2\), \(p^I_{D_{\alpha}} = 0.1\),
and the relative error calculated from Equation
\ref{eq:sens-invis-relerr} is about -.017; in other words, in this
scenario, death rate estimates based on the visible population alone
will be too low by about 1.7 percent.

Figure~\ref{fig:invis-death-sens} illustrates this relative error for a
range of values of \(K\) and \(p^I_{D_\alpha}\).

\begin{figure}
\hypertarget{fig:invis-death-sens}{%
\centering
\includegraphics[width=\textwidth,height=0.5\textheight]{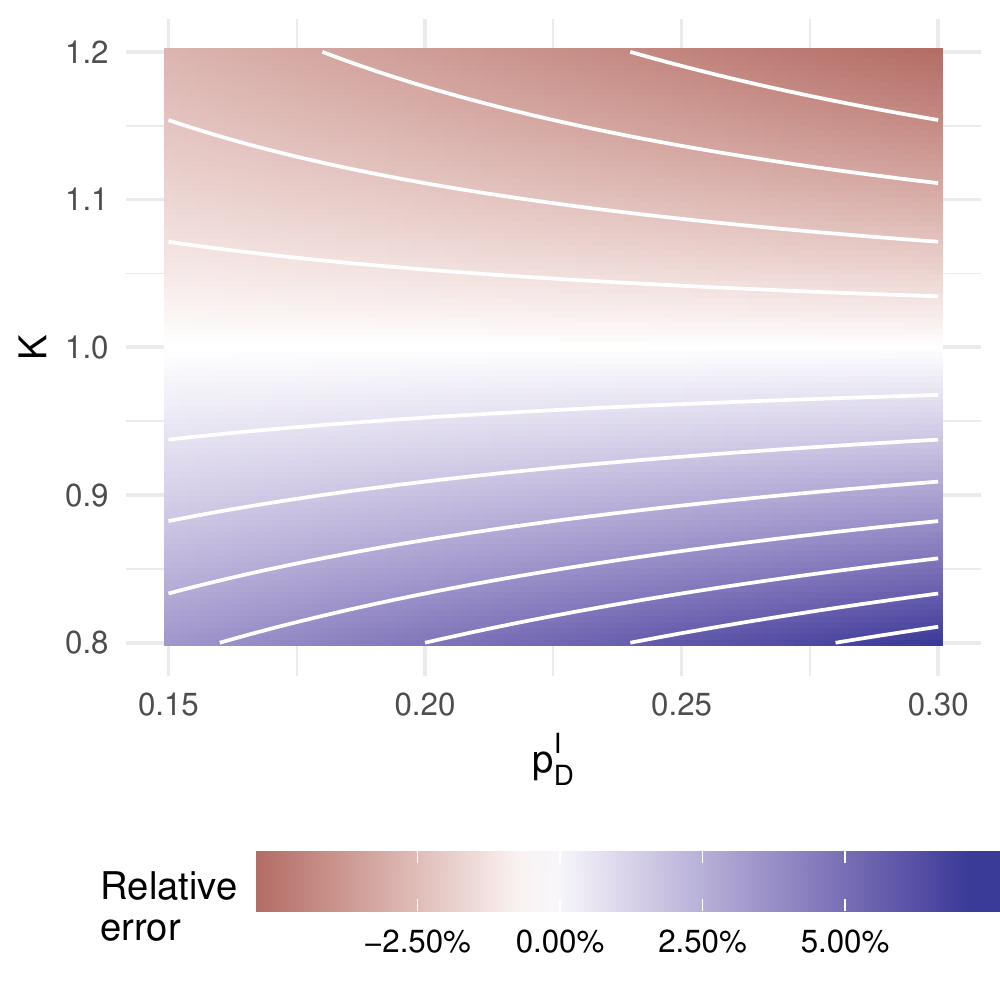}
\caption{Illustration of the relative error in using the visible death
rate \(M^V\) as an estimate for the total death rate \(M\). The
proportion of deaths that is invisible, \(p^I_{D_{\alpha}}\), varies
along the x axis; the relationship between the visible and invisible
death rates, captured by the parameter \(K\)
(Equation~\ref{eq:kfactor}), varies along the y axis. The colors show
the percentage relative error; so if 20\% of deaths are invisible
(\(p^I_{D_\alpha} = 0.2\)) and the invisible death rate is 10\% higher
than the visible death rate (\(K=1.1\)), the relative error is about 2
percent. Relative error increases as \(K\) gets farther away from 1 and
as \(p^V_{D_{\alpha}}\) increases.}\label{fig:invis-death-sens}
}
\end{figure}

Next, Result \ref{res:sens-invis-alt} provides a second expression that
analyzes the formal relationship between the invisible death rate, the
visible death rate, and the total death rate; this second result is
parameterized in terms of \(p^I_{N_\alpha}\), the proportion of exposure
that is invisible. This is the relationship used in the main text.

~

\begin{Result}

\label{res:sens-invis-alt}

Suppose that, for a demographic group \(\alpha\), the invisible death
rate (\(M^I_\alpha\)) and the visible death rate (\(M^V_\alpha\)) differ
by a factor of \(K\), so that \begin{equation}
M^I_\alpha = K M^V_\alpha
\label{eq:kfactor-alt}\end{equation} for \(K > 0\). Then
\begin{equation}
\begin{aligned}
M_\alpha &= M_\alpha^V \left[ 1 + p^I_{N_\alpha}(K-1) \right],
\end{aligned}
\label{eq:invistotaldiff-alt}\end{equation} where \(p^I_{N_{\alpha}}\)
is the proportion of exposure that is invisible.

\end{Result}

\begin{proof}

Using the fact that \(M_\alpha\) is the weighted arithmetic mean of
\(M^I_\alpha\) and \(M^V_\alpha\), we find \[
\begin{aligned}
M_\alpha &= p^V_{N_\alpha} M^V_{\alpha} + p^I_{N_\alpha} M^I_{\alpha}\\
&= (1 - p^I_{N_\alpha}) M^V_{\alpha} + p^I_{N_\alpha} K M^V_{\alpha}\\
&= M^V_{\alpha} - M^V_{\alpha} p^I_{N_\alpha} + K p^I_{N_\alpha} M^V_{\alpha}\\
&= M^V_{\alpha} \left[1 + p^I_{N_\alpha} (K-1) \right].
\end{aligned}
\]

\end{proof}

Again, it can be helpful to use Result \ref{res:sens-invis-alt} to
obtain an expression for the relative error that would follow from using
the visible death rate \(M^V_\alpha\) as an estimate of the total death
rate \(M_\alpha\):

\begin{equation}
\begin{aligned}
\frac{M^V_\alpha - M_\alpha}{M_\alpha} &= 
\frac{M^V_\alpha}{M_\alpha} - 1\\
&= \frac{1}{1 + p^I_{N_\alpha}(K-1)} - 1\\
&= \frac{p^I_{N_\alpha}(1-K)}{1 - p^I_{N_\alpha}(1-K)}.
\end{aligned}
\label{eq:sens-invis-alt-relerr}\end{equation}

We can further develop intuition about how large we might expect biases
due to invisible deaths to be, we can investigate different scenarios.
For example, suppose that 10\% of exposure is invisible, and the death
rate is 20\% higher among the invisible population than among the
visible population. Then \(K = 1.2\), \(p^I_{N_{\alpha}} = 0.1\), and
the relative error from Equation~\ref{eq:sens-invis-alt-relerr} is about
-0.019; in other words, in this scenario, death rate estimates based on
the visible population alone will be too low by about 1.9 percent.

Figure~\ref{fig:invis-sens-surface-alt} illustrates this second
expression for the relative error over a range of values of \(K\) and
\(p^I_{N_\alpha}\).

\hypertarget{sec:sampling}{%
\section{Sampling}\label{sec:sampling}}

In Appendix \ref{sec:estimands}, we developed several estimands based on
sibling reports. These estimands describe quantities that could be
estimated from a census of the frame population. In practice,
researchers do not have a census of the frame population, but rather a
sample from the frame population. This section develops some results
that will be helpful in understanding how to develop sample-based
estimators for the estimands in Appendix \ref{sec:estimands}.

\hypertarget{sec:sampling-setup}{%
\subsection{Sampling setup}\label{sec:sampling-setup}}

We use the design-based sampling framework described in Sarndal,
Swensson, and Wretman (2003), repeating a few key definitions here for
convenience. We assume we have a probability sample \(s\) from a frame
population \(F\); common frame populations include all adults, all
adults aged 15-59, and in many DHS surveys, all women aged 15-59. The
random variable \(I_i\) takes the value 1 when \(i \in F\) is included
in the sample, and 0 otherwise. Each \(i \in F\) has a nonzero
\emph{probability of inclusion} \(\pi_i =  \mathbb{E} [I_i]\) and the
sampling weights are given by \(w_i = \frac{1}{\pi_i}\).

Suppose some quantity \(y_i\) is defined for every \(i \in F\). Then the
\emph{Horvitz Thompson estimator} for the population total
\(Y = \sum_{i \in F} y_i\) from a probability sample \(s\) is given by

\[
\widehat{Y} = \sum_{i \in s} w_i y_i.
\]

Sarndal, Swensson, and Wretman (2003) shows that Horvitz-Thompson
estimators are consistent and unbiased\footnote{In this paper, we use
  the framework of design-based sampling, so the properties of
  estimators -- such as unbiasedness and consistency -- are with respect
  to the probability sampling mechanism. There are many types of
  consistency; we refer in this work to design-consistency, also called
  Fisher consistency.}, a fact that will be useful below.

~

\begin{Result}

\label{res:fisher} Suppose a Horvitz-Thompson estimator \[
\widehat{Y}^{\text{HT}} = \sum_{i \in s} w_i y_i
\] is design-unbiased for a total \(Y\). Then
\(\widehat{Y}^{\text{HT}}\) is also (design) consistent for \(Y\).

\end{Result}

\begin{proof}

This result follows from taking the sampling design to assign
\(\pi_i = 1\) for all \(i \in F\). We then have \(s = F\) and
\(w_i = 1\) for all \(i\). Since the estimator is unbiased, design
consistency follows.

\end{proof}

~

Next, we state a Result that is helpful when devising estimators that
are ratios of other estimators.

~

\begin{Result}

\label{res:compound-ratio} Suppose that
\(\widehat{y}_1, \dots, \widehat{y}_n\) are estimators that are
consistent and unbiased for \(Y_1, \dots, Y_n\) respectively. Then the
compound ratio estimator

\[
\widehat{R} = \frac{\widehat{y}_1 \dots \widehat{y}_k}{\widehat{y}_{k+1} \dots \widehat{y}_{n}}.
\]

is consistent and essentially unbiased for
\(R = (Y_1 \dots Y_{k}) / (Y_{k+1} \dots Y_n)\).

\end{Result}

\begin{proof}

See Rao and Pereira (1968), Wolter (2007) (pg. 233), and Feehan and
Salganik (2016a) for more details.

\end{proof}

The references that derive the compound ratio estimator in Result
\ref{res:compound-ratio} show that, technically, the estimator is
biased. However, the bias has been shown to be of order \(O(n^{-1})\);
that is, the bias goes to 0 as the sample size \(n \to \infty\).
Further, Feehan and Salganik (2016a) investigated compound ratio bias in
network reporting estimators and found suggestive evidence that it is
likely to be very small. This finding is consistent with the literature
on ratio estimators, which has revealed that bias is often very small.
Thus, in these technical results, we refer to estimators whose only bias
arises from being compound ratios -- i.e., whose bias is of order
\(O(n^{-1})\) -- as \emph{essentially unbiased}.

\hypertarget{adjusting-for-visibility-in-samples}{%
\subsection{Adjusting for visibility in
samples}\label{adjusting-for-visibility-in-samples}}

In this section, we briefly state key results that explain which
conditions are required for individual and aggregate visibility
estimators to be consistent and unbiased. More detailed derivations of
these results can be found elsewhere.

\textbf{Individual visibility estimation from a probability sample}

~

\begin{Definition}

\label{res:def-ind-vis} The \textbf{individual visibility} estimator for
a total \(\boldsymbol{Y} = \sum_{i \in U} y_i\) is defined to be
\begin{equation}
\widehat{Y}^\text{ind} = \sum_{i \in s} w_i \sum_{j \sim i} \frac{y_j}{v_{j,F}},
\label{eq:ind-mult-def}\end{equation} where \(j \sim i\) indexes the
neighbors \(j\) of respondent \(i\), \(w_i\) is the design weight for
respondent \(i\), and \(v_{j,F}\) is the visibility of person \(j\) to
the frame population \(F\). In the situation where there are no errors
in reporting, the visibility \(v_{j,F}\) has also been called the
multiplicity of person \(j\) (Sirken 1970).

\end{Definition}

~

\begin{Result}

\label{res:ind-vis} Suppose that there are no false positive reports.
The individual visibility estimator is consistent and unbiased for the
total \(\boldsymbol{Y}\).

\end{Result}

\begin{proof}

First, we show that the estimator is unbiased. If \(\pi_i\) is \(i\)'s
probability of inclusion under the sampling design, then the design
weights are \(w_i = \frac{1}{\pi_i}\). Thus, we have \[
\begin{aligned}
\widehat{\boldsymbol{Y}}^{\text{ind}} 
&= \sum_{i \in s} \frac{1}{\pi_i}  \sum_{j \sim i} \frac{y_j}{v_{j,F}}\\
\Longleftrightarrow  \mathbb{E} [\widehat{\boldsymbol{Y}}^{\text{ind}}]
&= \sum_{i \in F}  \mathbb{E} [I_i] \frac{1}{\pi_i}  \sum_{j \sim i} \frac{y_j}{v_{j,F}}\\
&= \sum_{i \in F} \sum_{j \sim i} \frac{y_j}{v_{j,F}}\\
&= \sum_{l \in F} v_{l,F} \frac{y_l}{v_{l,F}}\\
&= \sum_{l \in F} y_l.
\end{aligned}
\] The last step follows because, as long as there are no false positive
reports, in a census of \(F\), each unit \(j\) appears once for each
time it is visible to \(F\); that is, \(v_{j,F} = y_{F,j}\) (see Feehan
and Salganik (2016a) for details).

Note that the derivation above reveals that the individual visibility
estimator can be written as a Horvitz-Thompson estimator (Thompson 2002;
Sirken 1970): to see how, define
\(z_i = \sum_{j \sim i} \frac{y_j}{v_{j,F}}\) for all \(i \in F\). The
individual visibility estimator in Equation~\ref{eq:ind-mult-def} then
becomes \(\widehat{Y} = \sum_{i \in s} w_i z_i\). Since
\(\widehat{\boldsymbol{Y}}^{\text{ind}}\) can be written as a
Horvitz-Thompson estimator, Result \ref{res:fisher} shows that
unbiasedness implies consistency.

\end{proof}

Good references for individual visibility estimators include Sirken
(1970) and Lavallee (2007).

~

\textbf{Aggregate visibility estimation from a probability sample}

~

\begin{Definition}

\label{res:def-agg-vis} The \textbf{aggregate visibility} estimator for
a total \(\boldsymbol{Y} = \sum_{i \in U} y_i\) is defined to be \[
\widehat{Y}^{\text{agg}} = \frac{\sum_{i \in s} w_i \sum_{j \sim i} y_j}{\widehat{\bar{v}}_{Y,F}},
\] where \(\bar{v}_{Y,F} = N^{-1} \sum_{l \in U} v_{l,F}\) is the
average of the individual visibilities of each person who could be
reported about in a census of \(F\).

\end{Definition}

~

The network scale-up estimator, the network survival estimator, and
related approaches are examples of estimators based on the idea of
aggregate visibility (Bernard et al. 1989; Feehan and Salganik 2016a;
Feehan, Mahy, and Salganik 2017).

~

\begin{Result}

\label{res:agg-vis} Suppose that there are no false positive reports.
Then the aggregate visibility estimator is consistent and essentially
unbiased for the total \(\boldsymbol{Y}\).

\end{Result}

\begin{proof}

See Feehan and Salganik (2016a).

\end{proof}

\hypertarget{sec:agg-vis-deathrate}{%
\section{Estimating death rates using aggregate
visibility}\label{sec:agg-vis-deathrate}}

This section presents results for death rate estimators that adjust for
visibility at the aggregate level; that is, the sum of reported
connections to siblings across respondents is adjusted for using the
average visibility of the reported siblings.

Section~\ref{sec:agg-vis} showed that there is no one obvious way to
estimate the aggregate visibility of deaths or exposure using data
collected from the sibling histories. However, if the visibility of
deaths is the same as the visibility of exposure, then the death rate
can be estimated without directly estimating the visibility of deaths or
exposure, since these two quantities will cancel. This condition is
analogous to the requirement, discussed at length in the sibling
survival literature, that there be no correlation between sibship size
and mortality. We suggest that researchers assess the sensitivity of
aggregate visibility estimators to this condition using the sensitivity
framework we develop below.

We will first provide an estimator that is based on excluding
information about the survey respondents and only using reports about
respondents' siblings. We will see that this approach produces an
estimator that is essentially the one recommended in the DHS program's
official documentation (Rutstein and Guillermo Rojas 2006, pg. 156) and
by other researchers (e.g., Masquelier 2013). An additional estimator,
based on including information about the survey respondents, turns out
to be very similar, and follows as a Corollary of the derivation of the
without-respondent estimator.

The relationship derived in Equation~\ref{eq:agg-rate-noresp} suggests

\begin{equation}
\begin{aligned}
\widehat{M}^V_\alpha 
&= \frac{\widehat{D}^V_\alpha}{\widehat{N}^V_\alpha}\\
&= \frac{\sum_{i \in s} w_i~y(i, D_\alpha)}{\sum_{i \in s} w_i~y(i, N_\alpha)}.
\end{aligned}
\label{eq:agg-vis-noresp-estimator}\end{equation}

Equation~\ref{eq:agg-vis-noresp-estimator} is based on the idea that we
can plug in sample-based aggregate visibility estimators for the number
of siblings who died and the number of siblings who contribute exposure.
Result \ref{res:mvis-agg-estimator} formally states the conditions are
required for the estimator to produce consistent and unbiased estimates
for the visible death rate.

~

\begin{Result}

\label{res:mvis-agg-estimator} \textbf{(Aggregate visibility estimation
for death rates)} Suppose we have a probability sample \(s \subset F\)
and the associated weights \(w_i\) for all \(i \in s\). Suppose that, in
aggregate, there are no false positive reports about deaths or exposure,
so that \(y(F, D_\alpha) = y^{+}(F, D_\alpha)\) and
\(y(F, N_\alpha) = y^{+}(F, N_\alpha)\). Suppose also that the
visibility of deaths and exposure to the frame population are the same,
and nonzero, so that
\(\bar{v}(D_\alpha, F) = \bar{v}(N_\alpha, F) > 0\). Then
\begin{equation}
\begin{aligned}
\widehat{M}^{V} 
&= \frac{\widehat{y}(F, D_\alpha)}{\widehat{y}(F, N_\alpha)}\\
&= \frac{\sum_{i \in s} w_i~y(i, D_\alpha)}{\sum_{i \in s} w_i~y(i, N_\alpha)}
\end{aligned}
\label{eq:res-mvis-agg-estimator}\end{equation} is consistent and
essentially unbiased for
\(M^V_\alpha = \frac{D^{V}_\alpha}{N^{V}_\alpha}\), where the exposure
\(N^{V}_\alpha\) is approximated by the number of visible people, living
or dead, in group \(\alpha\).

\end{Result}

\begin{proof}

The numerator and denominator are Horvitz-Thompson estimators, and are
therefore consistent and unbiased for the population quantities
\(y(F,D_\alpha)\) and \(y(F, N_\alpha)\). By Result
\ref{res:compound-ratio}, \(\widehat{M}^{V}\) is then consistent and
essentially unbiased for the estimand
\(\frac{y(F, D_\alpha)}{y(F, N_\alpha)}\), so it remains to show that
this estimand is equal to the visible death rate. Using the fact that
\(\bar{v}(D_\alpha,F) = \bar{v}(N_\alpha,F)\), we have \[
\frac{y(F, D_\alpha)}{y(F, N_\alpha)} 
= \frac{y(F, D_\alpha)}{y(F, N_\alpha)}~\frac{\bar{v}(N_\alpha, F)}{\bar{v}(D_\alpha,F)}.
\] Since there are no false positive reports about death or exposure, we
have \[
\frac{y(F, D_\alpha)}{y(F, N_\alpha)}~\frac{\bar{v}(N_\alpha, F)}{\bar{v}(D_\alpha,F)} =
\frac{y^{+}(F, D_\alpha)}{y^{+}(F, N_\alpha)}~\frac{\bar{v}(N_\alpha, F)}{\bar{v}(D_\alpha,F)}.
\]

By the aggregate reporting identity,
\(y^{+}(F,D_\alpha) / \bar{v}(D_\alpha, F) = D^V_\alpha\) and
\(y^{+}(F,N_\alpha) / \bar{v}(N_\alpha, F) = N^V_\alpha\). Therefore, we
have \[
\frac{y^{+}(F, D_\alpha)}{y^{+}(F, N_\alpha)}~\frac{\bar{v}(N_\alpha, F)}{\bar{v}(D_\alpha,F)}
= \frac{D^V_\alpha}{N^V_\alpha} = M^V_\alpha.
\]

\end{proof}

To recap, Result \ref{res:mvis-agg-estimator} shows that the estimator
for visible death rates relies on the important condition that the
visibility of deaths is the same as the visibility of exposure. This
requirement does not mean that reports have to be perfectly accurate; in
particular, omitting siblings will only cause a problem if it happens at
a different rate for dead siblings than it does for living siblings. We
discuss this issue in more detail when we develop the sensitivity
analysis framework, below.

~

The Corollary below says that, when respondents are included in reports,
then an analogous set of conditions to the ones required by Result
\ref{res:mvis-agg-estimator} will allow consistent and unbiased
estimation of the death rate. ~\\
\hspace*{0.333em}

\begin{Corollary}

\label{res:mvis-agg-estimator-withresp} \textbf{(Aggregate visibility
estimation for death rates when respondents are included in reports)}.
Result \ref{res:mvis-agg-estimator} holds under analogous conditions
when reports include the respondents. Specifically, suppose we have a
probability sample \(s \subset F\) and the associated weights \(w_i\)
for all \(i \in s\). Suppose that there are no false positive reports
about deaths or exposure, so that
\(y^{\prime}(F, D_\alpha) = y^{\prime+}(F, D_\alpha)\) and
\(y^\prime(F, N_\alpha) = y^{\prime+}(F, N_\alpha)\). Suppose also that
the visibility of deaths and exposure to the frame population are the
same, and nonzero, so that
\(\bar{v}^{\prime}(D_\alpha, F) = \bar{v}^{\prime}(N_\alpha, F) > 0\).
Then \begin{equation}
\begin{aligned}
\widehat{M}^{\prime V} 
&= \frac{\widehat{y}^{\prime}(F, D_\alpha)}{\widehat{y}^{\prime}(F, N_\alpha)}\\
&= \frac{\sum_{i \in s} w_i~y^{\prime}(i, D_\alpha)}
        {\sum_{i \in s} w_i~y^{\prime}(i, N_\alpha)}\\
\end{aligned}
\label{eq:res-mvis-agg-estimator-withresp}\end{equation} is consistent
and essentially unbiased for
\(M^{\prime V}_\alpha = \frac{D^{\prime V}_\alpha}{N^{\prime V}_\alpha}\),
where the exposure \(N^{V}_\alpha\) is approximated by the number of
visible people, living or dead, in group \(\alpha\).

\end{Corollary}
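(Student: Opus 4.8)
The plan is to transcribe the proof of Result~\ref{res:mvis-agg-estimator} into the ``respondent included'' setting, replacing each unprimed quantity by its primed counterpart and invoking the include-respondent reporting identity (Equation~\ref{eq:nrid-includeresp}) wherever the original proof used the exclude-respondent identity. First I would note that the numerator $\sum_{i \in s} w_i\,y^{\prime}(i,D_\alpha)$ and the denominator $\sum_{i \in s} w_i\,y^{\prime}(i,N_\alpha)$ of $\widehat{M}^{\prime V}$ are each of the Horvitz--Thompson form $\sum_{i \in s} w_i z_i$, with $z_i = y^{\prime}(i,D_\alpha)$ and $z_i = y^{\prime}(i,N_\alpha)$ respectively; hence, by the design-based theory reviewed in Appendix~\ref{sec:sampling}, they are consistent and unbiased for the population totals $y^{\prime}(F,D_\alpha) = \sum_{i\in F} y^{\prime}(i,D_\alpha)$ and $y^{\prime}(F,N_\alpha)$.

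Next I would apply Result~\ref{res:compound-ratio} (with $k=1$, $n=2$) to conclude that the ratio $\widehat{M}^{\prime V} = \widehat{y}^{\prime}(F,D_\alpha)/\widehat{y}^{\prime}(F,N_\alpha)$ is consistent and essentially unbiased for the estimand $y^{\prime}(F,D_\alpha)/y^{\prime}(F,N_\alpha)$. It then remains to show that this estimand equals $M^{\prime V}_\alpha = D^{\prime V}_\alpha/N^{\prime V}_\alpha$. I would do this in three substitutions, exactly paralleling the unprimed proof: multiply and divide by $\bar{v}^{\prime}(N_\alpha,F)/\bar{v}^{\prime}(D_\alpha,F)$, which equals $1$ under the hypothesis $\bar{v}^{\prime}(D_\alpha,F) = \bar{v}^{\prime}(N_\alpha,F) > 0$; use the no-false-positive hypothesis to replace $y^{\prime}(F,D_\alpha)$ by $y^{\prime+}(F,D_\alpha)$ and $y^{\prime}(F,N_\alpha)$ by $y^{\prime+}(F,N_\alpha)$; and finally invoke the aggregate reporting identity for reports that include the respondent, $y^{\prime+}(F,D_\alpha)/\bar{v}^{\prime}(D_\alpha,F) = D^{\prime V}_\alpha$ and $y^{\prime+}(F,N_\alpha)/\bar{v}^{\prime}(N_\alpha,F) = N^{\prime V}_\alpha$. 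These last two identities are obtained by summing Equation~\ref{eq:nrid-includeresp} over the members of $D_\alpha$ and of $N_\alpha$, which is precisely the content of the estimands in Equation~\ref{eq:agg-deaths-withresp} and Equation~\ref{eq:agg-exposure-withresp}. Chaining the substitutions gives $y^{\prime}(F,D_\alpha)/y^{\prime}(F,N_\alpha) = D^{\prime V}_\alpha/N^{\prime V}_\alpha = M^{\prime V}_\alpha$, which finishes the proof.

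I do not anticipate a genuine obstacle: the argument is a routine adaptation of Result~\ref{res:mvis-agg-estimator}, and the only place that requires care is confirming that the correct reporting identity for the primed setting is $v^{\prime}(j,F) = y^{\prime+}(F,j)$ (so that $\sum_{j\in A} y^{\prime+}(F,j) = |A|\,\bar{v}^{\prime}(A,F)$ for any set $A$), rather than the unprimed version. This identity is already established: it is Equation~\ref{eq:nrid-includeresp} together with the definition of $\bar{v}^{\prime}(A,F)$ in Equation~\ref{eq:vis-includeresp-agg}, reflecting the fact that when respondents report themselves each visible person is named once for each of her sibship members on the frame. I would also remark, as in Result~\ref{res:mvis-agg-estimator}, that ``unbiased'' here means up to the $O(n^{-1})$ compound-ratio bias (i.e.\ essentially unbiased), and that design consistency of $\widehat{M}^{\prime V}$ follows because its numerator and denominator are themselves Horvitz--Thompson estimators, via Result~\ref{res:fisher}.
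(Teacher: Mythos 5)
Your proposal is correct and follows exactly the route the paper intends: the paper's own proof of this Corollary consists of the single sentence that it ``follows the same steps as the proof of Result~\ref{res:mvis-agg-estimator},'' and your write-up is a faithful, explicit transcription of those steps into the primed (respondent-included) setting, using the correct reporting identity $v^{\prime}(i,F)=y^{\prime+}(F,i)$ and the primed estimands from Equations~\ref{eq:agg-deaths-withresp} and~\ref{eq:agg-exposure-withresp}. No gaps.
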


\begin{proof}

The proof follows the same steps as the proof of Result
\ref{res:mvis-agg-estimator}.

\end{proof}

It is important to note that the aggregate visibility estimator based on
not including information about respondents (Result
\ref{res:mvis-agg-estimator}) is consistent and unbiased for a visible
death rate \(M^V_\alpha\), and that the aggregate visibility estimator
based on including information about respondents (Corollary
\ref{res:mvis-agg-estimator-withresp}) is consistent and unbiased for a
different visible death rate \(M^{\prime V}_\alpha\). In other words,
the \emph{definition} of the visible and invisible populations will be
affected by whether or not respondents are included in the sibling
reports. We analyze this issue in more detail in Appendix
\ref{sec:includerespondent}.

\hypertarget{sensitivity-of-the-aggregate-visibility-death-rate-estimator}{%
\subsection*{Sensitivity of the aggregate visibility death rate
estimator}\label{sensitivity-of-the-aggregate-visibility-death-rate-estimator}}
\addcontentsline{toc}{subsection}{Sensitivity of the aggregate
visibility death rate estimator}

We now turn to the sensitivity of the aggregate visibility death rate
estimator to the various conditions it relies upon. Our approach is to
introduce several quantities, called \emph{adjustment factors}, that
capture the degree to which the conditions the estimator relies upon are
satisfied. These adjustment factors will be equal to 1 under ideal
conditions, and will be different from 1 when a condition required by
the death rate estimator is not satisfied. Our approach is closely
related to other network reporting analyses, including Feehan and
Salganik (2016a) and Feehan, Mahy, and Salganik (2017).

We consider here the case where respondents are not included in reports;
however, an analogous sensitivity expression would result from extended
our approach to the case where respondents are included in sibling
reports.

The first adjustment factor is the \emph{true positive rate} for reports
about visible deaths, \(\tau(F,D_\alpha)\); it is defined as
\begin{equation}
\tau(F, D_\alpha) 
= \frac{\text{average \# times a visible death in $\alpha$ would be reported by someone in $F$}}{\text{average number of connections a visible death in $\alpha$ has to $F$}}
= \frac{\bar{v}(D^V_\alpha, F)}{\bar{d}(D^V_\alpha, F)}.
\label{eq:defn-tau}\end{equation}

Note that we can also write
\(\tau(F,D_\alpha) = \frac{\bar{y}^+(F,D_\alpha)}{\bar{d}(F,D^V_\alpha)}\),
since \(v(D^V_\alpha,F) = y^+(F,D_\alpha)\).

The second adjustment factor is the \emph{precision} for reports about
visible deaths, \(\eta(F, D_\alpha)\); it is defined as

\begin{equation}
\eta(F, D_\alpha) 
= \frac{\text{\# of reported connections from $F$ to $D_\alpha$ that actually lead to $D_\alpha$}}{\text{\# of reported connections from $F$ to $D_\alpha$}}
= \frac{y^+(F, D_\alpha)}{y(F, D_\alpha)}.
\label{eq:defn-eta}\end{equation}

We define analogous adjustment factors \(\tau(F, N_\alpha)\) and
\(\eta(F, N_\alpha)\) for \(F\)'s reports about siblings' exposure.

We can use the adjustment factors introduced in the previous section to
decompose the aggregate visibility sibling survival estimator as

\begin{equation}
\begin{aligned}
M_\alpha &= \frac{D_\alpha}{N_\alpha} \\
&= \frac{D^V_\alpha}{N^V_\alpha} \times 
\frac{p^V_{N_\alpha}}{p^V_{D_\alpha}}
\\
&= \frac{D^V_\alpha}{N^V_\alpha} \times 
\frac{M_\alpha}{M^V_\alpha}
\\
&= \frac{y_{F, D_\alpha}}{y_{F, N_\alpha}} \times
\frac{\bar{d}^V(N_\alpha,F)}{\bar{d}^V(D_\alpha,F)} \times
\frac{\eta(F, D_\alpha)}{\eta(F,N_\alpha)} \times
\frac{\tau(F, N_\alpha)}{\tau(F, D_\alpha)} \times
\frac{M_\alpha}{M^V_\alpha}.
\end{aligned}
\label{eq:agg-mult-sens-deriv}\end{equation}

In order to simplify the eventual framework, let us introduce two
quantities that capture net reporting about deaths and about exposure:

\[
\gamma(F, D_\alpha) = \frac{\tau(F, D_\alpha)}{\eta(F, D_\alpha)},
\]

and

\[
\gamma(F, N_\alpha) = \frac{\tau(F, N_\alpha)}{\eta(F, N_\alpha)}.
\]

The final step is to incorporate the expression for sensitivity to
invisible siblings. Let the visible and invisible death rates differ by
a factor \(K\) so that \(M_\alpha^I = K M_\alpha^V\).
Equation~\ref{eq:invistotaldiff} then tells us that \begin{equation}
\frac{M^V_\alpha}{M_\alpha} = \frac{p^V_{D_{\alpha}} + K(1-p^V_{D_{\alpha}})}{K}.
\label{eq:mva-ma}\end{equation}

Combining Equation~\ref{eq:invistotaldiff} and
Equation~\ref{eq:agg-mult-sens-deriv}, and substituting
\(\gamma(F, D_\alpha)\) and \(\gamma(F, N_\alpha)\), we obtain an
expression that relates the aggregate visibility estimand to the true
death rate:

\begin{equation}
M_\alpha 
= 
\underbrace{
\frac{y(F, D_\alpha)}{y(F, N_\alpha)} 
}_{\substack{\text{aggregate} \\ \text{mulitiplicity} \\ \text{estimand}}}
\times
\underbrace{
\frac{\bar{d}^V(N_\alpha,F)}{\bar{d}^V(D_\alpha,F)} 
}_{\substack{\text{degree} \\ \text{ratio}}}
\times
\underbrace{
\frac{\gamma(F, N_\alpha)}{\gamma(F, D_\alpha)} 
}_{\substack{\text{reporting} \\ \text{accuracy}}}
\times
\underbrace{
\left[\frac{K}{p^I_{D_{\alpha}} + K(1-p^I_{D_{\alpha}})}\right].
}_{\substack{\text{difference between} \\ \text{invisible and} \\ \text{visible} \\ \text{populations}}}
\label{eq:agg-mult-ubersens}\end{equation}

This decomposition relates the quantities we can observe or estimate
from a survey--- i.e., \(y(F,D_\alpha)\) and \(y(F, N_\alpha)\)--to the
quantity that we actually wish to estimate, i.e., \(M_\alpha\).

The decomposition in Equation~\ref{eq:agg-mult-ubersens} produces two
groups of factors that are the ratio of (i) an adjustment factor for
deaths; and (ii) the same adjustment factor for exposure (for example,
\(\frac{\gamma(F,D_\alpha)}{\gamma(F,N_\alpha)}\)). To the extent that
reporting about deaths and reporting about exposure is similar, this is
advantageous: these adjustment factors can cancel or counteract one
another. Intuitively, Equation~\ref{eq:agg-mult-sens-deriv} reveals that
the death rate estimator is quite robust to situations in which
respondents' reports are imperfect, but imperfect in similar ways for
deaths and for people who didn't die.

Equation~\ref{eq:agg-mult-sens-deriv} parameterizes the difference
between the visible and visible populations in terms of
\(p^D_{D_\alpha}\), the proportion of deaths that is invisible.
Researchers may prefer to parameterize this factor in terms of
\(p^I_{N_\alpha}\), the proportion of exposure that is invisible. In
that case, Equation~\ref{eq:agg-mult-sens-deriv} becomes:

\begin{equation}
M_\alpha 
= 
\underbrace{
\frac{y(F, D_\alpha)}{y(F, N_\alpha)} 
}_{\substack{\text{aggregate} \\ \text{mulitiplicity} \\ \text{estimand}}}
\times
\underbrace{
\frac{\bar{d}^V(N_\alpha,F)}{\bar{d}^V(D_\alpha,F)} 
}_{\substack{\text{degree} \\ \text{ratio}}}
\times
\underbrace{
\frac{\gamma(F, N_\alpha)}{\gamma(F, D_\alpha)} 
}_{\substack{\text{reporting} \\ \text{accuracy}}}
\times
\underbrace{
\left[1 + p^I_{N_\alpha} (K-1)\right].
}_{\substack{\text{difference between} \\ \text{invisible and} \\ \text{visible} \\ \text{populations}}}
\label{eq:agg-mult-ubersens-alt}\end{equation}

\hypertarget{sec:ind-vis-deathrate}{%
\section{Estimating death rates using individual
visibility}\label{sec:ind-vis-deathrate}}

This section presents results for death rate estimators that adjust for
visibility at the individual level; that is, each reported connection to
a siblings is adjusted for using the visibility of that specific
reported sibling.

First, we provide an estimator that is based on excluding information
about the survey respondents and only using reports about respondents'
siblings.

~

\begin{Result}

\label{res:mvis-ind-estimator} \textbf{(Individual visibility estimation
for death rates)} Suppose that reports are accurate at the individual
level, so that \(y(i, D_\alpha) = d(i, D_\alpha)\) and
\(y(i, N_\alpha) = d(i, N_\alpha)\). Then \begin{equation}
\begin{aligned}
\widehat{M}^{V} 
&= \frac{\widehat{D_\alpha}}{\widehat{N_\alpha}}\\
&= \frac{\sum_{i \in s} w_i \frac{y(i, D_\alpha)}{y(i, F) + 1}}
       {\sum_{i \in s} w_i \left[ \frac{y(i, N_\alpha \cap F)}{y(i,F)} + \frac{y(i, N_\alpha -F)}{y(i,F) + 1} \right]}
\end{aligned}
\label{eq:res-mvis-ind-estimator}\end{equation} is consistent and
essentially unbiased for
\(M^V_\alpha = \frac{D^{V}_\alpha}{N^{V}_\alpha}\), where the exposure
\(N^{V}_\alpha\) is approximated by the number of visible people, living
or dead, in group \(\alpha\).

\end{Result}

\begin{proof}

The numerator and denominator are Horvitz-Thompson estimators, and
therefore \[
\begin{aligned}
\sum_{i \in s} w_i \frac{y(i, D_\alpha)}{y(i, F) + 1} 
&\longrightarrow 
\sum_{i \in F} \frac{y(i, D_\alpha)}{y(i, F) + 1}\\
&= 
\sum_{i \in F} \frac{d(i, D_\alpha)}{d(i, F) + 1}
&&
\text{by the perfect reporting condition}\\
&= 
D^{V}_{\alpha}.
&&
\text{by the derivation in Equation \ref{eq:ind-deaths-noresp}}
\end{aligned}
\] Similarly, \[
\begin{aligned}
\sum_{i \in s} w_i \left[ \frac{y(i, N_\alpha \cap F)}{y(i,F)} + \frac{y(i, N_\alpha -F)}{y(i,F) + 1} \right] 
&\longrightarrow 
\sum_{i \in F} \left[ \frac{y(i, N_\alpha \cap F)}{y(i,F)} + \frac{y(i, N_\alpha -F)}{y(i,F) + 1} \right]\\
&=
\sum_{i \in F} \left[ \frac{d(i, N_\alpha \cap F)}{d(i,F)} + \frac{d(i, N_\alpha -F)}{d(i,F) + 1} \right]
&&
\text{by perfect reporting condition}\\
&=
N_\alpha
&&
\substack{\text{by relationship in} \\ \text{Equation \ref{eq:ind-exposure-noresp}}}\\
\end{aligned}
\]

Therefore, the numerator of Equation~\ref{eq:res-mvis-ind-estimator} is
consistent and unbiased for the visible deaths, \(D^V_\alpha\), and the
denominator of Equation~\ref{eq:res-mvis-ind-estimator} is consistent
and unbiased for the visible exposure \(N^V_\alpha\). Result
\ref{res:compound-ratio} then shows that the ratio of these two
estimators,
\(\widehat{M}^V = \frac{\widehat{D}_\alpha}{\widehat{N}_\alpha}\) will
be consistent and essentially unbiased for,
\(\frac{D^V_\alpha}{N^V_\alpha}\), the visible death rate.

\end{proof}

Result \ref{res:mvis-ind-estimator} applies in the situation where
respondents are not included in the sibship reports. We now turn to a
Corollary that addresses the situation where information about
respondents is included in the sibship reports.

~

\begin{Corollary}

\label{res:mvis-ind-estimator-withresp} \textbf{(Individual visibility
estimation for death rates when respondents are included in reports)}
Result \ref{res:mvis-ind-estimator} holds under analogous conditions
when reports include the respondents. Specifically, suppose we have a
probability sample \(s \subset F\) and the associated weights \(w_i\)
for all \(i \in s\). Suppose that reports are accurate at the individual
level, so that \(y^{\prime}(i, D_\alpha) = d^{\prime}(i, D_\alpha)\) and
\(y^{\prime}(i, N_\alpha) = d^{\prime}(i, N_\alpha)\). Then
\begin{equation}
\begin{aligned}
\widehat{M}^{\prime V} 
&= \frac{\widehat{D^{\prime}_\alpha}}{\widehat{N^{\prime}_\alpha}}\\
&= \frac{\sum_{i \in s} w_i \frac{y^{\prime}(i, D_\alpha)}{y^{\prime}(i, F)}}
       {\sum_{i \in s} w_i \frac{y^{\prime}(i, N_\alpha)}{y^{\prime}(i,F)}}
\end{aligned}
\label{eq:res-mvis-ind-estimator}\end{equation} is consistent and
essentially unbiased for
\(M^{\prime V}_\alpha = \frac{D^{\prime V}_\alpha}{N^{\prime V}_\alpha}\),
where the exposure \(N^{V}_\alpha\) is approximated by the number of
visible people, living or dead, in group \(\alpha\).

\end{Corollary}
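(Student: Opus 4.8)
The plan is to follow the same three-step template used for Result~\ref{res:mvis-ind-estimator}, adapted to the with-respondent visibility expressions already derived in Equations~\ref{eq:ind-deaths-withresp}--\ref{eq:ind-rate-withresp}. The structural fact that makes this work cleanly is Equation~\ref{eq:vis-includeresp-perfect}: when respondents include themselves, every member of a sibship $\sigma$ --- whether or not that member is on the frame --- has visibility exactly $|\sigma \cap F| = y(j,F)+1 = y^{\prime}(j,F)$ for any $j \in \sigma \cap F$. Because this visibility is uniform across the sibship, the exposure estimand does not need to be split into a ``siblings in $F$'' piece and a ``siblings not in $F$'' piece the way it was in Result~\ref{res:mvis-ind-estimator}; this is precisely why the estimator in the Corollary is simpler than the one in the main Result, and it is the only genuinely new ingredient to check.

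First I would treat the numerator. Writing $z_i = y^{\prime}(i,D_\alpha)/y^{\prime}(i,F)$ for each $i \in F$ (well defined since, under the with-respondent convention, $i$ always reports herself and hence $y^{\prime}(i,F) \ge 1$ for every respondent), the numerator $\widehat{D^{\prime}_\alpha} = \sum_{i \in s} w_i z_i$ is a Horvitz--Thompson estimator for $\sum_{i \in F} z_i$, so by the standard design-based argument recalled in Section~\ref{sec:sampling-setup} it is consistent and unbiased for that population total. By the individual-level accuracy hypothesis (which, as in the proof of Result~\ref{res:mvis-ind-estimator}, I read as also giving $y^{\prime}(i,F) = d^{\prime}(i,F)$), the total $\sum_{i\in F} z_i$ equals the census quantity in Equation~\ref{eq:ind-deaths-withresp}, which was shown there to equal $D^{\prime V}_\alpha$. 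Hence $\widehat{D^{\prime}_\alpha}$ is consistent and unbiased for $D^{\prime V}_\alpha$.

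Second, the denominator $\widehat{N^{\prime}_\alpha}$ is handled identically: it is a Horvitz--Thompson estimator for $\sum_{i \in F} y^{\prime}(i,N_\alpha)/y^{\prime}(i,F)$, which by individual-level accuracy and Equation~\ref{eq:ind-exposure-withresp} equals the visible exposure $N^{\prime V}_\alpha$ (exposure being approximated, as usual, by the count of visible people, living or dead, in $\alpha$). Having established that numerator and denominator are each consistent and design-unbiased for $D^{\prime V}_\alpha$ and $N^{\prime V}_\alpha$ respectively, I would invoke the compound-ratio result (Result~\ref{res:compound-ratio}) to conclude that $\widehat{M}^{\prime V} = \widehat{D^{\prime}_\alpha}/\widehat{N^{\prime}_\alpha}$ is consistent and essentially unbiased for $M^{\prime V}_\alpha = D^{\prime V}_\alpha/N^{\prime V}_\alpha$, which is the estimand $M^{\prime V}_{\alpha,\text{ind}}$ of Equation~\ref{eq:ind-rate-withresp}.

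The main obstacle is bookkeeping rather than anything deep: one must be careful that ``visible'' throughout this Corollary means $v^{\prime}(\cdot,F) > 0$, the with-respondent notion, which is a strictly larger population than in Result~\ref{res:mvis-ind-estimator} --- it additionally includes the people counted by $C$ in Equation~\ref{eq:includer-mdiff} --- so that the conclusion is about $M^{\prime V}_\alpha$ and not $M^V_\alpha$. Once the census identities $\sum_{i\in F} d^{\prime}(i,D_\alpha)/d^{\prime}(i,F) = D^{\prime V}_\alpha$ and $\sum_{i\in F} d^{\prime}(i,N_\alpha)/d^{\prime}(i,F) = N^{\prime V}_\alpha$ are matched up with exactly the ones derived earlier (using the uniform sibship visibility $y^{\prime}(j,F)$), the rest is the same Horvitz--Thompson-plus-compound-ratio argument as in Result~\ref{res:mvis-ind-estimator}, so the write-up can legitimately be compressed to ``the same steps as the proof of Result~\ref{res:mvis-ind-estimator}.''
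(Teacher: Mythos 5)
Your proposal is correct and follows essentially the same route as the paper: Horvitz--Thompson unbiasedness of numerator and denominator, identification of the population totals with $D^{\prime V}_\alpha$ and $N^{\prime V}_\alpha$ via Equations~\ref{eq:ind-deaths-withresp} and \ref{eq:ind-exposure-withresp} under perfect reporting, and then Result~\ref{res:compound-ratio} for the ratio. Your added remarks --- that the accuracy hypothesis must implicitly cover $y^{\prime}(i,F)=d^{\prime}(i,F)$, and that the target is the with-respondent visible population $M^{\prime V}_\alpha$ rather than $M^V_\alpha$ --- are careful and consistent with what the paper's proof tacitly assumes.
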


\begin{proof}

The proof is essentially the same as the proof of Result
\ref{res:mvis-ind-estimator}; we repeat it here for clarity. The
numerator and denominator are Horvitz-Thompson estimators, and therefore
\[
\begin{aligned}
\sum_{i \in s} w_i \frac{y^{\prime}(i, D_\alpha)}{y^{\prime}(i, F)} 
&\longrightarrow 
\sum_{i \in F} \frac{y^{\prime}(i, D_\alpha)}{y^{\prime}(i, F)}\\
&= 
\sum_{i \in F} \frac{d^{\prime}(i, D_\alpha)}{d^{\prime}(i, F)}
&&
\text{by the perfect reporting condition}\\
&= 
D^{\prime V}_{\alpha}.
&&
\text{by the derivation in Equation \ref{eq:ind-deaths-withresp}}
\end{aligned}
\] Similarly, \[
\begin{aligned}
\sum_{i \in s} w_i \frac{y^{\prime}(i, N_\alpha}{y^{\prime}(i,F)} 
&\longrightarrow 
\sum_{i \in F} \frac{y^{\prime}(i, N_\alpha)}{y^{\prime}(i,F)}\\
&=
\sum_{i \in F} \frac{d^{\prime}(i, N_\alpha)}{d^{\prime}(i,F)}
&&
\text{by perfect reporting condition}\\
&=
N^{\prime}_\alpha.
&&
\substack{\text{by derivation in} \\ \text{Equation \ref{eq:ind-exposure-withresp}}}\\
\end{aligned}
\] Therefore, the numerator of the estimator
(Equation~\ref{eq:res-mvis-ind-estimator}) is consistent and unbiased
for the visible deaths, \(D^{\prime V}_\alpha\), and the denominator of
Equation~\ref{eq:res-mvis-ind-estimator} is consistent and unbiased for
the visible exposure \(N^{\prime V}_\alpha\). Result
\ref{res:compound-ratio} then shows that the ratio of these two
estimators,
\(\widehat{M}^{\prime V} = \frac{\widehat{D}^{\prime}_\alpha}{\widehat{N}^{\prime}_\alpha}\)
will be consistent and essentially unbiased for,
\(\frac{D^{\prime V}_\alpha}{N^{\prime V}_\alpha}\), the visible death
rate when respondents are included in reports.

\end{proof}

\hypertarget{sensitivity-of-individual-death-rate-estimator}{%
\subsection{Sensitivity of individual death rate
estimator}\label{sensitivity-of-individual-death-rate-estimator}}

Now we turn to an analysis of the sensitivity of the individual
visibility estimator. We follow the same approach that we did for the
aggregate visibility estimator: we introduce a series of adjustment
factors that relate reports to the underlying sibship network. Using
these adjustment factors, we develop expressions that show how reporting
errors and other factors will affect estimated death rates.

\hypertarget{adjustment-factors-for-deaths}{%
\subsubsection*{Adjustment factors for
deaths}\label{adjustment-factors-for-deaths}}
\addcontentsline{toc}{subsubsection}{Adjustment factors for deaths}

We start by defining individual-level adjustment factors that will be
useful in understanding how reporting errors can affect the individual
estimator. For a particular respondent \(i \in F\) in sibship
\(\sigma\), let

\[
\tau(i, D_\alpha \cap \sigma)
= \frac{\text{\# sibs $i$ reports having died in group $\alpha$ that actually died in group $\alpha$}}{\text{\# siblings of $i$ that actually died in group $\alpha$}}
= \frac{y^{+}(i, D^V_\alpha \cap \sigma)}{d^V(D_\alpha \cap \sigma, i)},
\]

and let

\[
\eta(i, D_\alpha \cap \sigma)
= \frac{\text{\# siblings $i$ reports having died in group $\alpha$ that actually died in group $\alpha$}}{\text{\# of siblings $i$ reports having died in group $\alpha$}}
= \frac{y^{+}(i, D_\alpha \cap \sigma)}{y(i,D_\alpha \cap \sigma)}.
\]

\(\tau(i, D_\alpha \cap \sigma)\) and \(\eta(i, D_\alpha \cap \sigma)\)
are the individual-level analogues of the quantities we introduced in
the previous section\footnote{In the special case where
  \(d^V(i,D_\alpha \cap \sigma) = 0\), we define
  \(\tau(i, D_\alpha \cap \sigma) = 0\); similarly, we define all
  \(\tau\) quantities to be zero when their denominators are zero.
  (Additional \(\tau\) quantities are introduced below.)}. Analogous
reporting quantities can be defined for reports about exposure among
siblings, \(\tau(i, N_\alpha \cap \sigma)\) and
\(\eta(i, N_\alpha \cap \sigma)\), and for reports about siblings'
membership in the frame population, \(\tau(i, \sigma \cap F)\) and
\(\eta(i, \sigma \cap F)\).

We also define a combined adjustment factor for reports about deaths
made by each individual \(i\) in sibship \(\sigma\)\footnote{In the
  special case when \(\eta(i, D_\alpha \cap \sigma) = 0\), we define
  \(\gamma(i, D_\alpha \cap \sigma) = 0\). Similarly, we define all
  \(\gamma\) quantities to be zero when their denominators are zero.
  (Additional \(\gamma\) quantities are introduced below.)}:

\[
\gamma(i, D_\alpha \cap \sigma) = 
\frac{\tau(i, D_\alpha \cap \sigma)}{\eta(i, D_\alpha \cap \sigma)}
\frac{\eta^\prime(i, \sigma \cap F)}{\tau^\prime(i, \sigma \cap F)}.
\]

(Following our notational convention, \(\tau^{\prime}\) and
\(\eta^{\prime}\) refer to individual-level true positive rates and
precision for reports that include the respondent.)
\(\gamma_{i, D_\alpha \cap \sigma}\) can be considered the net reporting
parameter for respondent \(i\). Using \(\gamma(i, D_\alpha)\) instead of
\(\tau(i, D_\alpha)\) and \(\eta(i, D_\alpha)\) will help simplify the
expressions we derive below.

Note that, by definition,

\[
\begin{aligned}
\gamma(i, D_\alpha \cap \sigma) &= 
\frac{y(i, D_\alpha \cap \sigma)}{d(i, D_\alpha \cap \sigma)}
\frac{[d(i, F \cap \sigma) + 1]}{[y(i, F \cap \sigma) + 1]}.
\end{aligned}
\]

This means that

\begin{equation}
\begin{aligned}
\frac{y(i, D_\alpha \cap \sigma)}{y(i, F \cap \sigma) + 1} &=
\frac{d(i, D_\alpha \cap \sigma)}{d(i, F \cap \sigma) + 1}~\gamma(i, D_\alpha \cap \sigma),
\end{aligned}
\label{eq:intermed-eqn1}\end{equation}

a relationship that will prove useful below.

\hypertarget{adjustment-factors-for-exposure}{%
\subsubsection*{Adjustment factors for
exposure}\label{adjustment-factors-for-exposure}}
\addcontentsline{toc}{subsubsection}{Adjustment factors for exposure}

The exposure in the denominator of the estimator in Result
\ref{res:mvis-ind-estimator} is the sum of two terms: one term related
to reports about siblings on the frame population and one term related
to reports about siblings not on the frame population. Therefore, we
require two types of reporting terms: one appropriate for reports about
siblings on the frame population, and one appropriate for reports about
siblings who are not on the frame population. This distinction is
necessary because the individual-level visibility adjustment is
different for siblings who are on and off the frame population.

For reported exposure among siblings on the frame population, we define
net reporting factors:

\[
\gamma(i, N_\alpha \cap F) = 
\frac{\tau(i, N_\alpha \cap F)}{\eta(i, N_\alpha \cap F)}
\frac{\eta(i, \sigma \cap F)}{\tau(i, \sigma \cap F)}.
\]

Similarly, for reported exposure among siblings not on the frame
population, we define net reporting factors:

\[
\gamma(i, N_\alpha -F) = 
\frac{\tau(i, N_\alpha -F)}{\eta(i, N_\alpha -F)}
\frac{\eta^{\prime}(i, \sigma \cap F)}{\tau^{\prime}(i, \sigma \cap F)}.
\]

(Again, following our notational convention, \(\tau^{\prime}\) and
\(\eta^{\prime}\) refer to individual-level true positive rates and
precision for reports that include the respondent.)

\hypertarget{sensitivity-of-deaths}{%
\subsubsection*{Sensitivity of deaths}\label{sensitivity-of-deaths}}
\addcontentsline{toc}{subsubsection}{Sensitivity of deaths}

Using these adjustment factors, we can now derive an expression that
relates reported quantities to the actual visible death rate. We'll do
this separately for the numerator and the denominator; starting with the
numerator, deaths. It will make sense to begin by considering reports
about deaths in a particular sibship \(\sigma\):

\begin{equation}
\begin{aligned}
\sum_{i \in \sigma \cap F} \frac{y(i, D_\alpha)}{y(i,F) + 1}
&=
\sum_{i \in \sigma \cap F} \frac{d(i, D_\alpha)}{d(i,F) + 1}~\gamma(i, D_\alpha)
&&
\text{from Equation }\ref{eq:intermed-eqn1}\\
&=
\bar{d}(\sigma \cap F, D_\alpha)
\sum_{i \in \sigma \cap F} \frac{1}{d(i,F) + 1}~\gamma(i, D_\alpha)
&&
\text{since } d(i,D_\alpha) = \bar{d}(\sigma \cap F, D_\alpha)~\forall i \in \sigma \cap F\\
&=
\frac{\bar{d}(\sigma \cap F, D_\alpha)}{|\sigma \cap F|}
\sum_{i \in \sigma \cap F} \gamma(i, D_\alpha)
&&
\text{since }d(i,F) + 1 = |\sigma \cap F|\\ 
&=
D^V_{\alpha} ~
\bar{\gamma}(\sigma \cap F, D_\alpha),
\end{aligned}
\label{eq:onesib-ind-d-sens}\end{equation}

where we have defined
\(\bar{\gamma}(\sigma \cap F, D_\alpha) = |\sigma \cap F|^{-1} \sum_{i \in \sigma \cap F} \gamma(i, D_\alpha)\)
to be the average net reporting factor for deaths in sibship \(\sigma\).

The derivation in Equation~\ref{eq:onesib-ind-d-sens} is based on the
idea that the net adjustment factor \(\gamma(i, D_\alpha \cap \sigma)\)
relates the individual reports to the actual underlying sibship network;
by applying the net adjustment factor, we develop an understanding of
how the actual number of visible deaths in the sibship,
\(D^V_{\alpha \cap \sigma}\) is related to the individual visibility
estimate.

Equation~\ref{eq:onesib-ind-d-sens} shows how the reported connections
to deaths for one sibship can be expressed as the product of the actual
number of deaths in the sibship and the average net reporting factor for
the sibship, \(\bar{\gamma}(\sigma \cap F, D_\alpha)\). Summing across
sibships produces an expression for the total visible deaths in terms of
all of the reports:

\[
\begin{aligned}
\sum_{i \in F} \frac{y(i, D_\alpha \cap \sigma)}{y_(i, \sigma \cap F) + 1} &=
    \sum_{\sigma \in \Sigma} \sum_{i \in \sigma} \frac{y(i, D_\alpha \cap \sigma)}{y_(i, \sigma \cap F) + 1}\\
    &=  \sum_{\sigma \in \Sigma} D^V_{\alpha \cap \sigma}~\bar{\gamma}(\sigma \cap F, D_\alpha \cap \sigma)\\
\end{aligned}
\]

To keep notation as minimal as possible, we introduce the abbreviations
\(\bar{\gamma}_{\sigma,D} = \bar{\gamma}(\sigma \cap F, D_\alpha \cap \sigma)\),
the average \(\gamma\) for deaths in sibship \(\sigma\);
\(\bar{\gamma}_D = |\Sigma|^{-1} \sum_{\sigma \in \Sigma} \bar{\gamma}_{\sigma,D}\),
the average \(\gamma\) across sibships; and
\(D_\sigma = D^V_{\sigma \cap \alpha}\), sibship \(\sigma\)s visible
deaths in group \(\alpha\). To continue simplifying the expression:

\begin{equation}
\begin{aligned}
    &=  D^V_\alpha \bar{\gamma}_D + |\Sigma|  \text{cov}_\Sigma(D_{\sigma}, \bar{\gamma}_{\sigma,D})\\
    &=  D^V_\alpha 
    \left[
        \bar{\gamma}_D + \frac{ |\Sigma|  \text{cov}_{\Sigma}(D_{\sigma}, \bar{\gamma}_{\sigma, D})}{D_\alpha^V}
    \right]\\
    &=  D^V_\alpha 
    \left[
        \bar{\gamma}_D + \frac{ |\Sigma|  \text{cor}_{\Sigma}(D_{\sigma}, \bar{\gamma}_{\sigma, D}) \text{sd}_{\Sigma}(D_\sigma) \text{sd}_{\Sigma}(\bar{\gamma}_{\sigma,D})}{D_\alpha^V}
    \right]\\
    &=  D^V_\alpha 
    \left[
        \bar{\gamma}_D +  \text{cor}_{\Sigma}(D_{\sigma}, \bar{\gamma}_{\sigma, D}) \text{cv}_{\Sigma}(D_\sigma) \text{sd}_{\Sigma}(\bar{\gamma}_{\sigma,D})
    \right]\\
    &=  D^V_\alpha 
    \left[
        \bar{\gamma}_D +  \text{cor}_{\Sigma}(D_{\sigma}, \bar{\gamma}_{\sigma, D}) \text{cv}_{\Sigma}(D_\sigma) \text{cv}_{\Sigma}(\bar{\gamma}_{\sigma,D}) \bar{\gamma}_D
    \right]\\
    &=  D^V_\alpha 
    ~\bar{\gamma}_D~
    \left[
        1 +  \text{cor}_{\Sigma}(D_{\sigma}, \bar{\gamma}_{\sigma, D}) \text{cv}_{\Sigma}(D_\sigma) \text{cv}_{\Sigma}(\bar{\gamma}_{\sigma,D}) 
    \right].\\
\end{aligned}
\label{eq:all-ind-d-adj}\end{equation}

\hypertarget{sensitivity-of-exposure}{%
\subsubsection*{Sensitivity of exposure}\label{sensitivity-of-exposure}}
\addcontentsline{toc}{subsubsection}{Sensitivity of exposure}

Equation~\ref{eq:all-ind-d-adj} is an expression for the sensitivity of
the numerator of the individual visibility estimator. We now wish to
derive a sensitivity expression for the denominator of the individual
visibility estimator. However, the denominator of the individual
visibility estimator is more complex than the numerator because the
denominator involves two terms: one for reports about exposure to frame
population members and one for reports about exposure to non frame
population members. Considering each of these two terms separately, an
argument parallel to the one for the sensitivity of deaths
(Equation~\ref{eq:all-ind-d-adj}) shows that

\begin{equation}
\begin{aligned}
\sum_{i \in \sigma \cap F} \frac{y(i, N_\alpha \cap F)}{y(i,F)}
&=
N^V_{\alpha \cap F} ~
\bar{\gamma}(\sigma \cap F, N_\alpha \cap F),
\end{aligned}
\label{eq:onesib-ind-nf-sens}\end{equation}

and that

\begin{equation}
\begin{aligned}
\sum_{i \in \sigma \cap F} \frac{y(i, N_\alpha - F)}{y(i,F) + 1}
&=
N^V_{\alpha -F} ~
\bar{\gamma}(\sigma \cap F, N_\alpha -F).
\end{aligned}
\label{eq:onesib-ind-nnotf-sens}\end{equation}

For a particular sibship \(\sigma\),
\(N^V_{\alpha} = N^V_{\alpha \cap F} + N^V_{\alpha -F}\), so we have

\begin{equation}
\begin{aligned}
\sum_{i \in \sigma \cap F} \frac{y(i, N_\alpha \cap F)}{y(i,F)}
+
\sum_{i \in \sigma \cap F} \frac{y(i, N_\alpha - F)}{y(i,F) + 1}
&=
N^V_{\alpha \cap F} ~
\bar{\gamma}(\sigma \cap F, N_\alpha \cap F)
+
N^V_{\alpha -F} ~
\bar{\gamma}(\sigma \cap F, N_\alpha -F)\\
&=
N^V_{\alpha \cap \sigma}
\left[
p_{F|N_\alpha \cap \sigma} \bar{\gamma}(\sigma \cap F, N_\alpha \cap F)
\right. \\
&
\left. ~~~+
(1-p_{F|N_\alpha \cap \sigma}) \bar{\gamma}(\sigma \cap F, N_\alpha -F)
\right],
\end{aligned}
\label{eq:onesib-ind-nboth-sens}\end{equation}

where we have defined
\(p_{F|N_\alpha \cap \sigma} = |\sigma \cap N_\alpha \cap F|/|\sigma \cap N_\alpha|\)
to be the proportion of siblings with expsoure that is on the frame
population.

Equation~\ref{eq:onesib-ind-nboth-sens} shows that, for a single
sibship, we can write the individual visibility estimand as the visible
exposure in the sibship, \(N^V_{\alpha \cap \sigma}\), times a weighted
average of the net reporting factor for exposure on the frame and
exposure not on the frame. In order to simplify this expression, for a
sibship \(\sigma\), we define

\[
\bar{\gamma}^{*}(\sigma \cap F, N_\alpha) = 
p_{F|N_\alpha \cap \sigma} \bar{\gamma}(\sigma \cap F, N_\alpha \cap F) 
+
(1-p_{F|N_\alpha \cap \sigma}) \bar{\gamma}(\sigma \cap F, N_\alpha -F).
\]

Having defined \(\gamma^{*}\), we can rewrite
Equation~\ref{eq:onesib-ind-nboth-sens} as

\begin{equation}
\sum_{i \in \sigma \cap F} \frac{y(i, N_\alpha \cap F)}{y(i,F)}
+
\sum_{i \in \sigma \cap F} \frac{y(i, N_\alpha - F)}{y(i,F) + 1}
= N^V_{\alpha \cap \sigma}~\gamma^{*}(\sigma \cap F, N_\alpha).
\label{eq:onesib-ind-nboth-sens2}\end{equation}

This need to define \(\gamma^{*}\) is awkward, but necessary because of
the way that the denominator of the individual visibility estimator
mixes together siblings with different visibilities.

Equation~\ref{eq:onesib-ind-nboth-sens2} shows the relationship between
reports and the exposure for one sibship. Following a derivation
parallel to the one for deaths above (producing
Equation~\ref{eq:all-ind-d-adj}), we can add up over all sibships to
obtain an expression for the population-level visible exposure:

\begin{equation}
\begin{aligned}
\sum_{i \in F} \left[\frac{y(i, N_\alpha \cap F)}{y(i,F)}
+
\frac{y(i, N_\alpha - F)}{y(i,F) + 1}\right]
&=
    N^V_\alpha 
    ~\bar{\gamma}^{\star}_N~
    \left[
        1 +  \text{cor}_{\Sigma}(N_{\sigma}, \bar{\gamma}^{\star}_{\sigma, N}) \text{cv}_{\Sigma}(N_\sigma) \text{cv}_{\Sigma}(\bar{\gamma}^{\star}_{\sigma,D}) 
    \right].\\
\end{aligned}
\label{eq:all-ind-n-adj}\end{equation}

\hypertarget{sensitivity-of-death-rates}{%
\subsubsection*{Sensitivity of death
rates}\label{sensitivity-of-death-rates}}
\addcontentsline{toc}{subsubsection}{Sensitivity of death rates}

Combining the expression for sensitivity of the reported deaths (the
numerator, Equation~\ref{eq:all-ind-d-adj}) and the expression for the
sensitivity of reported exposure (the denominator,
Equation~\ref{eq:all-ind-n-adj}), we have

\begin{equation}
\begin{aligned}
    \text{individual estimand}
    &=  
    \frac{
    \sum_{i \in F} \frac{y(i, D_\alpha \cap \sigma)}{y_(i, \sigma \cap F) + 1}
    }{
    \sum_{i \in F} \left[\frac{y(i, N_\alpha \cap F)}{y(i,F)}
    +
    \frac{y(i, N_\alpha - F)}{y(i,F) + 1}\right]
    }\\
&=\frac{
    D^V_\alpha 
    ~\bar{\gamma}_D~
    \left[
        1 +  \text{cor}_{\Sigma}(D_{\sigma}, \bar{\gamma}_{\sigma, D}) \text{cv}(D_\sigma)\text{cv}(\bar{\gamma}_{\sigma,D}) 
    \right]
    }{
    N^V_\alpha 
    ~\bar{\gamma}^{\star}_N~
    \left[
        1 +  \text{cor}_{\Sigma}(N_{\sigma}, \bar{\gamma}^{\star}_{\sigma, N}) \text{cv}_{\Sigma}(N_\sigma) \text{cv}_{\Sigma}(\bar{\gamma}^{\star}_{\sigma,D}) 
    \right]
}\\
    &=  
    M^V_{\alpha} \times
    \frac{\bar{\gamma}_D}{\bar{\gamma}^{\star}_N} \times
    \frac{
        1 + K_D 
    }{
        1 + K_N 
},
\end{aligned}
\label{eq:ind-mult-sens-deriv}\end{equation}

where we have simplified the expression by introducing two aggregate
factors

\begin{itemize}
\tightlist
\item
  \(K_D =  \text{cor}_{\Sigma}(D_{\sigma}, \bar{\gamma}_{\sigma, D}) \text{cv}(D_\sigma)\text{cv}(\bar{\gamma}_{\sigma,D})\)
\item
  \(K_N =  \text{cor}_{\Sigma}(N_{\sigma}, \bar{\gamma}^{\star}_{\sigma, N}) \text{cv}_{\Sigma}(N_\sigma) \text{cv}_{\Sigma}(\bar{\gamma}^{\star}_{\sigma,D})\)
\end{itemize}

\(K_D\) and \(K_N\) are complex, but the intuition is that they capture
the relationship between sibship-level reporting factors and deaths (for
\(K_D\)) or exposure (for \(K_N\)). Taking deaths as an example, when
reporting is perfect, there is no relationship between sibship deaths
and reporting, so that \(K_D=0\). When reporting is not perfect, the
sign of \(K_D\) is determined by the correlation factor (since the other
two factors are non-negative). When reporting tends to omit deaths in
sibships with more deaths, then \(K_D > 1\); conversely, when reporting
tends to omit deaths in sibships with fewer deaths, then \(K_D < 1\).

The final step is to incorporate the expression for sensitivity to
invisible siblings. Let the visible and invisible death rates differ by
a factor \(K\) so that \(M_\alpha^I = K M_\alpha^V\).
Equation~\ref{eq:invistotaldiff} then tells us that \begin{equation}
\frac{M^V_\alpha}{M_\alpha} = \frac{p^V_{D_{\alpha}} + K(1-p^V_{D_{\alpha}})}{K}.
\label{eq:mva-ma-ind}\end{equation}

Combining Equation~\ref{eq:invistotaldiff} and
Equation~\ref{eq:ind-mult-sens-deriv}, we obtain an expression that
relates the individual visibility estimand to the true death rate:

\begin{equation}
\begin{aligned}
M_\alpha 
&= 
\frac{
\sum_{i \in F} \frac{y(i, D_\alpha \cap \sigma)}{y(i, \sigma \cap F) + 1}
}{
\sum_{i \in F} \left[\frac{y(i, N_\alpha \cap F)}{y(i,F)}
+
\frac{y(i, N_\alpha - F)}{y(i,F) + 1}\right]
}
\times
    \frac{\bar{\gamma}^{\star}_N}{\bar{\gamma}_D} \times
    \frac{
        1 + K_N 
    }{
        1 + K_D 
    }
    \times
\left[\frac{K}{p^I_{D_{\alpha}} + K(1-p^I_{D_{\alpha}})}\right]\\
&=
\underbrace{
\frac{\widehat{D}^V_\alpha}{\widehat{N}^V_\alpha} 
}_{\substack{\text{individual} \\ \text{multiplicity} \\ \text{estimand}}}
\times
\underbrace{
    \frac{\bar{\gamma}^{\star}_N}{\bar{\gamma}_D} 
}_{\substack{\text{average} \\ \text{reporting} \\ \text{adj. factors}}}
\times
\underbrace{
    \frac{
        1 + K_N 
    }{
        1 + K_D 
    }
}_{\substack{\text{correlation} \\ \text{between} \\ \text{adj. factors} \\ \text{and qoi}}}
    \times
\underbrace{
\left[\frac{K}{p^I_{D_{\alpha}} + K(1-p^I_{D_{\alpha}})}\right].
}_{\substack{\text{difference} \\ \text{between} \\ \text{visibile and} \\ \text{invisibile}}}
\end{aligned}
\label{eq:ind-mult-ubersens}\end{equation}

There are a few important things to note about
Equation~\ref{eq:ind-mult-ubersens}. First, there is no structural term
analogous to the factor
\(\frac{\bar{d}^V_{N_\alpha,F}}{\bar{d}^V_{D_\alpha,F}}\) from the
aggregate sensitivity framework; adjusting at the individual level
eliminates the need for this condition. Second, the factors related to
reporting errors are more complex than the analogous group of factors in
the aggregate estimator. It is no longer the case that reporting errors
can cancel each other out if they are the same for the numerator and
denominator on average; the condition in
Equation~\ref{eq:ind-mult-ubersens} requires that the relationship
between adjustment factors and sibship characteristics, captured by
\(K_D\) and \(K_N\), also agree in order for cancellation to go through.
Third, note that the averages in Equation~\ref{eq:ind-mult-ubersens} are
taken across sibships, and not individuals. This means that, in order to
collect data on the individual-level adjustment factors in
Equation~\ref{eq:ind-mult-ubersens}, we would need detailed information
about sibships, which seems likely to pose a challenge to data
collection efforts.

Finally, Equation~\ref{eq:ind-mult-ubersens} parameterizes the
difference between the visible and visible populations in terms of
\(p^D_{D_\alpha}\), the proportion of deaths that is invisible.
Researchers may prefer to parameterize this factor in terms of
\(p^I_{N_\alpha}\), the proportion of exposure that is invisible. In
that case, Equation~\ref{eq:ind-mult-ubersens} becomes:

\begin{equation}
\begin{aligned}
M_\alpha 
&= 
\underbrace{
\frac{\widehat{D}^V_\alpha}{\widehat{N}^V_\alpha} 
}_{\substack{\text{individual} \\ \text{multiplicity} \\ \text{estimand}}}
\times
\underbrace{
    \frac{\bar{\gamma}^{\star}_N}{\bar{\gamma}_D} 
}_{\substack{\text{average} \\ \text{reporting} \\ \text{adj. factors}}}
\times
\underbrace{
    \frac{
        1 + K_N 
    }{
        1 + K_D 
    }
}_{\substack{\text{correlation} \\ \text{between} \\ \text{adj. factors} \\ \text{and qoi}}}
    \times
\underbrace{
\left[1 + p^I_{N_\alpha} (K-1)\right].
}_{\substack{\text{difference} \\ \text{between} \\ \text{visibile and} \\ \text{invisibile}}}
\end{aligned}
\label{eq:ind-mult-ubersens-alt}\end{equation}

\hypertarget{sec:comparing}{%
\section{Comparing the four estimators}\label{sec:comparing}}

Our results suggest that there are four possible approaches to analyzing
sibling histories: there is the decision to include or exclude
respondents from sibling reports; and there is the choice between the
aggregate and individual visibility estimators. In this Appendix, we
discuss the differences between these four approaches in greater depth.
We first investigate the impact of deciding to include or exclude
respondents from reports, and we argue that it is preferable to exclude
respondents. Then we turn to a discussion of the individual versus
aggregate visibility estimator; our analysis leads us to suggest that,
in the absence of additional information about adjustment factors, the
individual visibility estimator is preferable.

\hypertarget{sec:includerespondent}{%
\subsection{\texorpdfstring{The difference between
\(M^{\prime V}_\alpha\) and
\(M^V_\alpha\)}{The difference between M\^{}\{\textbackslash{}prime V\}\_\textbackslash{}alpha and M\^{}V\_\textbackslash{}alpha}}\label{sec:includerespondent}}

Above, we mentioned that the \emph{definition} of the visible population
is affected by whether or not we include respondents themselves in
sibling reports. In this section, we explain how \(M^{\prime V}_\alpha\)
-- i.e., the death rate when respondents are included in reports --
differs from \(M^V_\alpha\) -- i.e., the death rate in the visible
population when respondents are not included in reports.

We will focus on visibility at the individual level (but note that the
definition of the visible population is not affected by the decision to
adjust for visibility at the individual or at the aggregate level).
Recall from Section \ref{sec:ind-vis} that, under perfect reporting,

\[
v^\prime(i,F) = | \sigma \cap F |,
\]

and

\[
v(i,F) = 
\begin{cases}
    | \sigma \cap F | - 1 & \text{when } i \in F\\
    | \sigma \cap F | & \text{when } i \notin F.
\end{cases}
\]

Thus, for a particular sibling \(i\), we can write the relationship
between \(v^\prime(i,F)\) and \(v(i,F)\) as

\[
v^{\prime}(i,F) =
\begin{cases}
    v(i,F) + 1 & \text{when } i \in F\\
    v(i,F)     & \text{when } i \notin F.
\end{cases}
\]

When switching from including respondents to not including respondents,
the only people whose visibility is affected are in the frame population
\(F\). In particular, this means that deaths -- who are never on the
frame population -- will not have their visibility affected by whether
or not the respondent is included in reports.

For exposure, on the other hand, individuals \(i \in N_\alpha \cap F\)
will have their visibility affected by whether or not respondents are
included in reports. We have two cases:

\hypertarget{case-1-n_alpha-cap-f-phi}{%
\subsubsection*{\texorpdfstring{Case 1:
\(N_\alpha \cap F = \phi\)}{Case 1: N\_\textbackslash{}alpha \textbackslash{}cap F = \textbackslash{}phi}}\label{case-1-n_alpha-cap-f-phi}}
\addcontentsline{toc}{subsubsection}{Case 1: \(N_\alpha \cap F = \phi\)}

Example: men in any age group when only women are interviewed.

In this case, visibility does not change whether or not respondents are
included. So
\(v(i,F) = v^{\prime}(i,F) = d(j,F) + 1~ \forall j \in \sigma[i] \cap F\).

\hypertarget{case-2-n_alpha-cap-f-neq-phi}{%
\subsubsection*{\texorpdfstring{Case 2:
\(N_\alpha \cap F \neq \phi\)}{Case 2: N\_\textbackslash{}alpha \textbackslash{}cap F \textbackslash{}neq \textbackslash{}phi}}\label{case-2-n_alpha-cap-f-neq-phi}}
\addcontentsline{toc}{subsubsection}{Case 2:
\(N_\alpha \cap F \neq \phi\)}

Example: women aged 30-35 in a typical DHS survey.

In this case, the visibility of deaths does not change based on whether
or not respondents are included in reports. However, the visibility of
exposure \emph{does} change. The key question is: whose visibility
switches from \(v^{\prime}(i,F) > 0\) to \(v(i, F) = 0\)? This is the
group of people who become invisible when respondents are excluded from
reports.

Since

\[
v(i,F) = v^{\prime}(i,F) - 1~\forall i \in (N_\alpha \cap F) - D_\alpha,
\]

this can only happen when \(v^{\prime}(i,F) = 1\). So, for each
\(i \in N_\alpha\), we have

\[
v^{\prime}(i,F) - v(i,F) =
\begin{cases}
    1 & \text{if } i \in \alpha \cap F\\
    0 & \text{otherwise.}
\end{cases}
\]

Thus, the set of people who become invisible when respondents are
excluded from reports is precisely the set of siblings \(i\) such that
\(v^{\prime}(i,F) = 1\) and \(i \in N_\alpha \cap F\). The number of
people whose exposure becomes invisible when switching from including
respondents to not including respondents is

\[
\begin{aligned}
    N^{V \prime}_\alpha - N^{V}_\alpha 
    &= \sum_{i \in N_\alpha \cap F} \mathbbm{1}_{[v^{\prime}(i,F) = 1]}.
\end{aligned}
\]

\hypertarget{upshot}{%
\subsubsection*{Upshot}\label{upshot}}
\addcontentsline{toc}{subsubsection}{Upshot}

The visible death rate including respondents in reports is

\[
M^{\prime V}_{\alpha} = \frac{D^{V \prime}_{\alpha}}{N^{V \prime}_\alpha}.
\]

The visible death rate not including respondents in reports is

\[
\begin{aligned}
M^{V}_\alpha &= \frac{D^{V}_{\alpha}}{N^{V}_{\alpha}}\\
&= \frac{D^{\prime V}_\alpha}{N^{\prime V}_{\alpha} - C}
\end{aligned}
\]

where
\(C =\sum_{i \in N_\alpha \cap F} \mathbbm{1}_{[v^{\prime}(i,F) = 1]}\).

\(C\) is a factor that captures the difference in visibility due to
including or not including respondents. \(C\) will tend to be bigger,
inducing a bigger difference between \(M^{\prime V}\) and \(M^V\), when

\begin{itemize}
\tightlist
\item
  \(|\alpha \cap F|\) is bigger
\item
  the distribution of \(v^{\prime}(i,F)\) in \(|\alpha \cap F|\) is
  smaller, meaning that more values of \(v^{\prime}(i,F) = 1\).
\end{itemize}

\hypertarget{sec:includerespondent-model}{%
\subsection{A simple model to study including and excluding respondents
from reports}\label{sec:includerespondent-model}}

In this section, we develop a simple model that can be used to
illustrate how including respondents in reports changes the definition
of the visible and invisible populations. The results of our model agree
with previous models in suggesting that researchers exclude respondents
from the denominator of sibling history estimates (Trussell and
Rodriguez 1990), even though our model makes no assumptions about the
parametric form of the distribution of sibship sizes.

For the purposes of this model, we will assume that we have a homogenous
population whose members all face the same probability of death \(q\).
(So, we disregard differences in age and sex.) We introduce some
notation for the model, which will be used in this section alone. Let
\(N\) be the set of everyone in the population, which is also the set of
people who are exposed to the possibility of death. People are organized
into sibships, and \(s_i \geq 0\) describes the number of siblings \(i\)
has, living or dead; thus, \(i\)'s sibship has \(s_i + 1\) members. We
make no additional assumptions about the distribution of sibship sizes.

Let \(Q_i\) be a random variable whose outcome determines whether or not
person \(i \in N\) dies. We take \( \mathbb{E} [Q_i] = q\) for all \(i\)
and \( \text{cov}[Q_i, Q_j]=0\) for all \(i, j \in N\), meaning that
everyone in the population faces the same probability of death \(q\) and
deaths are not correlated with one another. (In particular, deaths are
not correlated within sibships.) The expected proportion of people who
die is thus \(q\).

Our model describes a stochastic process; one realization of this
process produces a finite population. We are interested in the number of
reported deaths and the amount of reported exposure when (i) respondents
do not include themselves in reports; and (ii) respondents do include
themselves in reports. In both cases, we assume reporting is perfect,
and we assume that everyone who is alive is in the frame population. We
investigate the population-level reports, and the definition of the
invisible and visible populations in both cases.

\hypertarget{case-1-respondents-are-not-included-in-reports}{%
\subsubsection*{Case 1: Respondents are not included in
reports}\label{case-1-respondents-are-not-included-in-reports}}
\addcontentsline{toc}{subsubsection}{Case 1: Respondents are not
included in reports}

\textbf{Visible and invisible death rates}

When respondents are not included in reports, the total number of
visible deaths will be

\begin{equation}
\begin{aligned}
D^V 
&= \sum_{i \in N}
\underbrace{Q_i}_{\substack{\text{prob $i$} \\ \text{ dies}}}
\times
\underbrace{\left[1 - \Pi_{j \sim i} Q_j\right],}_{\substack{\text{prob} \\ \text{at least one of $i$'s} \\ \text{sibs survives}}}
\end{aligned}
\label{eq:model-nor-dv}\end{equation}

where \(j \sim i\) indexes the siblings \(j\) of person \(i\). In
expectation, we have

\begin{equation}
\begin{aligned}
 \mathbb{E} [D^V] 
&=
\sum_{i \in N} q \times (1 - q^{s_i}) = q \sum_{i \in N} (1-q^{s_i}).
\end{aligned}
\label{eq:model-nor-e-dv}\end{equation}

Since \( \mathbb{E} [D] = |N|q\) and \(D^I = D - D^V\),
\( \mathbb{E} [D^I] =  \mathbb{E} [D] -  \mathbb{E} [D^V]\). So we also
have

\begin{equation}
\begin{aligned}
 \mathbb{E} [D^I] 
&=
|N|q - \sum_{i \in N} q \times (1 - q^{s_i}) = q \sum_{i \in N} q^{s_i}.
\end{aligned}
\label{eq:model-nor-e-di}\end{equation}

The total amount of visible exposure will be

\[
\begin{aligned}
N^V
&= \sum_{i \in N}
1
\times
\underbrace{\left[1 - \Pi_{j \sim i} Q_j\right].}_{\substack{\text{prob} \\ \text{at least one of $i$'s} \\ \text{sibs survives}}}
\end{aligned}
\]

Taking expectations, we have

\begin{equation}
\begin{aligned}
 \mathbb{E} [N^V] &= \sum_{i \in N} (1 - q^{s_i}).
\end{aligned}
\label{eq:model-nor-e-nv}\end{equation}

Since \(N^I = N - N^V\), we have
\( \mathbb{E} [N^I] = |N| -  \mathbb{E} [N^V]\). So

\begin{equation}
\begin{aligned}
 \mathbb{E} [N^I] &= \sum_{i \in N} q^{s_i}.
\end{aligned}
\label{eq:model-nor-e-ni}\end{equation}

At the population level, the visible death rate is

\[
\begin{aligned}
M^V &\approx \frac{ \mathbb{E} [D^V]}{ \mathbb{E} [N^V]}\\
&= \frac{q \sum_{i \in N} (1 - q^{s_i})}{ \sum_{i \in N} (1 - q^{s_i})} = q,
\end{aligned}
\]

and the invisible death rate is

\[
\begin{aligned}
M^I &\approx \frac{ \mathbb{E} [D^I]}{ \mathbb{E} [N^I]}\\
&= \frac{q \sum_{i \in N} q^{s_i}}{ \sum_{i \in N} q^{s_i}} = q.
\end{aligned}
\] (In both cases, the approximation is to the first order, and is a
consequence of the fact that the death rate is a ratio of random
variables; in most cases, we expect this approximation to be highly
accurate.) Thus, under this model, when respondents are not counted in
the reports, the invisible and visible death rates are both
approximately \(q\), the model's underlying probability of death.

\textbf{Reporting quantities}

The finite population total number of deaths reported will be

\[
\begin{aligned}
y(F, D^V) &= \sum_{i \in F} y(i, D) = \sum_{i \in F} \sum_{j \sim i} Q_j.
\end{aligned}
\]

Since only people who are alive will be in the frame population, this
becomes

\[
\begin{aligned}
y(F, D^V) &= \sum_{i \in N} (1 - Q_i)~y(i, D) = \sum_{i \in F} (1 - Q_i) \sum_{j \sim i} Q_j.
\end{aligned}
\]

In expectation, we have

\begin{equation}
\begin{aligned}
 \mathbb{E} [y(F, D^V)] 
&=  \mathbb{E} \left[\sum_{i \in N} (1 - Q_i) \sum_{j \sim i} Q_j\right]\\ 
&= \sum_{i \in N}  \mathbb{E} \left[ (1 - Q_i)\right]~ \mathbb{E} \left[\sum_{i \sim j} Q_i \right] \\
&= \sum_{i \in N}  \mathbb{E} \left[ (1 - Q_i)\right]~s_i q\\
&= |N|~(1-q)~s_i q.
\end{aligned}
\label{eq:case1-d}\end{equation}

The finite population total amount of exposure reported will be

\[
\begin{aligned}
y(F, N^V) &= \sum_{i \in F} y(i, N) = \sum_{i \in F} s_i.
\end{aligned}
\] In expectation, we have

\begin{equation}
\begin{aligned}
 \mathbb{E} [y(F, N^V)] 
&=  \mathbb{E} \left[ \sum_{i \in N} (1 - Q_i)~y(i, N)\right]\\
&= |N|~(1-q)~s_i.
\end{aligned}
\label{eq:case1-n}\end{equation}

Thus, in this case, the finite population ratio of expected reports
about deaths (Equation~\ref{eq:case1-d}) and expected reports about
exposure (Equation~\ref{eq:case1-n}) is approximately the probability of
death \(q\). In other words, under this model, excluding respondents
from reports (i) induces the visible and invisible populations to have
the same death rate; and (ii) means that the aggregate visibility
estimator produces essentially unbiased estimates for the visible death
rate.

\hypertarget{case-2-respondents-are-included-in-reports}{%
\subsubsection*{Case 2: Respondents are included in
reports}\label{case-2-respondents-are-included-in-reports}}
\addcontentsline{toc}{subsubsection}{Case 2: Respondents are included in
reports}

\textbf{Visible and invisible death rates}

When respondents are included in reports, the total number of visible
deaths will be

\[
\begin{aligned}
D^{\prime V}
&= \sum_{i \in N}
\underbrace{Q_i}_{\substack{\text{prob $i$} \\ \text{ dies}}}
\times
\underbrace{\left[1 - \Pi_{j \sim i} Q_j\right],}_{\substack{\text{prob} \\ \text{at least one of $i$'s} \\ \text{sibs survives}}}
\end{aligned}
\]

which is the same expression as Equation~\ref{eq:model-nor-dv}, when
respondents are not included in reports. Thus, the expected values for
all of the deaths are the same, i.e.,
\( \mathbb{E} [D^V] =  \mathbb{E} [D^{\prime V}]\) and
\( \mathbb{E} [D^I] =  \mathbb{E} [D^{\prime I}]\).

The total amount of visible exposure will be

\[
\begin{aligned}
N^{\prime V} 
&= \sum_{i \in N}
\left[
\underbrace{Q_i}_{\substack{\text{prob} \\ \text{$i$ dies}}}
\times
\underbrace{\left[1 - \Pi_{j \sim i} Q_j\right]}_{\substack{\text{prob} \\ \text{at least one of $i$'s} \\ \text{sibs survives}}}
+
\underbrace{(1 - Q_i)}_{\substack{\text{prob} \\ \text{$i$ survives}}}
\times
1
\right]
\end{aligned}
\]

Taking expectations, we have

\begin{equation}
\begin{aligned}
 \mathbb{E} [N^{\prime V}] &= \sum_{i \in N} \left[ q (1 - q^{s_i}) + (1-q)\right]\\
&= q \sum_{i \in N} (1 - q^{s_i}) + |N| (1-q)\\
&= |N|q - q \sum_{i \in N} q^{s_i} + |N| - |N|q\\
&= |N| - q \sum_{i \in N} q^{s_i}.
\end{aligned}
\label{eq:model-withr-e-nv}\end{equation}

Since \(N^{\prime I} = N - N^{\prime V}\), we have
\( \mathbb{E} [N^{\prime I}] = |N| -  \mathbb{E} [N^{\prime V}]\). So

\begin{equation}
\begin{aligned}
 \mathbb{E} [N^{\prime I}] &= q \sum_{i \in N} q^{s_i}.
\end{aligned}
\label{eq:model-withr-e-ni}\end{equation}

At the population level, the visible death rate is thus

\begin{equation}
\begin{aligned}
M^{\prime V} &\approx \frac{ \mathbb{E} [D^{\prime V}]}{ \mathbb{E} [N^{\prime V}]}\\
&= \frac{q \sum_{i \in N} (1 - q^{s_i})}{|N| - q\sum_{i \in N}q^{s_i}}\\
&= \frac{q|N| - q\sum_{i \in N} q^{s_i}}{ |N| - q\sum_{i \in N} q^{s_i} }.
\end{aligned}
\label{eq:model-withr-mv}\end{equation}

In general, Equation~\ref{eq:model-withr-mv} is not equal to \(q\). The
invisible death rate is

\begin{equation}
\begin{aligned}
M^{\prime I} &\approx \frac{ \mathbb{E} [D^{\prime I}]}{ \mathbb{E} [N^{\prime I}]}\\
&= \frac{q \sum_{i \in N} q^{s_i}}{q \sum_{i \in N} q^{s_i}} = 1.
\end{aligned}
\label{eq:model-withr-mi}\end{equation}

\textbf{Reporting quantities}

The finite population total number of deaths reported will be \[
\begin{aligned}
y^{\prime}(F, D^V) &= \sum_{i \in F} y(i, D) = \sum_{i \in F} \sum_{j \sim i} Q_j.
\end{aligned}
\]

This is the same as case 1; thus, in expectation,

\begin{equation}
\begin{aligned}
 \mathbb{E} [y(F, D^V)] 
&= |N|~(1-q)~s_i q.
\end{aligned}
\label{eq:case2-d}\end{equation}

The finite population total amount of exposure reported will be

\[
\begin{aligned}
y^{\prime}(F, N^V) &= \sum_{i \in F} (y(i, N) + 1) = \sum_{i \in F} (s_i + 1),
\end{aligned}
\]

where the plus one adds the exposure of the respondent who is, by
definition, alive. In expectation, we have

\begin{equation}
\begin{aligned}
 \mathbb{E} [y^{\prime}(F, N^V)]
&=  \mathbb{E} \left[ \sum_{i \in N} (1 - Q_i)~[y(i, N) + 1]\right]\\
&= |N|~(1-q)~(1 + s_i)\\
&= |N|~(1-q) + |N|~(1-q) s_i.
\end{aligned}
\label{eq:case2-n}\end{equation}

Thus, in this case, the finite population ratio of expected reports
about deaths (Equation~\ref{eq:case2-d}) and expected reports about
exposure (Equation~\ref{eq:case2-n}) does not equal the probability of
death \(q\); there is an extra term in the denominator. In fact, this
term is precisely the \(C\) factor discussed in Appendix
\ref{sec:includerespondent}; that is, under this model, \(C= |N|(1-q)\).

In this case, people can only be invisible if they die, making the
invisible death rate equal to 1. The visible death rate, on the other
hand, will in general be different from \(q\). In other words, under
this model, including respondents in reports (i) induces a difference in
the death rates of the visible and invisible populations, even though
everyone in the population has the same probability of death \(q\); and
(ii) means that the aggregate visibility estimator does not necessarily
produce essentially unbiased estimates for the visible death rate.

\hypertarget{summary}{%
\subsubsection*{Summary}\label{summary}}
\addcontentsline{toc}{subsubsection}{Summary}

To recap, we introduced a model in which all members of a population
have the same probability of dying. Under this model, we saw that it was
appealing to exclude respondents from sibling reports; when respondents
are excluded, the visible and invisible populations have the same death
rate, and that death rate is equal to the probability of an individual
dying. On the other hand, including respondents in sibling reports
induced a difference in death rates between the visible and invisible
populations. Thus, this model agrees with Trussell and Rodriguez (1990)
in suggesting that it is most reasonable to exclude respondents from
sibling reports.

\hypertarget{sec:agg-vis-special}{%
\subsection{Differences between aggregate and individual
visibility}\label{sec:agg-vis-special}}

The aggregate and individual visibility estimators can produce different
results; for example female death rates estimates in
Figure~\ref{fig:sib-ests} are lower for the individual visibility
estimator than for the aggregate visibility estimator. On the other
hand, Figure~\ref{fig:sib-ests} also shows that results for males are
highly consistent with one another. What explains when and how aggregate
and individual visibility estimates differ? In this appendix, we address
this question in two stages: first we derive a relationship between the
aggregate visibility of deaths and the aggregate visibility of exposure;
and, second, we considering two heuristic approximations for the
aggregate visibility of exposure that empirically account for most of
the difference between the aggregate and individual visibility death
rate estimates in Malawi.

We start by deriving a relationship between the aggregate visibility of
deaths and the aggregate visibility of exposure. The basis for this
relationship is the adjustment factors called the \emph{visibility
ratio} in the aggregate sensitivity framework
(Equation~\ref{eq:main-agg-mult-ubersens}). The visibility ratio
captures the condition, required by the aggregate visibility estimator,
that the average visibility of exposure be equal to the average
visibility of deaths. This condition is not required by the individual
visibility estimator; thus, it is a possible source of differences
between the two estimators.

We will start from results about aggregate visibility derived in
Appendix \ref{sec:agg-vis}. Equation~\ref{eq:vis-noresp-agg-perfect-v2}
showed that, for a group \(A \subset U\):

\begin{equation}
\begin{aligned}
\bar{v}(A, F)  
  &=
\bar{v}^{\prime}(A,F) - \frac{|F \cap A|}{|A|}.
\end{aligned}
\label{eq:vis-includeresp-agg-diff}\end{equation}

Equation~\ref{eq:vis-includeresp-agg-diff} is stated in terms of a
generic group \(A\). We now investigate what
Equation~\ref{eq:vis-includeresp-agg-diff} implies for deaths and for
exposure. Deaths will never be on the sampling frame; thus, for deaths,
\(\frac{|F \cap D^V_\alpha|}{|D^V_\alpha|} = 0\).

For exposure, on the other hand, \(|F \cap N^V_\alpha| / |N^V_\alpha|\)
is the proportion of people who contribute exposure that is also on the
sampling frame. This quantity will depend on whether or not survivors in
group \(\alpha\) would be expected to be on the sampling frame. In a
typical Demographic and Health Survey, the sampling frame for sibling
histories will be women of reproductive age. Thus, we expect that:
\begin{equation}
|F \cap N^V_\alpha| / |N^V_\alpha| \approx 
\begin{cases}
1 & \text{if $\alpha$ is women in a reproductive age group}\\
0 & \text{otherwise.}\\
\end{cases}
\label{eq:agg-vis-n-app}\end{equation}

Equation~\ref{eq:agg-vis-n-app} is an approximation because in the first
case, when people in \(\alpha\) are on the frame population, some of
those who contribute exposure will also die; thus, the true value will
be somewhat less than 1. So Equation~\ref{eq:agg-vis-n-app} is an
approximation based on the idea that the number of people who die will
usually be small relative to the amount of people who contribute
exposure.

In the paper, we focus on reports that exclude respondents from the
denominator; thus, we are most interested in \(\bar{v}(N^V_\alpha, F)\)
and \(\bar{v}(D^V_\alpha, F)\). It is important to distinguish between
two cases, based on which group \(\alpha\)'s death rate is being
estimated.

\hypertarget{case-1-n_alpha-cap-f-phi-1}{%
\subsubsection*{\texorpdfstring{Case 1:
\(N_\alpha \cap F = \phi\)}{Case 1: N\_\textbackslash{}alpha \textbackslash{}cap F = \textbackslash{}phi}}\label{case-1-n_alpha-cap-f-phi-1}}
\addcontentsline{toc}{subsubsection}{Case 1: \(N_\alpha \cap F = \phi\)}

Example: men in any age group when only women are interviewed.

We have
\(|D^V_\alpha \cap F|/|D^V_\alpha| = |N^V_\alpha \cap F|/|N^V_\alpha| = 0\).
Thus, in this case, it seems reasonable to assume that
\(\bar{v}(N^V_\alpha, F) = \bar{v}(D^V_\alpha, F)\), as the estimates in
DHS reports do.

\hypertarget{case-2-n_alpha-cap-f-neq-phi-1}{%
\subsubsection*{\texorpdfstring{Case 2:
\(N_\alpha \cap F \neq \phi\)}{Case 2: N\_\textbackslash{}alpha \textbackslash{}cap F \textbackslash{}neq \textbackslash{}phi}}\label{case-2-n_alpha-cap-f-neq-phi-1}}
\addcontentsline{toc}{subsubsection}{Case 2:
\(N_\alpha \cap F \neq \phi\)}

Example: women aged 30-35 in a typical DHS survey.

We have \(|D^V_\alpha \cap F|/|D^V_\alpha| = 0\), but
\(|N^V_\alpha \cap F|/|N^V_\alpha| \approx 1\). Thus, our analysis
suggests that, in the absence of additional information about the
adjustment factors, it is most natural to expect that
\(\bar{v}(N^V_\alpha, F) \approx \bar{v}(D^V_\alpha, F) - 1\).

The aggregate visibility estimator in Equation~\ref{eq:mhat-agg} (and
used in DHS reports) assumes that
\(\bar{v}(N^V_\alpha, F) = \bar{v}(D^V_\alpha, F)\). So, our analysis
suggests that this condition is reasonable when the group \(\alpha\) is
not on the frame population (like men in most DHS surveys); however,
when most members of \(\alpha\) will be on the frame population (like
women aged 15-49 in most DHS surveys), our analysis suggests that it is
more natural to assume that
\(\bar{v}(N^V_\alpha, F) = \bar{v}(D^V_\alpha, F) - 1\).

In order to illustrate this analysis, we propose two heuristic ways to
approximate \(\bar{v}(N^V_\alpha, F)\). Then we use these approximations
to adjust aggregate visibility estimates in Malawi. We shall see that
these approximations account for most of the difference between the
individual and aggregate visibility estimates.

The idea behind the two approximations is to use survey respondents'
visibilities to approximate the visibility of exposure. Of course,
deaths (who we do not interview) also contribute exposure, but between
ages of 15 and 49, we expect survivors to outnumber deaths by a
considerable margin, even when death rates are high. So we expect the
average visibility of survey respondents to be similar to the average
visibility of exposure.

The first approximation uses

\begin{equation}
\begin{aligned}
\bar{v}(N^V_\alpha, F) &\approx \bar{y}(F, F). && \text{(All ages approximation)}
\end{aligned}
\label{eq:s-approx-1}\end{equation}

Equation~\ref{eq:s-approx-1} approximates the visibility of exposure in
group \(\alpha\) by using the average visibility of survey respondents.
Under this approximation, estimates from the aggregate visibility
estimator in Equation~\ref{eq:mhat-agg} should be adjusted by a factor
of \(\frac{\widehat{\bar{y}}(F, F)}{\widehat{\bar{y}}(F,F) + 1}\) when
group \(\alpha\) overlaps with the frame population.

We call Equation~\ref{eq:s-approx-1} the \emph{all-ages} approximation,
to distinguish it from the second approximation:

\begin{equation}
\begin{aligned}
\bar{v}(N^V_\alpha, F) & \approx \bar{y}(F_\alpha, F). && \text{(Age-specific approximation)}
\end{aligned}
\label{eq:s-approx-2}\end{equation}

Equation~\ref{eq:s-approx-2} takes the idea from
Equation~\ref{eq:s-approx-1} and applies it to the specific group
\(\alpha\). This approximation is motivated by the form of individual
visibility estimator. Under this approximation, estimates from the
aggregate visibility estimator in Equation~\ref{eq:mhat-agg} should be
adjusted by a factor of
\(\frac{\widehat{\bar{y}}(F_\alpha, F)}{\widehat{\bar{y}}(F_\alpha,F) + 1}\)
when group \(\alpha\) overlaps with the frame population.

Figure~\ref{fig:agg-vis-adj-all} illustrates using the Malawi example
from the main paper. The figure shows the individual and aggregate
visibility estimates, along with aggregate visibility estimates that
have been adjusted using (i) the approximation based on all ages; and
(ii) the age-specific approximations. Note that the adjustments only
affect groups \(\alpha\) whose members could potentially also be members
of the frame population; in this case, that is women aged 15-49.
Figure~\ref{fig:agg-vis-adj-all} shows that the two heuristic
approximations do an excellent job of approximating the individual
visibility results.

\begin{figure}
\hypertarget{fig:agg-vis-adj-all}{%
\centering
\includegraphics{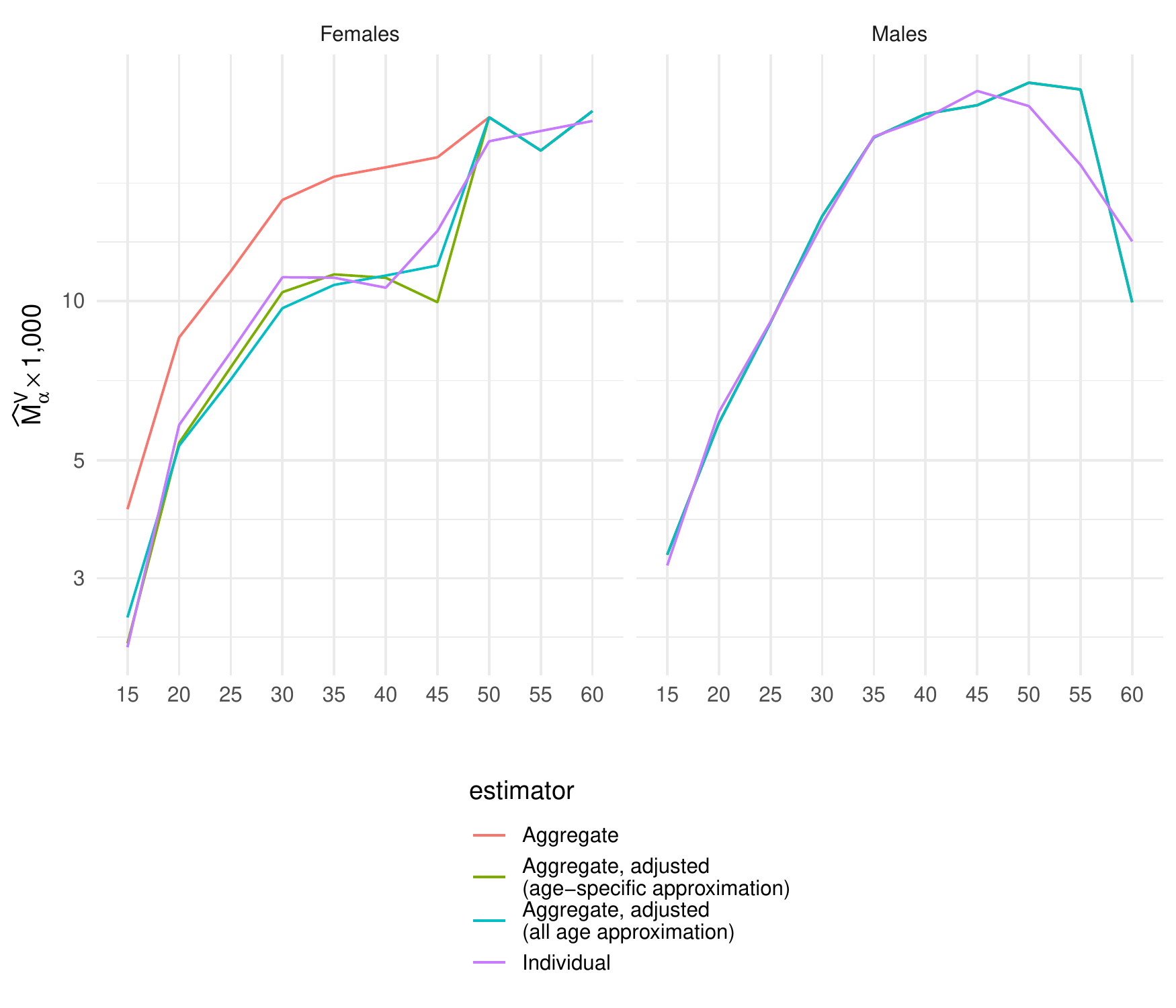}
\caption{Comparing three variants of adjusted aggregate visibility
estimates to the unadjusted aggregate visibility estimates and
individual visibility estimates from the 2000 Malawi
DHS.}\label{fig:agg-vis-adj-all}
}
\end{figure}

To recap, our derivations suggest that when the group \(\alpha\) is not
on the frame population, it may be reasonable to assume that
\(\bar{v}(N^V_\alpha, F) = \bar{v}(D^V_\alpha, F)\), like the aggregate
visibility estimator does. However, when the group \(\alpha\) overlaps
with the frame population -- for example, when it includes women aged
15-49 on a DHS survey -- then it seems more reasonable to assume that
\(\bar{v}(N^V_\alpha, F) = \bar{v}(D^V_\alpha, F) - 1\). To illustrate,
we approximated \(\bar{v}(N^V_\alpha, F)\) in two different ways. Using
the adjustment factors suggested by our approximations,
Figure~\ref{fig:agg-vis-adj-all} adjusted aggregate visibility
estimates, and the resulting adjusted estimates were very close to the
individual visibility estimates. Thus, for the 2000 Malawi DHS, the
difference between the aggregate and individual visibility estimates
appears to be explained by the implicit assumption about the visibility
adjustment factor made by the aggregate visibility estimator.

\hypertarget{sec:background-facts}{%
\section{Useful facts}\label{sec:background-facts}}

\hypertarget{covariances}{%
\subsubsection*{Covariances}\label{covariances}}
\addcontentsline{toc}{subsubsection}{Covariances}

The following fact will be useful in some of our analysis below; see,
for example, Feehan and Salganik (2016a) for a derivation.

~

\begin{Fact}

\label{res:sumprod-cov} Suppose we have a finite population \(U\) of
size \(N\) and that \(a_i, b_i \in \mathbb{R}\) are defined for all
\(i \in N\). Then \[
\sum_{i \in U} a_i b_i = N\left[\bar{a}\bar{b} +  \text{cov}_U(a_i,b_i)\right],
\] where \(\bar{a} = N^{-1} \sum_{i \in U} a_i\),
\(\bar{b} = N^{-1} \sum_{i \in U} b_i\), and \( \text{cov}_U(a_i,b_i)\)
is the finite population covariance of the \(a_i\) and \(b_i\) values.

\end{Fact}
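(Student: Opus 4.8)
The plan is to derive the identity straight from the definition of the finite population covariance, so the argument is essentially a one-line algebraic expansion. I would start by recalling that $\text{cov}_U(a_i,b_i)$ denotes the population (divisor-$N$) covariance, i.e.
\[
\text{cov}_U(a_i,b_i) = \frac{1}{N}\sum_{i \in U}(a_i - \bar a)(b_i - \bar b),
\]
and then expand the product $(a_i - \bar a)(b_i - \bar b) = a_i b_i - \bar b\, a_i - \bar a\, b_i + \bar a \bar b$ inside the sum.

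Next I would push the summation through term by term. Using $\sum_{i \in U} a_i = N\bar a$ and $\sum_{i \in U} b_i = N\bar b$, each of the three non-leading terms contributes $\bar a \bar b$: specifically $\frac{1}{N}\sum_i \bar b\, a_i = \bar a \bar b$, $\frac{1}{N}\sum_i \bar a\, b_i = \bar a \bar b$, and $\frac{1}{N}\sum_i \bar a \bar b = \bar a \bar b$, so two of them cancel against the third and leave $\text{cov}_U(a_i,b_i) = \frac{1}{N}\sum_{i \in U} a_i b_i - \bar a \bar b$. Multiplying through by $N$ and rearranging gives $\sum_{i \in U} a_i b_i = N\left[\bar a \bar b + \text{cov}_U(a_i,b_i)\right]$, which is the claim.

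The only thing that really needs care — and it is more a bookkeeping point than an obstacle — is confirming that the covariance here uses the divisor $N$, matching the convention used elsewhere in the paper (and in Feehan and Salganik 2016a); with an $N-1$ divisor the identity would carry a different constant. As a sanity check I would observe that this is just the finite-population version of the identity $\mathbb{E}[XY] = \mathbb{E}[X]\,\mathbb{E}[Y] + \text{cov}(X,Y)$ under the uniform distribution on the index set $U$.
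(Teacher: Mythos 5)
Your proof is correct and is the standard algebraic expansion; the paper itself gives no proof of this Fact (it defers to Feehan and Salganik 2016a), and your derivation is exactly the argument that reference supplies. Your remark about the divisor-$N$ convention is the right point of care, and it is indeed the convention in force here.
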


\hypertarget{aggregating-death-rates-across-groups}{%
\subsection*{Aggregating death rates across
groups}\label{aggregating-death-rates-across-groups}}
\addcontentsline{toc}{subsection}{Aggregating death rates across groups}

In order to develop our sensitivity frameworks, we need a couple of
technical results. These results will help us understand how death rate
estimates can be affected by invisible deaths and invisible exposure.

Demographers frequently use the weighted arithmetic mean. However, many
other means exist; in understanding how the visible and invisible death
rates aggregate, the harmonic mean will play an important role. So, for
convenience, we now review the definition of the weighted harmonic mean:

~

\begin{Definition}

\label{def:whm} Let \(\boldsymbol{x}, \boldsymbol{w} \in \mathbb{R}^n\)
and let \(x_i > 0\) and \(w_i > 0\) for all \(i\). Then the
\textbf{Weighted Harmonic Mean} of the \(\boldsymbol{x}\) values, with
weights given by the \(\boldsymbol{w}\) values, is \[
H[\boldsymbol{x}; \boldsymbol{w}] = \frac{\sum_{i=1}^n w_i}{\sum_{i=1}^n \frac{w_i}{x_i}}.
\]

\end{Definition}

For comparison, the usual \textbf{weighted arithmetic mean} is given by

\begin{equation}
A[\boldsymbol{x}; \boldsymbol{w}] = \frac{\sum_{i=1}^n w_i x_i}{\sum_{i=1}^n w_i}.
\label{eq:wam}\end{equation}

The next derivation shows that the weights can be rescaled without
affecting the weighted harmonic mean.

~

\begin{Result}

\label{res:whmscale} Let
\(\boldsymbol{x}, \boldsymbol{w} \in \mathbb{R}^n\) and let \(x_i > 0\)
and \(w_i > 0\) for all \(i\). Let \(\boldsymbol{w^\prime}\) be defined
so that \(w^\prime_i = K w_i\) for all \(i\) and for \(K > 0\). Then
\(H[\boldsymbol{x}; \boldsymbol{w}] = H[\boldsymbol{x}; \boldsymbol{w^\prime}]\).

\end{Result}

\begin{proof}

This property follows directly from the definition of the weighted
harmonic mean (Definition \ref{def:whm}): \[
H[\boldsymbol{x}; \boldsymbol{w^\prime}] = \frac{\sum_{i=1}^n w^\prime_i}{\sum_{i=1}^n \frac{w^\prime_i}{x_i}}
= \frac{\sum_{i=1}^n K~w_i}{\sum_{i=1}^n \frac{K~w_i}{x_i}}
= \frac{\sum_{i=1}^n w_i}{\sum_{i=1}^n \frac{w_i}{x_i}} = H[\boldsymbol{x}; \boldsymbol{w}].
\]

\end{proof}

We can connect these insights about harmonic means to deepen our
understanding of how death rates aggregate across groups, as Result
\ref{res:agghm} shows:

~

\begin{Result}

\label{res:agghm} Suppose two demographic groups have death rates
\(M_{1} = \frac{D_{1}}{N_{1}}\) and \(M_{2} = \frac{D_{2}}{N_{2}}\),
where \(D_{1}\) is the number of deaths in the first group, \(N_{1}\) is
the person-years of exposure in the first group, \(D_{2}\) is the number
of deaths in the second group, and \(N_{2}\) is the person-years of
exposure in the second group. Now suppose we combine the two groups and
treat them as one aggregate group. Then the death rate for the aggregate
group is the weighted harmonic mean of the death rates in the subgroups,
with weights given by the number of deaths in each subgroup: \[
M_\text{agg} = H[(M_1, M_2); (D_1, D_2)] = \frac{D_1 + D_2}{N_1 + N_2}.
\]

\end{Result}

\begin{proof}

In the combined group, the total exposure is \(N_1 + N_2\) and the total
number of deaths is \(D_1 + D_2\). Thus, the combined death rate is
indeed \(M_\text{agg} = \frac{D_1 + D_2}{N_1 + N_2}\). It remains to
show that this is the weighted harmonic mean. By the definition of the
weighted harmonic mean (Definition \ref{def:whm}), we have \[
\begin{aligned}
H[(M_1, M_2); (D_1, D_2)] &= \frac{D_1 + D_2}{\frac{D_1}{M_1} + \frac{D_2}{M_2}}\\
&= \frac{D_1 + D_2}{N_1 + N_2} = M_\text{agg},
\end{aligned}
\] where the last step follows because
\(\frac{D_1}{M_1} = D_1 \times \frac{N_1}{D_1} = N_1\).

\end{proof}

Next, we will see that when groups are aggregated with a focus on the
amount of exposure, the weighted arithmetic mean describes the resulting
aggregate death rate.

~

\begin{Result}

\label{res:aggam} Suppose two demographic groups have death rates
\(M_{1} = \frac{D_{1}}{N_{1}}\) and \(M_{2} = \frac{D_{2}}{N_{2}}\),
where \(D_{1}\) is the number of deaths in the first group, \(N_{1}\) is
the person-years of exposure in the first group, \(D_{2}\) is the number
of deaths in the second group, and \(N_{2}\) is the person-years of
exposure in the second group. Now suppose we combine the two groups and
treat them as one aggregate group. Then the death rate for the aggregate
group is the weighted arithmetic mean of the death rates in the
subgroups, with weights given by the amount of exposure in each
subgroup: \[
M_\text{agg} = A[(M_1, M_2); (N_1, N_2)] = \frac{D_1 + D_2}{N_1 + N_2}.
\]

\end{Result}

\begin{proof}

By the definition of the weighted arithmetic mean
(Equation~\ref{eq:wam}), we have \[
\begin{aligned}
A[(M_1, M_2); (N_1, N_2)] &= \frac{N_1 M_1 + N_2 M_2}{N_1 + N_2}\\
&= \frac{D_1 + D_2}{N_1 + N_2} = M_\text{agg}.
\end{aligned}
\]

\end{proof}

Taken together, Results \ref{res:agghm} and \ref{res:aggam} will be
useful in constructing sensitivity frameworks because they allow us to
parameterize the difference between the visible and invisible
populations in terms of either the fraction of deaths that is invisible
(leading to the harmonic mean relationship) or the fraction of exposure
that is invisible (leading to the arithmetic mean relationship). Which
result to use depends on whether researchers want to parameterize
sensitivity using the proportion of deaths that is invisible (Result
\ref{res:agghm}) or the proportion of exposure that is invisible (Result
\ref{res:aggam}).

Note that Results \ref{res:agghm} and \ref{res:aggam} can be extended to
more than two groups; in general, the death rate for an aggregation of
several groups will be given by the weighted harmonic (arithmetic) mean
of the component death rates, with the weights given by the number of
deaths (amount of exposure) in each component group.

\hypertarget{sec:ap-variance}{%
\section{Variance estimation}\label{sec:ap-variance}}

Variance estimators can be useful in at least two ways. First, variance
estimators enable researchers to understand and communicate the sampling
variance associated with any point estimate. Second, before a survey
design is decided upon, variance estimators can be used to understand
how big a sample is needed to estimate a quantity at a given level of
precision. Somewhat surprisingly, little work has formally analyzed the
variance of sibling history estimates\footnote{Hanley, Hagen, and
  Shiferaw (1996) studied the sisterhood method, which is closely
  related to the sibling survival method. By approximating the
  proportion of sisters reported dead as a binomial variable, Hanley,
  Hagen, and Shiferaw (1996) derives an expression for the variance of a
  sisterhood estimate. Although this approach has been very useful,
  future work could likely improve upon Hanley, Hagen, and Shiferaw
  (1996)'s results; their expression is based on several simplifications
  and does not appear to account for the complex design used in almost
  all surveys that collect sibling history data.}.

All of the sibling estimators discussed in this study are variants of a
ratio or compound ratio estimator. A standard result in the survey
sampling literature shows that the variance of such an estimator can be
estimated using a Taylor approximation. Here, we state this result and
explain how it relates to the sibling survival estimators.

Sarndal, Swensson, and Wretman (2003, sec 5.6) shows that the relative
variance of any ratio estimator of the form
\(\widehat{M} = \frac{\widehat{D}}{\widehat{N}}\) can be approximated by

\begin{equation}
 \widehat{\text{Var}} [\widehat{M}] \approx \frac{1}{\widehat{N}^2} \left[
 \widehat{\text{Var}} [\widehat{D}] + \widehat{M}^2~ \widehat{\text{Var}} [\widehat{N}] - 2 \widehat{M}  \widehat{\text{Cov}} [\widehat{D}, \widehat{N}]
\right].
\label{eq:approxvar}\end{equation}

Multiplying Equation~\ref{eq:approxvar} through by
\(\frac{1}{\widehat{M}^2}\), we obtain an expression for the approximate
relative variance

\begin{equation}
\frac{ \widehat{\text{Var}} [\widehat{M}]}{\widehat{M}^2} =  \widehat{\text{Rel-Var} }[\widehat{M}^2] \approx 
 \widehat{\text{Rel-Var} }[\widehat{D}] +  \widehat{\text{Rel-Var} }[\widehat{N}] - 2~ \widehat{\text{Rel-Cov} }[\widehat{D}, \widehat{N}],
\label{eq:approxrelvar1}\end{equation}

where
\( \widehat{\text{Rel-Var} }[\widehat{X}] =  \widehat{\text{Var}} [\widehat{X}]/\widehat{X}^2\)
is the relative sampling variance and
\( \widehat{\text{Rel-Cov} }[\widehat{X},\widehat{Y}] =  \widehat{\text{Cov}} [\widehat{X},\widehat{Y}]/\widehat{X}\widehat{Y}\)
is the relative sampling covariance.

In the context of estimating death rates, the \(\widehat{M}\) in
Equation~\ref{eq:approxvar} is the estimated death rate, the
\(\widehat{D}\) is the reports about deaths, and the \(\widehat{N}\) is
the reports about exposure. Equation~\ref{eq:approxvar} and
Equation~\ref{eq:approxrelvar1} show for a given level of mortality, the
estimated sampling uncertainty will be lower when

\begin{itemize}
\tightlist
\item
  the reports exposure about deaths and exposure are estimated from the
  sample with a high degree of precision, making
  \( \widehat{\text{Rel-Var} }[\widehat{D}]\) and
  \( \widehat{\text{Rel-Var} }[\widehat{N}]\) small
\item
  the sampling design produces a high, positive covariance between the
  estimated numerators and denominators, making
  \( \widehat{\text{Rel-Cov} }[\widehat{D},\widehat{N}]\) large
\end{itemize}

\hypertarget{sec:simulation}{%
\section{Simulation study}\label{sec:simulation}}

We conducted a simulation study with two goals: (1) we wanted to confirm
the correctness of our analytical results, including the sensitivity
frameworks; and, (2) we wanted to empirically compare the performance of
the four possible estimators. We based our simulation on the sibships
reported about in the 2000 Malawi DHS study. Our aim was not to
re-construct a perfectly authentic population-level sibship structure;
instead, we wanted a reasonably realistic population that could help us
achieve the goals of confirming the analytical results and understanding
the differences between the four possible estimators.

\hypertarget{constructing-the-universe}{%
\subsection{Constructing the universe}\label{constructing-the-universe}}

We start with the observed set of sibships that are reported about in
the sibling history module; there is one sibship reported about for each
survey respondent. We assume that these sibships are distinct, ignoring
the fact that more than one member of a sibship may have been
interviewed in the study. We use these sibships as the basis for
constructing a pseudo-population of sibships as follows
(Figure~\ref{fig:sim-overview}):

\begin{enumerate}
\def\labelenumi{\arabic{enumi}.}
\tightlist
\item
  Starting from the 13,161 sibships in the dataset, sample
  \(M_{\text{sibships}}\) to form the pseudo-population of sibships. We
  sample with replacement, so some sibships are sampled multiple times,
  and each sibship is sampled with probability proportional to its
  visibility to the frame population, i.e.~the number of sibship
  members, including the respondent, who were eligible to respond to the
  survey.
\item
  For each of the \(M_{\text{sibships}}\) resampled sibships, with
  probability \(\frac{1}{2}\), we flip the sexes of the reported
  siblings. (This accounts for the fact that only females were
  interviewed in Malawi; without this step, we would end up with an
  unrealistic gender distribution.)
\item
  We then form a universe of siblings from the individual siblings
  corresponding to the (possibly gender-flipped) resampled sibships.
\end{enumerate}

\begin{figure}
\hypertarget{fig:sim-overview}{%
\centering
\includegraphics[width=\textwidth,height=0.9\textheight]{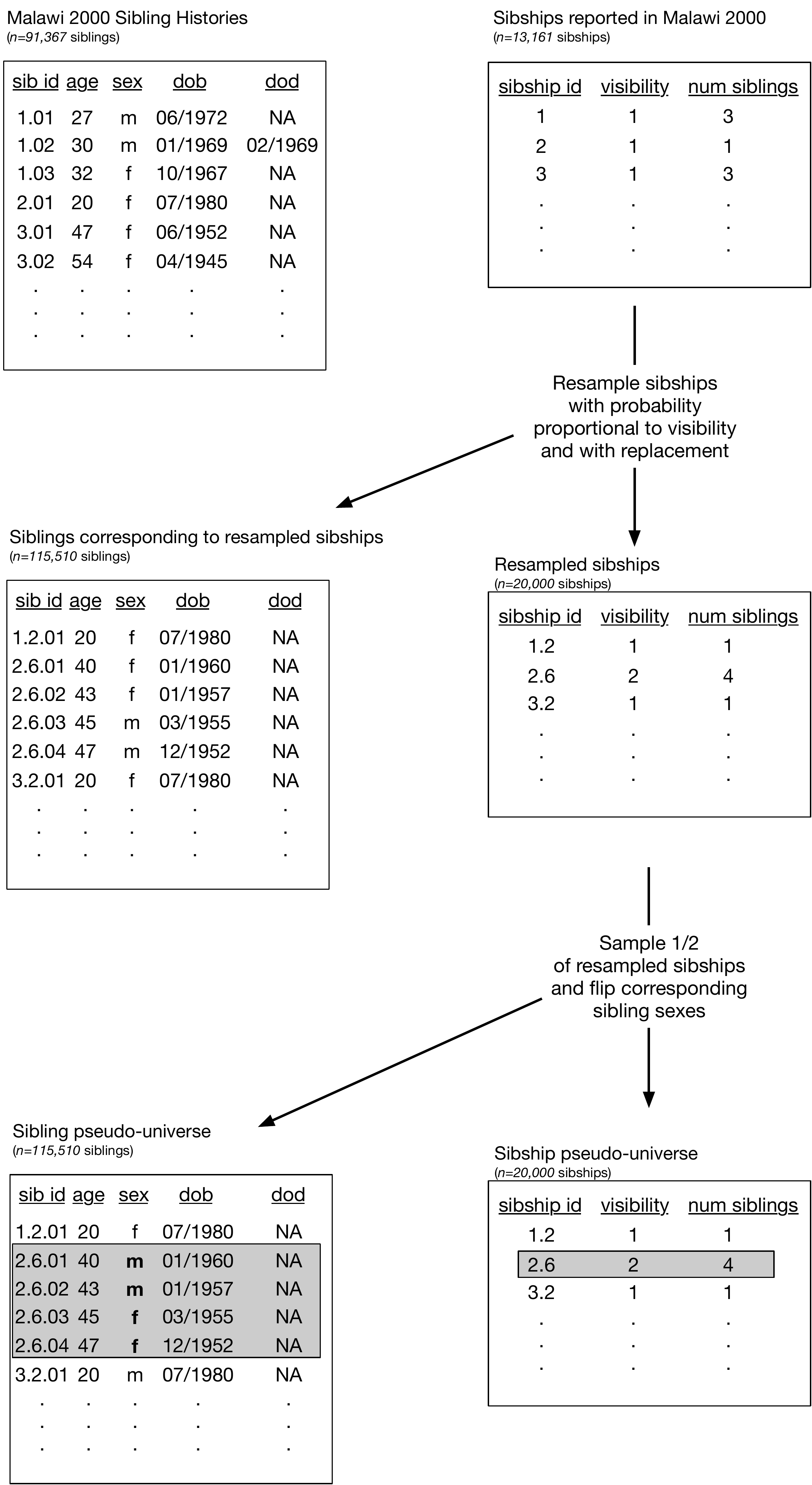}
\caption{Overview of the method used to construct a pseudo-universe that
forms the basis for the simulation study.}\label{fig:sim-overview}
}
\end{figure}

The result is a universe of siblings who are assigned to sibships that
is approximately representative of the 2000 Malawi sibship population.
Figure~\ref{fig:sim-universe-n} and Figure~\ref{fig:sim-universe-m} show
the age-sex distribution and death rates in the simulated universe. From
the universe of siblings and sibships, we create a sibship network
\texttt{igraph} object. We also create a census dataset, and use it as
the basis for calculating true death rates by age and sex.

\begin{figure}
\hypertarget{fig:sim-universe-n}{%
\centering
\includegraphics{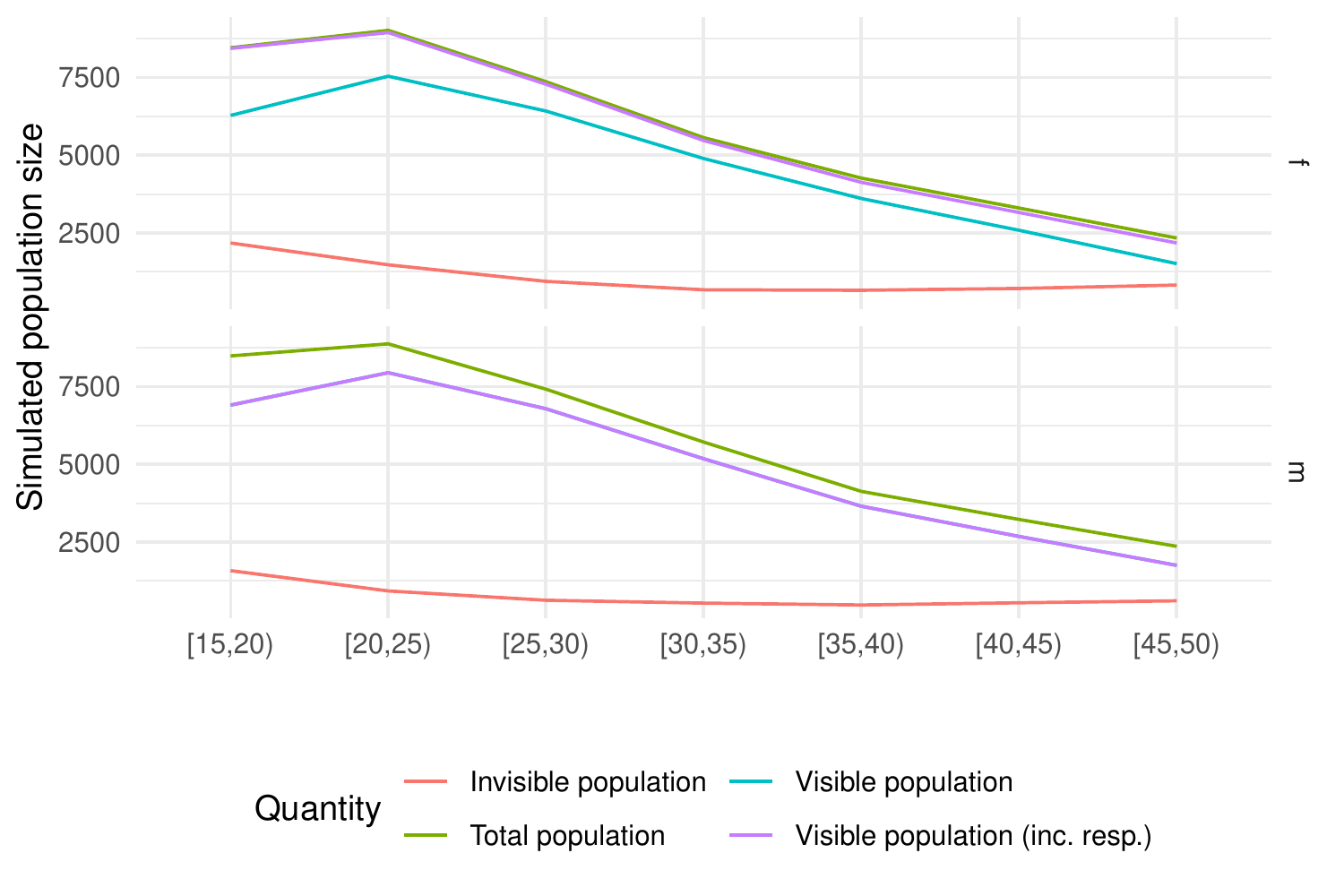}
\caption{Age-sex distribution of the simulated
population.}\label{fig:sim-universe-n}
}
\end{figure}

\begin{figure}
\hypertarget{fig:sim-universe-m}{%
\centering
\includegraphics{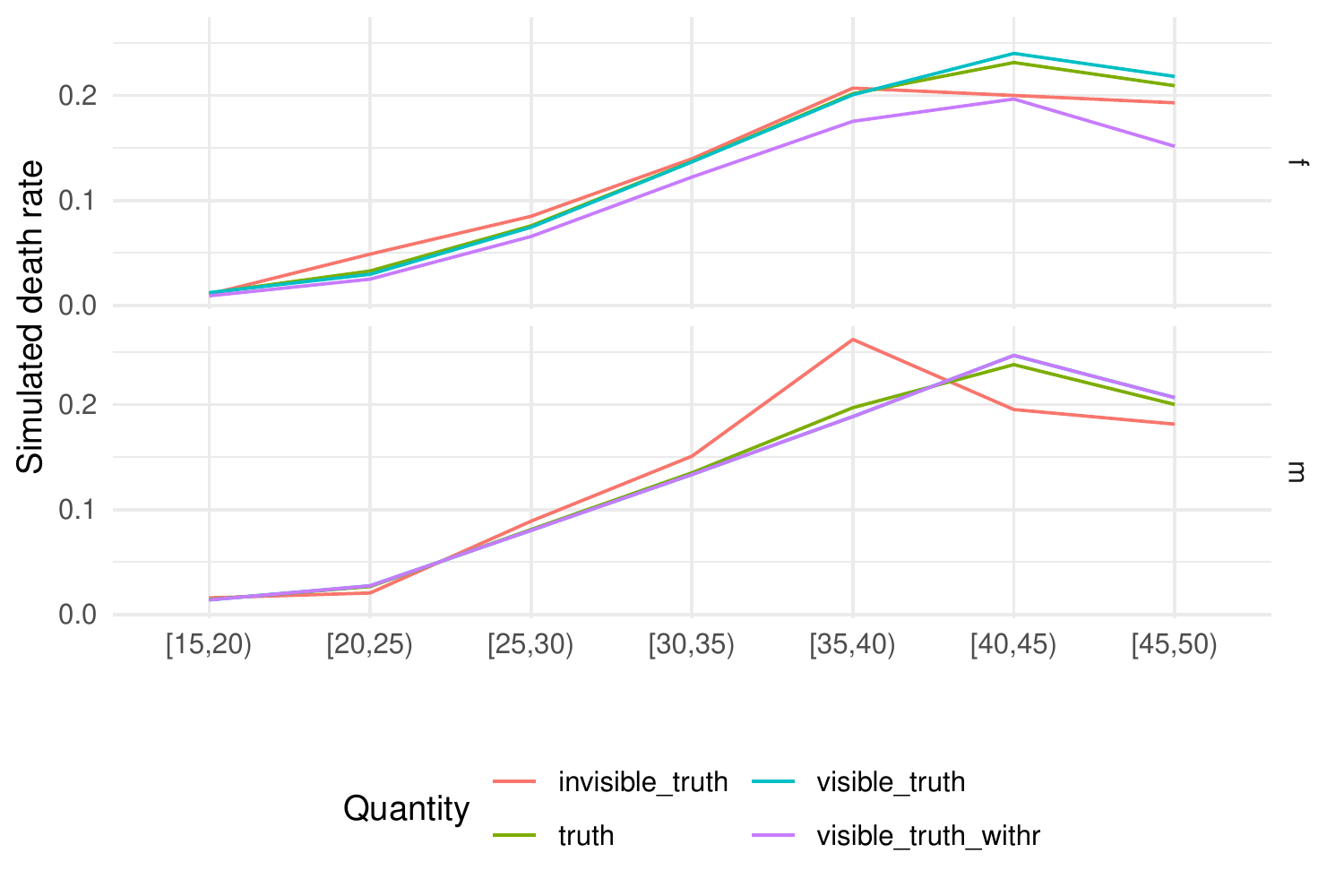}
\caption{Age- and sex-specific death rates among members of the
simulated population.}\label{fig:sim-universe-m}
}
\end{figure}

\hypertarget{simulating-reporting-error-and-sampling}{%
\subsection{Simulating reporting error and
sampling}\label{simulating-reporting-error-and-sampling}}

Having created a pseudo-universe of siblings linked together in
sibships, the goal is to use this population as the basis for simulated
sibling history surveys under different scenarios. By \emph{scenario},
we mean a set of aggregate reporting parameters together with a sampling
fraction. We now describe how we simulate sibling history surveys under
several different scenarios.

Using the sibling universe, we create several population-level reporting
networks, one for each combination of aggregate reporting parameters
\(\tau_D = \{0.8, 1\}\) and \(\tau_N = \{0.8, 1\}\). We assume there are
no false positives throughout (i.e., \(\eta_D = 1\) and \(\eta_N = 1\)).

For each population reporting network, we calculate all of the aggregate
and individual-level adjustment factors.

For each population reporting network, we simulate \(M=1000\) sample
surveys for each sampling fraction in
\(f = \{0.05, 0.1, 0.15, 0.3, 0.6\}\).

Finally we calculate the four estimates for each simulated survey:
aggregate visibility including/excluding the respondent and individual
visibility including/excluding the respondent. Within each scenario,
these \(M\) estimates are the simulated sampling distribution for the
given estimator.

To recap, we generate a pseudo-universe of people linked together in
sibships based on the actual sibships reported in the 2000 Malawi DHS
survey. From this pseudo-universe, we generate many simulated surveys of
different sample sizes, and different assumptions about reporting
errors. By calculating death rate estimates using the four estimators
for each of these simulated surveys and comparing the estimates to the
known underlying truth, we can assess how well the sampling
distributions of the four estimators recover true death rates under
different reporting conditions and sample sizes.

\hypertarget{results}{%
\subsection{Results}\label{results}}

\hypertarget{confirming-the-accuracy-of-the-sensitivity-framework}{%
\subsubsection{Confirming the accuracy of the sensitivity
framework}\label{confirming-the-accuracy-of-the-sensitivity-framework}}

First, we examine plots that compare the estimands for the aggregate and
individual visibility estimators to the true underlying age-specific
visible death rates. By examining the estimands, we remove the
complication of sampling and focus on the quantity that each estimator
would estimate if the entire frame population were interviewed. These
plots will confirm the correctness of our analytical frameworks and
provide some intuition about how estimators are affected by reporting
errors.

Figure~\ref{fig:check-agg} compares the true visible death rates (x
axis) to the adjusted and unadjusted aggregate death rate estimands (y
axis). Two important features emerge from Figure~\ref{fig:check-agg}:
first, the adjusted estimands all lie on the diagonal \(y=x\) line,
confirming the correctness of the sensitivity framework for the
aggregate estimator (Equation~\ref{eq:agg-mult-ubersens}). Second, by
comparing the unadjusted estimates across the four reporting scenarios,
it is clear that the unadjusted estimands for the scenario in which
\(\tau_D = 0.8\) and \(\tau_N = 0.8\) (top-left panel) are nearly as
accurate as the scenario in which \(\tau_D=1\) and \(\tau_N=1\)
(bottom-right panel). Intuitively, this suggests that in some cases
imperfect reporting may not be very problematic for the aggregate
sibling survival estimator, as long as the imperfect reporting is
similar for deaths and for exposure; in that case,
Figure~\ref{fig:check-agg} shows that the reporting errors can cancel
out (confirming the intuition from Equation~\ref{eq:agg-mult-ubersens}).

Figure~\ref{fig:check-ind} is analogous to Figure~\ref{fig:check-agg},
but for the individual visibility estimator.

\begin{figure}
\hypertarget{fig:check-agg}{%
\centering
\includegraphics{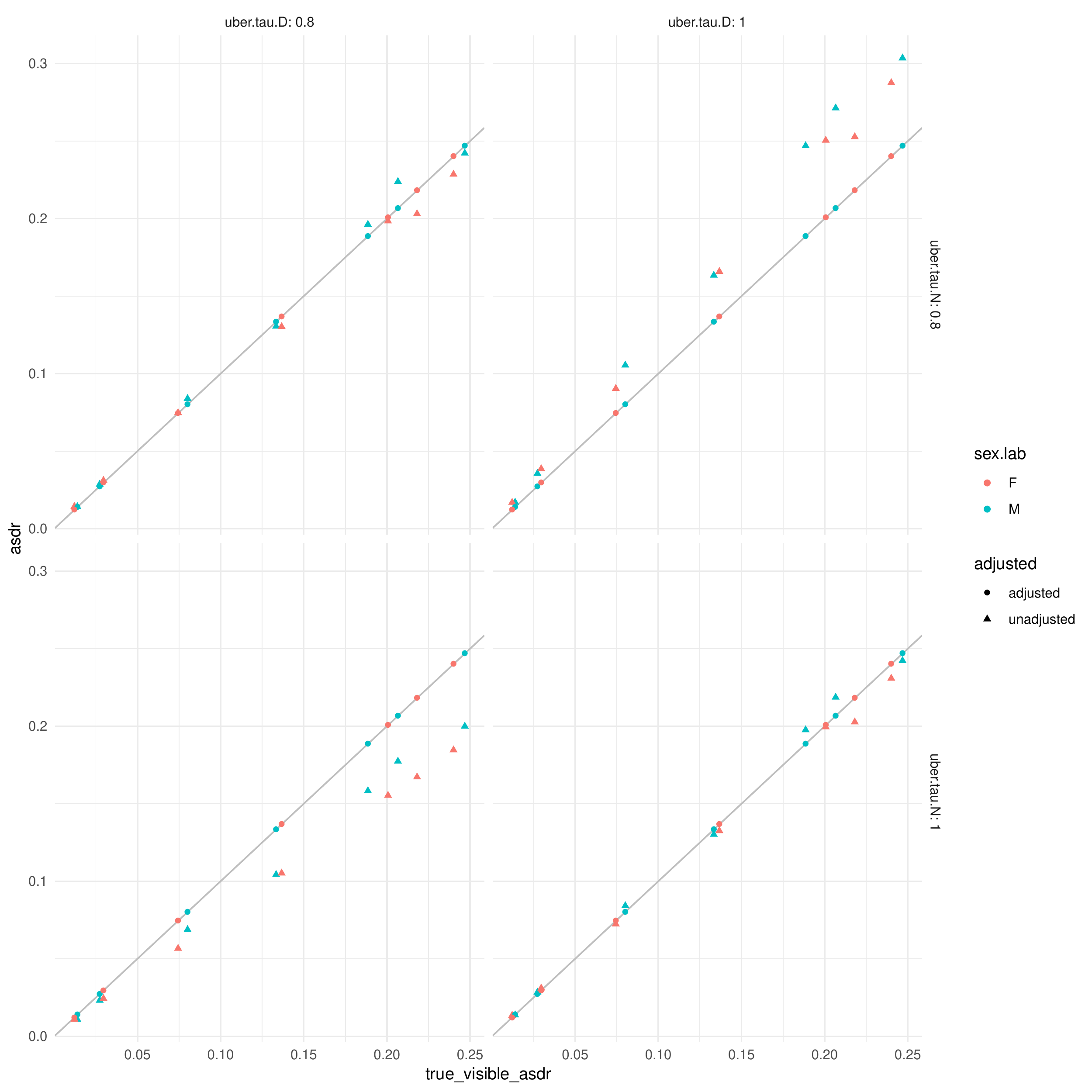}
\caption{True age-specific death rates (x axis) against adjusted and
unadjusted death rate aggregate visibility estimands (y axis). Male
death rates are in blue and females are in red. The four panels show
four different reporting scenarios. The adjusted estimands all agree
with the true underlying death rates, illustrating the correctness of
the aggregate sensitivity framework in
Equation~\ref{eq:agg-mult-ubersens}. These results are for the aggregate
estimator that does not include the respondent.}\label{fig:check-agg}
}
\end{figure}

\begin{figure}
\hypertarget{fig:check-ind}{%
\centering
\includegraphics{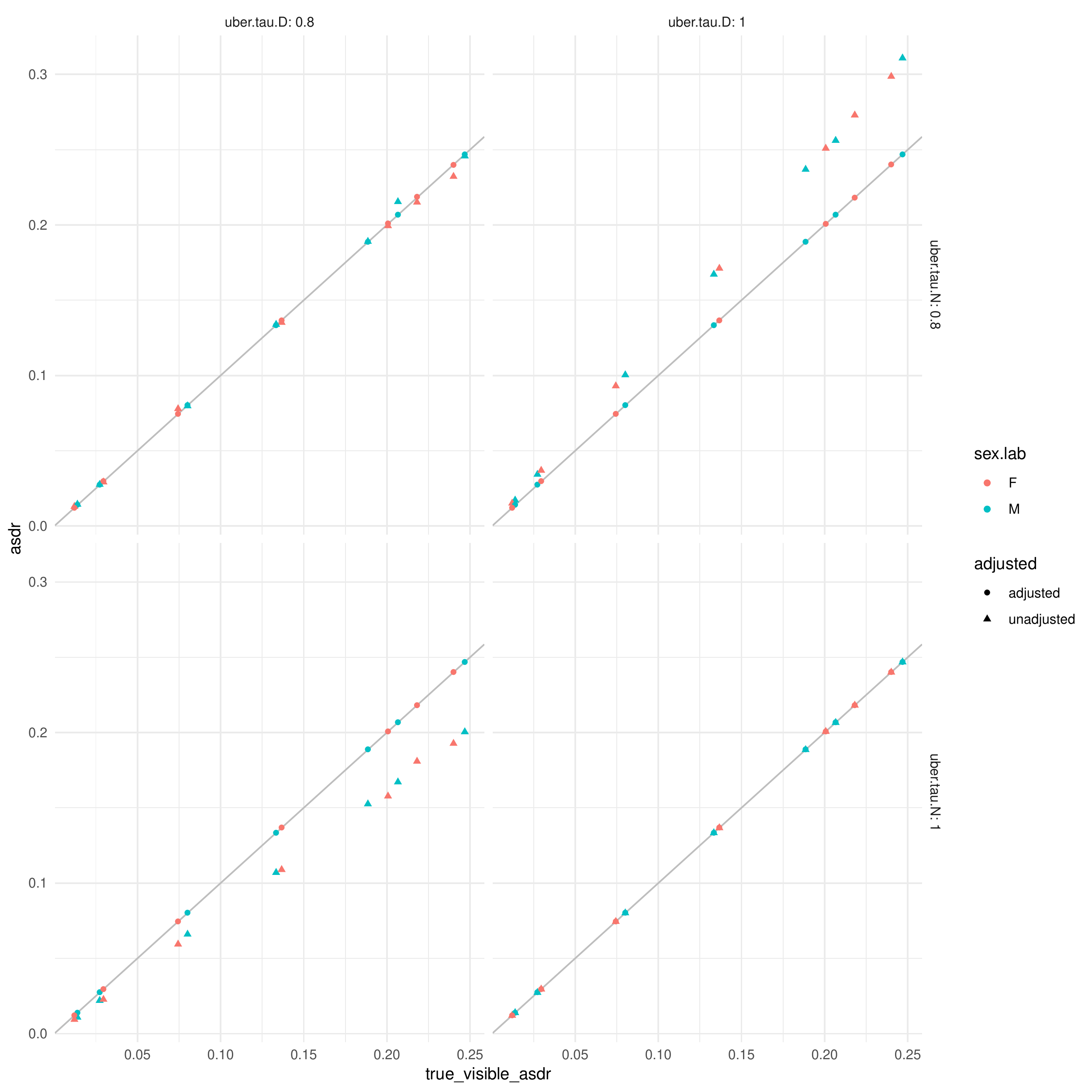}
\caption{True age-specific death rates (x axis) against adjusted and
unadjusted death rate individual visibility estimands (y axis). Male
death rates are in blue and females are in red. The four panels show
four different reporting scenarios. The adjusted estimands all agree
with the true underlying death rates, illustrating the correctness of
the individual sensitivity framework in
Equation~\ref{eq:ind-mult-ubersens} These results are for the individual
estimator that does not include the respondent.}\label{fig:check-ind}
}
\end{figure}

\hypertarget{comparing-the-estimators}{%
\subsubsection{Comparing the
estimators}\label{comparing-the-estimators}}

Next, we investigate the performance of the individual and aggregate
visibility estimators. In order to evaluate each estimator, we
calculated the relative mean square error across the \(K=1000\)
simulated surveys for each scenario. The relative mean square error of a
set of estimates \(\widehat{\vec{M}}^V\) is

\[
\text{rel-MSE}(\widehat{\vec{M}}^V) 
= \frac{\sum_{i=1}^K (\widehat{M}^V_i - M^V)^2}{K~(M^V)^2}.
\] The relative mean square error can be decomposed into the sum of the
squared relative bias: \[
\text{rel-bias}^2(\widehat{\vec{M}}^V) = \left( \frac{\sum_{i = 1}^K \widehat{M}^V_i - M^V}{K~M^V}\right)^2,
\] and the relative variance: \[
\text{rel-var}(\widehat{\vec{M}}^V) = \frac{\sum_{i = 1}^K (\widehat{M}^V_i - \bar{M})^2}{(M^V)^2},
\]

where \(\bar{M} = K^{-1} \sum_i \widehat{M}^V_i\) is the average of the
\(K\) estimates. Thus, we have

\[
\text{rel-MSE}(\widehat{\vec{M}}^V) = 
\text{rel-bias}^2(\widehat{\vec{M}}^V) +
\text{rel-var}(\widehat{\vec{M}}^V).
\]

Figure~\ref{fig:sim-mse-pr} compares the two approaches when the
sampling fraction (i.e., the sample size relative to the population
size) is 0.05 and reporting is perfect. The figure shows the relative
MSE as well as its decomposition into relative squared bias and relative
variance. Several observations can be made about
Figure~\ref{fig:sim-mse-pr}. First, the magnitude of the relative MSE is
comparable for the individual and aggregate visibility estimators,
though it is slightly higher using the individual visibility estimator
for the oldest age females. Second, the decomposition of the MSE into
squared bias and variance reveals that the individual visibility
estimator is essentially unbiased -- all of its MSE comes from variance.
The aggregate visibility estimator, on the other hand, is slightly
biased, even in this favorable simulation setup.

Figure~\ref{fig:sim-mse-ir} compares the two estimators when the
sampling fraction is 0.05 but reporting is imperfect; in this scenario,
the true positive rate for reports about deaths is 0.8, meaning that
some deaths are omitted from reports. Several observations can be made
about Figure~\ref{fig:sim-mse-ir}. First, compared to the perfect
reporting results shown in Figure~\ref{fig:sim-mse-pr}, the level of
relative MSE is higher in Figure~\ref{fig:sim-mse-ir}, presumably due to
the reporting errors. Second, again due to the reporting errors, we see
now that both the aggregate and individual visibility estimators are
biased. Generally, the MSE are again similar between the two approaches,
now with some cells producing slightly lower MSE for the individual
visibility estimator.

Importantly, both Figure~\ref{fig:sim-mse-pr} and
Figure~\ref{fig:sim-mse-ir} show the results of scenarios in which there
is, by design, no relationship between visibility and mortality. In
situations where such a relationship exists, we would expect the
individual visibility estimator to tend to perform better than the
aggregate visibility estimator, since the individual visibility
estimator avoids having to make the assumption that the visibility of
deaths is equal to the visibility of exposure.

Future research could extend this simulation analysis to better
understand the tradeoffs between these two approaches; our results here
(i) confirm that our analytical results are correct; and (ii) illustrate
that, in relatively favorable situations, the error from the aggregate
and individual visibility estimators is roughly comparable. A deeper
understanding of the difference in the two estimators is an important
topic for future research.

\begin{figure}
\hypertarget{fig:sim-mse-pr}{%
\centering
\includegraphics{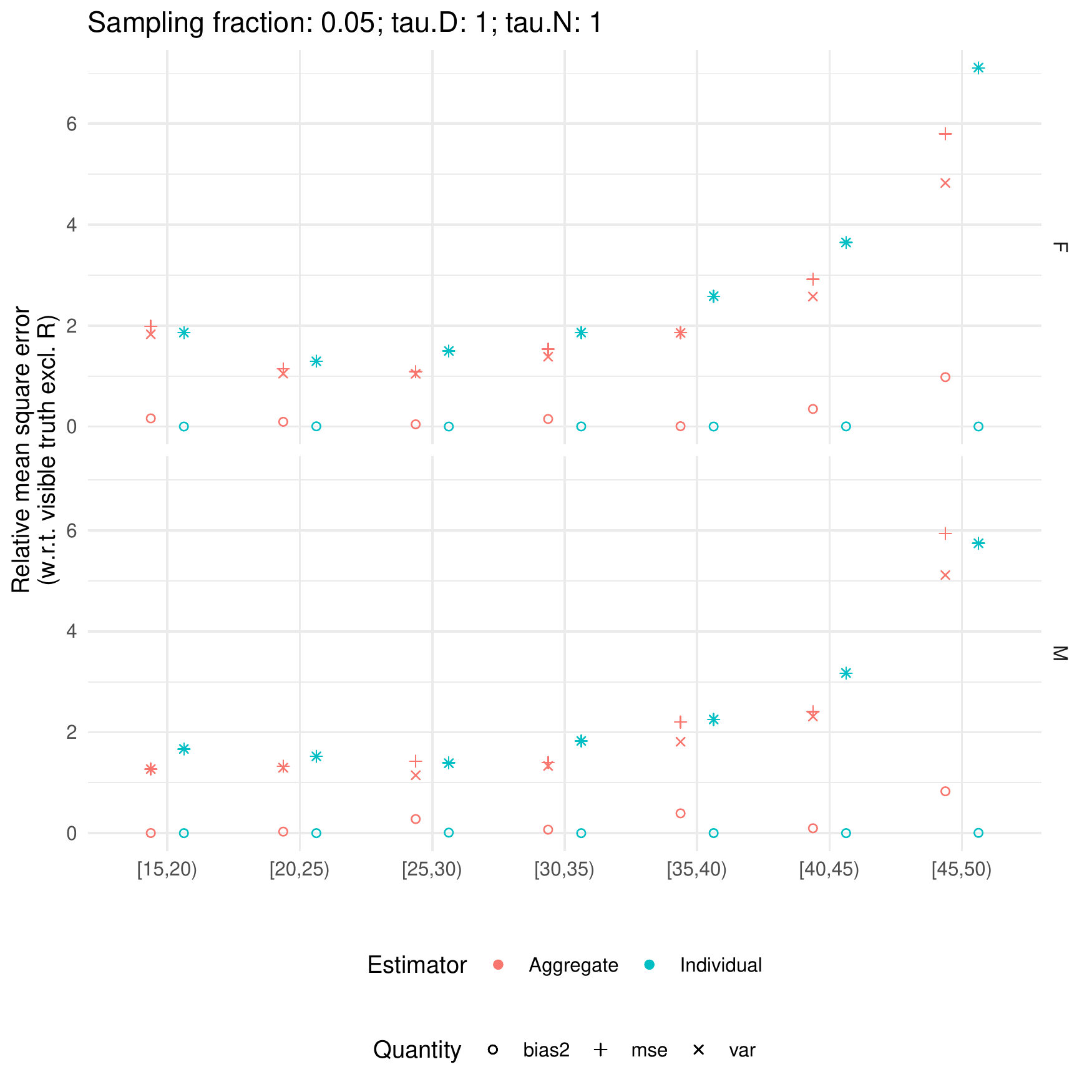}
\caption{Mean square error, squared bias, and variance for the
individual and aggregate visibility estimators when the sampling
fraction is 0.05.}\label{fig:sim-mse-pr}
}
\end{figure}

\begin{figure}
\hypertarget{fig:sim-mse-ir}{%
\centering
\includegraphics{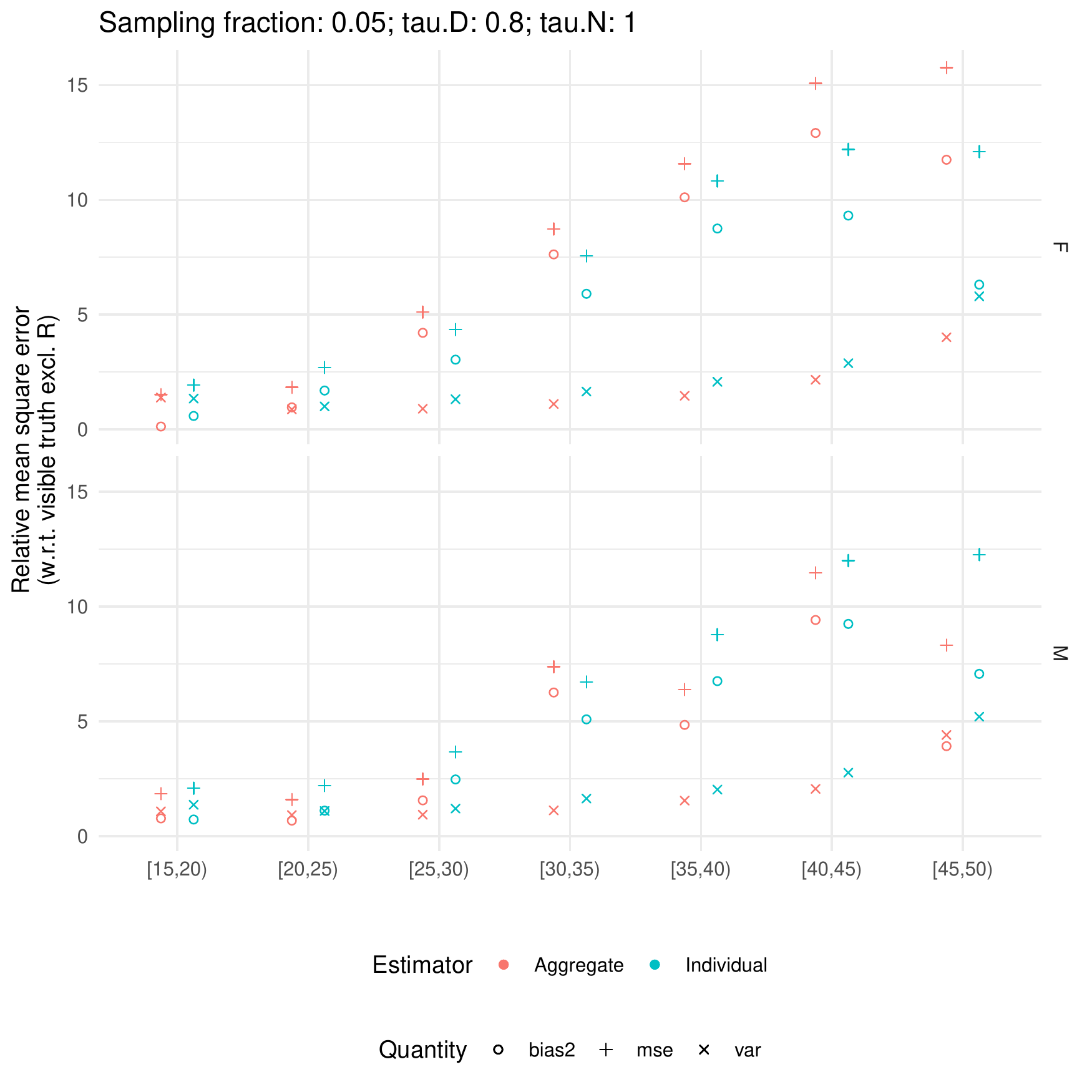}
\caption{Mean square error, squared bias, and variance for the
individual and aggregate visibility estimators when the sampling
fraction is 0.05.}\label{fig:sim-mse-ir}
}
\end{figure}

\end{document}